\newcommand{\bra}[2][]{\mathinner{\langle #2|}_{#1}}
\newcommand{\ket}[2][]{\vert#2\rangle_{\hspace{-0.1em}#1}}
\newcommand{\braket}[3][]{\mathinner{\langle #2|#3\rangle}_{\hspace{-0.1em}#1}}
\newcommand{\ketbra}[3][]{\mathinner{|#2\rangle\langle #3|}_{#1}}
\newcommand{\abs}[1]{|#1|}
\newcommand{\Abs}[1]{\left\vert #1\right\vert}
\def\clap#1{\hbox to 0pt{\hss#1\hss}}
\DeclareMathOperator{\tr}{Tr}
\newcommand{\qw}[1][-1]{\ar @{-} [0,#1]}
\newcommand{\qwx}[1][-1]{\ar @{-} [#1,0]}
\newcommand{\cw}[1][-1]{\ar @{=} [0,#1]}
\newcommand{\gate}[1]{*+<.6em>{#1} \POS ="i","i"+UR;"i"+UL **\dir{-};"i"+DL **\dir{-};"i"+DR **\dir{-};"i"+UR **\dir{-},"i" \qw}
\newcommand{\meter}{*=<1.8em,1.4em>{\xy ="j","j"-<.778em,.322em>;{"j"+<.778em,-.322em> \ellipse ur,_{}},"j"-<0em,.4em>;p+<.5em,.9em> **\dir{-},"j"+<2.2em,2.2em>*{},"j"-<2.2em,2.2em>*{} \endxy} \POS ="i","i"+UR;"i"+UL **\dir{-};"i"+DL **\dir{-};"i"+DR **\dir{-};"i"+UR **\dir{-},"i" \qw}
\newcommand{\control}{*!<0em,.025em>-=-<.2em>{\bullet}}
\newcommand{\ctrl}[1]{\control \qwx[#1] \qw}
\newcommand{\multigate}[2]{*+<1em,.9em>{\hphantom{#2}} \POS [0,0]="i",[0,0].[#1,0]="e",!C *{#2},"e"+UR;"e"+UL **\dir{-};"e"+DL **\dir{-};"e"+DR **\dir{-};"e"+UR **\dir{-},"i" \qw}
\newcommand{\ghost}[1]{*+<1em,.9em>{\hphantom{#1}} \qw}
\newcommand{\lstick}[1]{*!R!<.5em,0em>=<0em>{#1}}
\newcommand{\Qcircuit}{\xymatrix @*=<0em>}
\theoremstyle{plain}
\newtheorem{theorem}{Theorem}[section]
\newtheorem{lemma}[theorem]{Lemma}
\newtheorem{proposition}[theorem]{Proposition}
\newtheorem*{proposition*}{Proposition}
\newtheorem{claim}[theorem]{Claim}
\newtheorem*{utheorem}{Theorem}
\newtheorem{corollary}{Corollary}
\newtheorem{definition}{Definition}[section]
\newtheorem{question}{Question}
\theoremstyle{definition}
\newtheorem{remark}[theorem]{Remark}
\theoremstyle{remark}
\newcommand{\namedref}[2]{\hyperref[#2]{#1~\ref*{#2}}}
\newcommand{\sectionref}[1]{\namedref{Section}{#1}}
\newcommand{\remarkref}[1]{\namedref{Remark}{#1}}
\newcommand{\tableref}[1]{\namedref{Table}{#1}}
\newcommand{\lemmaref}[1]{\namedref{Lemma}{#1}}
\newcommand{\claimref}[1]{\namedref{Claim}{#1}}
\newcommand{\propositionref}[1]{\namedref{Proposition}{#1}}
\newcommand{\corollaryref}[1]{\namedref{Corollary}{#1}}
\newcommand{\algorithmref}[1]{\namedref{Algorithm}{#1}}
\newcommand{\questionref}[1]{\namedref{Question}{#1}}
\newcommand{\parhead}[1]{\medskip \noindent {\bfseries\boldmath\ignorespaces #1.}\hskip 0.9em plus 0.3em minus 0.3em}
\newcommand{\veps}{\varepsilon}
\newcommand{\RAM}{\mathsf{RAM}}
\newcommand{\TM}{\mathsf{TM}}
\newcommand{\MTM}{\mathsf{m}\TM}
\newlang{\TSAT}{3SAT}
\newlang{\TCSP}{2CSP}
\newlang{\CSP}{CSP}
\newlang{\TOFSAT}{(2,4)SAT}
\newcommand{\HILBERT}{\mathcal{H}}
\newcommand{\BELL}{\mathsf{Bell}}
\newcommand{\LOCC}{\mathsf{LOCC}}
\newcommand{\SEP}{\mathsf{SEP}}
\newcommand{\CTRS}{\mathcal{C}}
\newcommand{\VARNORM}[1]{\left\vert #1 \right\vert_{1}}
\newcommand{\TRACENORM}[1]{\left\vert #1 \right\vert_{\tr}}
\newcommand{\DIST}{\mathsf{dstr}}
\newcommand{\NUMP}{\kappa}
\newcommand{\PRL}{\ell}
\newcommand{\REJ}{\mathsf{REJ}}
\newcommand{\SWAP}{\mathsf{SWAP}}
\newcommand{\SWAPTEST}{\textsc{Swap}}
\newcommand{\CONSTEST}{\textsc{Cons}}
\newcommand{\EDGECONSTEST}{\textsc{EdgeCons}}
\newcommand{\COLCONSTEST}{\textsc{ColCons}}
\newcommand{\UNITEST}{\textsc{Unif}}
\newcommand{\CONDUNITEST}{\textsc{CondUnif}}
\newcommand{\BPSI}{\Psi}
\newcommand{\BPHI}{\Phi}
\newcommand{\BCHI}{X}
\DeclareMathOperator{\Var}{Var}
\renewcommand{\ket}[2][]{\vert#2\rangle_{\hspace{-0.1em}#1}}
\renewcommand{\bra}[2][]{\langle #2\vert_{#1}}
\title{Improved Soundness for $\mathrm{QMA}$ with Multiple Provers}
\date{\today}
\author{
Alessandro Chiesa \\ \href{mailto:alexch@csail.mit.edu}{alexch@csail.mit.edu} \\MIT
\and
Michael A.\ Forbes 
\thanks{Supported by NSF Grant CCF-0939370.}
\\  \href{mailto:miforbes@mit.edu}{miforbes@mit.edu} \\ MIT
}
\begin{document}

\maketitle

\begin{abstract}
We present three contributions to the understanding of $\QMA$ with multiple provers:
\begin{itemize}
  \item We give a tight soundness analysis of the protocol of [Blier and Tapp, ICQNM '09], yielding a soundness gap $\Omega(N^{-2})$.  Our improvement is achieved \emph{without} the use of an instance with a constant soundness gap (i.e., without using a ``PCP'').
  \item We give a tight soundness analysis of the protocol of [Chen and Drucker, ArXiV '10], thereby improving their result from a ``monolithic'' protocol where $\Theta(\sqrt{N})$ provers are \emph{needed} in order to have any soundness gap, to a protocol with a \emph{smooth trade-off} between the number of provers $\kappa$ and a soundness gap $\Omega(\kappa^{2}N^{-1})$, as long as $\kappa \in \Omega(\log N)$. (And, when $\kappa \in \Theta(\sqrt{N})$, we recover the original parameters of Chen and Drucker.) 
  \item We make progress towards an open question of [Aaronson et al., ToC '09] about what kinds of $\NP$-complete problems are amenable to ``sublinear'' multiple-prover $\QMA$ protocols, by observing that a \emph{large class} of such examples can easily be derived from results already in the PCP literature --- namely, at least the languages recognized by a non-deterministic $\RAM$s in quasilinear time.
\end{itemize}
\end{abstract}

\newpage

\section{Introduction}
\label{sec:introduction}

The class $\QMA$ is the natural quantum analogue of $\NP$ (or, rather, $\MA$): with the help of a quantum proof (given by the all-powerful ``Merlin''), a quantum polynomial-time verifier (``Arthur'') attempts to decide whether an input string $x$ is in a given language $L$ or not; this class was first studied by Knill \cite{Kni96}, Kitaev \cite{Kit99}, and Watrous \cite{Wat00}. For more details, see the survey of Aharonov and Naveh \cite{AN02}.

Kobayashi et al. \cite{KMY03} first introduced and studied the class $\QMA(\NUMP)$, where Arthur receives $\NUMP \in [2,\poly(N)]$ quantum proofs that are \emph{promised to be unentangled}. While multiple proofs in the classical case do not increase the power of the class (i.e., ``$\NP(\NUMP)=\NP$'' and ``$\MA(\NUMP)=\MA$''), there is some evidence that multiple unentangled proofs in the quantum case are in fact more powerful than one (as currently conjectured): for example, Liu et al. \cite{LCV07} have proposed a problem, pure state $N$-representability, that is known to lie in $\QMA(2)$ but is not known to lie in $\QMA$; also, several works \cite{BT09,Bei10,ABDFS09,CD10,GallNN12,Pereszlenyi12} have proposed multi-prover $\QMA$ protocols for certain $\NP$ languages whose (soundness and proof length) parameters are not known to be achievable with only one prover. (See \tableref{table:summary} for a summary of such results.)

Harrow and Montanaro \cite{HM10} recently answered several open problems regarding the class $\QMA(\NUMP)$, by proving that amplification within $\QMA(\NUMP)$ is possible and that $\QMA(\poly(N))=\QMA(2)$; the ``collapse'' is achieved by giving an analysis of a \emph{product test}, which allows a verifier to use the unentanglement promise of only two registers to ensure that states within a single register are close to a separable state.

Brand\~{a}o et al. \cite[Corollary 4]{BCY10} prove (among other things) that two-prover $\QMA$ protocols where the verifier is restricted to LOCC measurements (i.e., adaptive unentangled measurements) only can be simulated by a \emph{single} prover $\QMA$ protocol, incurring only in a quadratic increase in total proof length. In particular, for example, this implies that a two-prover $\QMA$ protocol for $\TSAT$ with an LOCC verifier and total proof length of $o(\sqrt{N})$ is unlikely to exist (for, otherwise, $\TSAT$ could be solved in deterministic subexponential time). In a related theme, Brand\~{a}o and Harrow~\cite{BrandaoH12} show that the $\widetilde{O}(\sqrt{N})$-prover LOCC-protocol of Chen and Drucker~\cite{CD10} is optimal in the $\poly(N)$-prover regime, under the same hardness assumption for $\TSAT$ as above.

Particularly interesting is the gap between the ``lower bound'' of Brand\~{a}o et al. \cite{BCY10} and the ``upper bound'' results that are known for multi-prover $\QMA$ protocols for certain $\NP$ languages. Specifically, Aaronson et al. \cite{ABDFS09} give a $\widetilde{\Theta}(\sqrt{N})$-prover $\QMA$ protocol for $\TSAT$, with perfect completeness and constant soundness gap, where each prover sends $\Theta(\log N)$ qubits; two improvements, in different directions, on this protocol are known:
\begin{itemize}[nolistsep]
  \item \emph{Reducing the number of provers.} Harrow and Montanaro \cite{HM10}, through their product test, reduce the number of provers of \cite{ABDFS09} to only two, thereby obtaining a two-prover $\QMA$ protocol for $\TSAT$, with perfect completeness and constant soundness gap, where each prover sends $\widetilde{\Theta}(\sqrt{N})$ qubits.
  \item \emph{Avoiding use of the swap test (and any entangling measurement).} Chen and Drucker \cite{CD10} simplify the verifier of \cite{ABDFS09} by avoiding the swap test, thereby making the verifier perform only LOCC (in fact, Bell) measurements; along the way, they also manage to greatly simplify the soundness analysis too. (Also, but less relevant: they (i) use a coloring problem as a starting point instead of a ``structured'' $\SAT$ instance, and (ii) they lose perfect completeness.)
\end{itemize}
However, no result that improves on \emph{both} directions is known; such a result, in light of the lower bound of Brand\~{a}o et al. \cite{BCY10}, would be a tight upper bound (under plausible hardness assumptions).
Thus, the following is an interesting open question:

\begin{question}
\label{question:main}
Does there exist a two-prover $\QMA$ protocol for $\TSAT$, with a constant soundness gap and $\widetilde{O}(\sqrt{N})$ total number of qubits, where the verifier is only allowed to perform LOCC measurements?
\end{question}

\begin{table*}[t]
\centering
\begin{tabular}{@{}llllllll@{}}\toprule
\textbf{paper} & \textbf{language} & \textbf{gap?} & \textbf{provers} & $\frac{\textbf{qubits}}{\textbf{provers}}$ & $c$ & $c-s$ & \textbf{verifier test} \\
\midrule
\cite{BT09}    & $\TCSP(N,M,3)$    & no  & $2$                & $\Theta(\log N)$               & $1$                                            &
$\Omega(N^{-6})$             & $\SWAP\,,\,\BELL$  \\

\cite{Bei10}   & $\TOFSAT(N)$  & yes & $2$                & $\Theta(\log N)$               & $\frac{3}{4} + \frac{\sqrt{2(N-1)}}{6N^{1.5}}$ &
$\Omega(N^{-3-\veps})$     & $\SWAP\,,\,\BELL$  \\

\cite{ABDFS09} & $\TOFSAT(N)$  & yes & $\Theta(\sqrt{N})$ & $\Theta(\log N)$               & $1$                                            &
$\Omega(1)$                   & $\SWAP\,,\,\BELL$  \\

\cite{CD10}    & $\TCSP(N,M,O(1))$ & yes & $\Theta(\sqrt{N})$ & $\Theta(\log N)$               & $1 - e^{-\Omega(\sqrt{N})}$                    &
$\Omega(1)$                   & $\BELL$            \\

\cite{HM10}    & $\TOFSAT(N)$  & yes & $2$                & $\widetilde{\Theta}(\sqrt{N})$ & $1$                                            &
$\Omega(1)$                   & $\SWAP\,,\,\BELL$  \\

this work      & $\TCSP(N,M,O(1))$ & no  & $2$                & $\Theta(\log N)$               & $1$                                            &
$\Omega(N^{-2})$              & $\SWAP\,,\,\BELL$  \\

this work      & $\TCSP(N,M,O(1))$ & yes & $\NUMP$            & $\Theta(\log N)$               & $1 - e^{-\Omega(\NUMP)}$                       &
$\Omega(\NUMP^{2}N^{-1})$ & $\BELL$            \\
\cite{GallNN12}      & $\TCSP(N,M,O(1))$ & yes & $\NUMP$            & $\Theta(\log N)$               & $1$                       &
$\Omega(N^{-1})$ & $\SWAP\,,\,\BELL$            \\

\bottomrule
\end{tabular}
\label{table:summary}
\caption{A summary of the known multi-prover $\QMA$ protocols for languages in $\NP$. The language $\TCSP(N,M,K)$ consists of satisfiable $\TCSP$ instances on $N$ vertices, with $M$ (edge) constraints and an alphabet size of $K$; the language $\TOFSAT(N)$ consists of satisfiable $2$-out-of-$4$ balanced $\SAT$ instances on $N$ variables. The ``gap?'' field indicates whether the language is assumed to have a constant gap in soundness, so that a PCP is required to transfer the results to $\NP$. See \sectionref{sec:preliminaries} for formal definitions of these languages.}
\end{table*}

\subsection{Our Contributions}

In this work, we present contributions that further our understanding (though do not resolve) \questionref{question:main}. Specifically,
\begin{itemize}
  \item We address an open question of Aaronson et al.~\cite{ABDFS09} about what kinds of $\NP$-complete problems are amenable to ``sublinear'' multiple-prover $\QMA$ protocols, by observing that a \emph{large class} of such examples can easily be derived from results in the PCP literature.
  \item We give a tight soundness analysis of the protocol of Chen and Drucker \cite{CD10} for $\TSAT$, thereby improving their result from a ``monolithic'' protocol where $\widetilde{\Theta}(\sqrt{N})$ provers are \emph{needed} in order to have any soundness gap, to a protocol with a \emph{smooth trade-off} between the number of provers $\NUMP$ and a soundness gap $\widetilde{\Omega}(\NUMP^{2}N^{-1})$, as long as $\NUMP \in \Omega(\log N)$. (And, when $\NUMP \in \widetilde{\Theta}(\sqrt{N})$, we recover the original parameters of \cite{CD10}.) Further, we explain why even our tight analysis cannot give any soundness gap for the ``$\NUMP \in O(1)$ regime'', implying that new protocols are needed for any ``sublinear'' constant-prover LOCC $\QMA$ protocol with an inverse-polynomial soundness gap.
  \item We give a tight soundness analysis of the protocol of Blier and Tapp \cite{BT09} for $\TSAT$, yielding a soundness gap $\Omega(N^{-2})$. Maybe surprisingly, our improvement is achieved \emph{without} the use of an instance with a constant soundness gap (i.e., without using a ``PCP''); this is unlike the soundness gap of $\widetilde{\Omega}(N^{-3-\veps})$ given by Beigi \cite{Bei10}, which was achieved using a (balanced) $2$-out-of-$4$ instance with constant soundness gap. Independently from us, Le Gall et al.\ \cite{GallNN12} have been able to use PCPs in the protocol of Blier and Tapp \cite{BT09} to obtain a soundness game of $\widetilde{\Omega}(N^{-1})$.
\end{itemize}
We now discuss each of the above contributions; the technical details are left to subsequent sections.

\subsubsection{Quasilinear-time has sublinear unentangled quantum proofs}

The main theorem of Aaronson et al. \cite{ABDFS09} is:

\begin{utheorem}[{\cite[Theorem 1]{ABDFS09}}]
\label{thm:ABDFS09}
Let $\varphi$ be a $\TSAT$ instance with $N$ variables and $M$ clauses (and $M \geq N$). Then one can prove satisfiability of $\varphi$, with perfect completeness and constant soundness, using $\widetilde{\Theta}(\sqrt{M})$ unentangled quantum proofs, each with $\Theta(\log N)$ qubits.
\end{utheorem}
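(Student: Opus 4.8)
The plan is to prove soundness via a PCP reduction followed by a three-part test, with the $\widetilde{\Theta}(\sqrt{M})$ provers emerging from a birthday-paradox amplification. First I would apply the PCP theorem to reduce $\varphi$ to a balanced $\TOFSAT$ instance on $\widetilde{\Theta}(N)$ variables and $\widetilde{\Theta}(M)$ clauses with a constant soundness gap $\veps$: either all constraints are satisfiable, or every assignment violates at least an $\veps$-fraction of the $2$-out-of-$4$ clauses. This is the only place the hardness of the original $\TSAT$ enters, and it is what lets us aim for \emph{constant} rather than inverse-polynomial soundness. I would then fix the encoding: a satisfying assignment $x \in \{0,1\}^N$ is represented by the flat ``proper state'' $\ket{\psi_x} = \frac{1}{\sqrt{N}}\sum_{i=1}^N (-1)^{x_i}\ket{i}$ on $\Theta(\log N)$ qubits, and an honest proof consists of $K = \widetilde{\Theta}(\sqrt{M})$ unentangled copies of $\ket{\psi_x}$. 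Perfect completeness is then immediate: for a satisfying $x$ each of the tests below accepts with certainty.

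The verifier runs, each with constant probability, one of three tests. (i) A \emph{uniformity} test certifying that each register is close to a state with near-uniform magnitudes over the computational basis, so that it genuinely encodes a $\pm 1$ pattern rather than concentrating its mass. (ii) A \emph{consistency} test: a collection of $\SWAP$ tests between pairs of registers, certifying that the copies encode a \emph{common} assignment. (iii) A \emph{satisfiability} test: project registers onto the support of a random clause $C$ and, since a balanced clause satisfies $\sum_{i\in C}(-1)^{x_i}=0$, reject when the clause-restricted state has a nonzero component along the uniform clause vector $\frac12\sum_{i\in C}\ket{i}$ --- an event that is impossible for a satisfied clause but occurs with constant conditional probability for a violated one.

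For soundness I would argue in hybrids. Passing the consistency test, together with the unentanglement promise (invoking product-test / $\SWAP$-test reasoning as in \cite{HM10}), lets me treat the $K$ registers as all close to a single pure state $\ket{\phi}$; passing the uniformity test forces $\ket{\phi}$ to be close to a genuine proper state $\ket{\psi_x}$, extracting a classical assignment $x$. In a no-instance this $x$ violates an $\veps$-fraction of clauses, and the satisfiability test must detect one. The crucial quantitative point is that each individual clause projection, and each pairwise swap comparison, carries only a gap of $\Theta(1/N)$, because a random register lands in the $O(1)$-size support of a fixed clause with probability $\Theta(1/N)$. The $\widetilde{\Theta}(\sqrt{M})$ copies are exactly what converts this into a constant gap: among the $\binom{K}{2} = \widetilde{\Theta}(M)$ pairs of registers, a birthday-paradox argument guarantees that with constant probability two measured indices fall into a common violated clause, so that the relative sign can be read out and the violation flagged.

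The main obstacle I anticipate is not any single test but the \emph{simultaneous} propagation of three approximate guarantees through the birthday-paradox amplification: the states certified by the uniformity and consistency tests are only \emph{close} to flat and to a common assignment, and one must show that these per-test errors do not accumulate destructively across the $\widetilde{\Theta}(M)$ pairwise comparisons, nor interact badly with the unentanglement-based reduction to a common $\ket{\phi}$. Quantifying how a register that is $\delta$-far from proper degrades the collision statistics and the sign read-out --- and showing the degradation stays $o(1)$ --- is the delicate technical heart of the argument, and is where the polylogarithmic overhead hidden in $\widetilde{\Theta}(\sqrt{M})$ is spent.
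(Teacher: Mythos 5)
Your skeleton matches the actual argument behind \cite[Theorem 1]{ABDFS09} (which this paper imports by citation, adding only the generalized PCP reduction of \claimref{claim:quasilinear-PCPs} and \corollaryref{for:constant-gap}): reduce via PCP plus the Papadimitriou--Yannakakis and Khanna et al.\ reductions to a constant-gap \emph{balanced} $\TOFSAT$ instance, encode an assignment as the proper state $\frac{1}{\sqrt{N}}\sum_{i}(-1)^{x_{i}}\ket{i}$, send $\widetilde{\Theta}(\sqrt{M})$ unentangled copies, and run uniformity, symmetry (swap), and satisfiability tests. But your quantitative engine for soundness is broken, and it misattributes where the birthday paradox lives. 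Your satisfiability test --- project a register onto the support of a \emph{single random clause} and reject on the uniform clause vector --- detects a violation with probability only $\Theta(\veps/N)$ per register (the register lands in a fixed $O(1)$-size clause support with probability $\Theta(1/N)$), so over $\widetilde{\Theta}(\sqrt{M})$ registers the total detection probability is $\widetilde{O}(1/\sqrt{N})$, not constant. The proposed rescue via birthday amplification over $\binom{K}{2}=\widetilde{\Theta}(M)$ register pairs fails for two reasons. First, the violated information is a relative \emph{phase}: measuring registers in the computational basis yields indices but destroys all sign information, so ``two measured indices falling into a common violated clause'' lets you read out nothing --- the classical-collision birthday trick works in \cite{CD10} precisely because that protocol switches to a coloring encoding in which the checked data (colors) are computational-basis observables. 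Second, any interference-based pairwise sign comparison (swap-type) \emph{consumes} the two registers, so only $O(K)$ disjoint pairs can be tested per run, not $\binom{K}{2}$, each with $O(1/N)$ informativeness, again totaling $o(1)$.

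The actual proof closes this gap using the balance property you imported but never exploited: since every variable appears in $O(1)$ clauses, the clause set partitions into $O(1)$ matchings, each covering essentially all variables, and Arthur measures a \emph{single} copy in the projective basis associated with a random matching. This reads the sign pattern of \emph{every} clause in the matching simultaneously, so against a proper state encoding an assignment violating an $\veps$-fraction of clauses, one copy already triggers rejection with probability $\Omega(\veps)$ --- no amplification over pairs is needed for satisfiability. The $\widetilde{\Theta}(\sqrt{M})$ copies and the birthday paradox are instead spent on the \emph{uniformity} test, which you left as an unspecified primitive: it is exactly the collision statistics of the computational-basis outcomes across all $K$ registers (informative only once $K\in\widetilde{\Omega}(\sqrt{N})$) that certify the amplitude profile is spread out, and propagating the resulting approximate uniformity and approximate symmetry through the matching measurement is the delicate part of \cite{ABDFS09} --- your closing paragraph correctly anticipates this difficulty but attaches it to the wrong test. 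Finally, note that perfect completeness requires the tests to reject with probability exactly zero on honest proofs, which constrains how any threshold-style uniformity check may be formulated; your proposal does not address this.
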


What is surprising about the result is that, for $M \in O(N)$, the total number of qubits sent by all the provers to the verifier is \emph{sublinear}; instead, the best known proof length in the case of only one prover (i.e., in the case of $\QMA$) is linear (and we believe one cannot do better, by the Exponential-Time Hypothesis \cite{ImpagliazzoR01}, which says that $\TSAT$ cannot be solved in subexponential time in the worst case). Aaronson et al. \cite{ABDFS09} thus raised the following question:
\begin{question}
\label{question:ABDFS09}
For which $\NP$-complete problems does an analogue of \cite[Theorem 1]{ABDFS09} hold?
\end{question}

We note that the existing PCP literature already yields a large class of languages for which an analogue of \cite[Theorem 1]{ABDFS09} does hold. Concretely, the starting point of Aaronson et al. \cite{ABDFS09} was the existence of a quasilinear reduction from $\TSAT$ to $\TCSP$ with \emph{constant} soundness gap; we note that the works of Ben-Sasson and Sudan \cite{BS05} and Dinur \cite{Din07} actually imply that a similar reduction holds for any language that can be recognized in non-deterministic quasilinear time by a random-access machine.

\begin{proposition}
\label{prop:extend-aaronson}
\label{proposition:sublinear}
Let $L$ be any language that can be recognized in non-deterministic quasilinear time by a random-access machine. Let $x$ be an instance in $L$ of size $N$. Then one can prove that $x$ is in $L$, with perfect completeness and constant soundness, using $\widetilde{\Theta}(\sqrt{N})$ unentangled quantum proofs, each with $\Theta(\log N)$ qubits.
\end{proposition}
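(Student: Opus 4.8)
The plan is to derive \propositionref{prop:extend-aaronson} by composing a quasilinear-size, constant-gap PCP reduction with the black-box protocol of \theoremref{thm:ABDFS09}. The only real work lies in producing the right reduction; once we have a constant-gap constraint-satisfaction instance of quasilinear size, the quantum protocol is applied essentially verbatim, and all parameters compose by a syntactic substitution.

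\emph{Step 1 (efficient reduction to a gap-CSP).} Starting from a language $L$ decided by a non-deterministic $\RAM$ in time $\widetilde{O}(N)$, I would first invoke the quasilinear PCP machinery of Ben-Sasson and Sudan \cite{BS05}, together with the gap amplification of Dinur \cite{Din07}, to reduce an instance $x$ of size $N$ (in time $\widetilde{O}(N)$) to a $\TCSP$ instance $\psi$ with the following three properties: (i) $\psi$ has $n = \widetilde{O}(N)$ variables and constraints over an alphabet of constant size; (ii) if $x \in L$ then $\psi$ is satisfiable; and (iii) if $x \notin L$ then at most a $(1-\veps)$ fraction of the constraints of $\psi$ can be simultaneously satisfied, for some absolute constant $\veps > 0$. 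The crucial feature is that both ingredients incur only polylogarithmic overhead in the instance size: the simulation of a non-deterministic $\RAM$ by a circuit (hence by a constraint-satisfaction instance) is quasilinear, and Dinur's gap amplification blows the size up by only a polylogarithmic factor while boosting the initial $1/\mathrm{poly}(n)$ gap --- present simply because an unsatisfiable instance violates at least one of its $n$ constraints --- up to the constant $\veps$.

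\emph{Step 2 (application of the ABDFS09 protocol).} With $\psi$ in hand, I would apply \theoremref{thm:ABDFS09}, or more precisely the $\TCSP$ protocol underlying it, to $\psi$. By construction $\psi$ is satisfiable exactly when $x \in L$ and is otherwise $\veps$-far from satisfiable, so the constant soundness promised by \theoremref{thm:ABDFS09} translates directly into a constant soundness gap for deciding membership in $L$, with perfect completeness inherited from property (ii). The resource bounds then compose correctly: since $n = \widetilde{O}(N)$, the number of provers is $\widetilde{\Theta}(\sqrt{n}) = \widetilde{\Theta}(\sqrt{N})$ and each prover sends $\Theta(\log n) = \Theta(\log N)$ qubits, which is exactly the claimed statement.

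The hard part --- indeed the entire technical content --- is the quasilinearity in Step 1. Naively composing the Cook--Levin reduction with the classical PCP theorem would give a polynomial, rather than quasilinear, blowup, which would inflate the number of provers to $\widetilde{\Theta}(N^{c/2})$ for some $c > 1$ and thereby destroy the sublinear total proof length that is the whole point. I would therefore be careful to cite the length-efficient versions precisely: Ben-Sasson--Sudan \cite{BS05} for the quasilinear encoding of $\RAM$ computations and the resulting quasilinear-size reduction to a constraint-satisfaction instance, and Dinur \cite{Din07} for the explicit fact that her amplification preserves quasilinear size. Everything downstream of Step 1 is a mechanical substitution of $n = \widetilde{O}(N)$ into \theoremref{thm:ABDFS09}, so no new quantum analysis is needed.
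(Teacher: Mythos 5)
Your architecture is the same as the paper's (quasilinear constant-gap PCP via Ben-Sasson--Sudan plus Dinur, then hand the instance to the protocol of Aaronson et al.), but Step 2 has a genuine gap: there is no ``$\TCSP$ protocol underlying'' \cite{ABDFS09}. Their protocol is stated for \emph{balanced $2$-out-of-$4$ $\SAT$} instances with a constant gap, and that structure (every variable occurring in $\Theta(1)$ clauses, the exactly-two-of-four satisfaction condition) is exploited in their analysis; it is \cite{CD10}, not \cite{ABDFS09}, that consumes a $\TCSP$ instance directly. So feeding the gap-$\TCSP$ output of your Step 1 into \cite[Theorem 1]{ABDFS09} does not typecheck as written. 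The paper bridges exactly this mismatch in \corollaryref{for:constant-gap}(ii): first convert the constant-alphabet $\TCSP$ into $\TSAT$ constraint-by-constraint via Cook--Levin (constant size blowup, gap degraded by a constant factor), then apply the reduction of Papadimitriou and Yannakakis \cite{PY91} to bound variable occurrences, and finally the reduction of Khanna et al.\ \cite{KSTW01} to reach balanced $\TOFSAT$; each step preserves the constant gap and increases the number of variables and clauses by only a constant factor, so $\widetilde{\Theta}(\sqrt{N})$ provers and $\Theta(\log N)$ qubits survive. Some such chain must be spelled out for your Step 2 to go through.

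Your quasilinearity accounting in Step 1 is also off in a way that matters. If the initial gap were merely the trivial $1/\poly(n)$ one (``an unsatisfiable instance violates at least one of its $n$ constraints''), Dinur's amplification would need $\Theta(\log n)$ gap-doubling rounds, each multiplying the size by a constant, i.e.\ a \emph{polynomial} total blowup --- precisely the failure mode you warn against in your last paragraph. The correct pipeline, as in \claimref{claim:quasilinear-PCPs}, is that \cite{BS05} already gives a PCP with \emph{constant} soundness and $\polylog$ query complexity for $\NTIME_{\MTM}(t)$; Dinur's \cite[Lemma 8.3]{Din07} converts this into a $\TCSP$ with gap $1/\polylog$, after which only $O(\log\log)$ amplification rounds are needed, for a polylogarithmic blowup. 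Finally, a citation-level point you elide: \cite{BS05} is stated for multitape Turing machines, so the $\RAM$ phrasing of the proposition requires the simulation of Gurevich and Shelah \cite[Theorem 2]{GS89}, giving $\NTIME_{\RAM}(t) \subseteq \NTIME_{\MTM}(\widetilde{O}(t))$, before the PCP machinery is invoked.
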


More generally, letting $\NTIME_{\RAM}(t)$ be the class of languages solvable in non-deterministic $t(n)$-time by a random-access machine, for any $L$ in $\NTIME_{\RAM}(t)$ it is possible to prove membership in $L$, with perfect completeness and constant soundness, using $\widetilde{\Theta}\big(\sqrt{t(n)}\big)$ unentangled quantum proofs, each with $\Theta\big(\log t(n)\big)$ qubits.

In order to obtain statements analogous to \propositionref{prop:extend-aaronson} for the parameters obtained by other multi-prover $\QMA$ protocols (including \cite{CD10,Bei10,BT09,GallNN12}), we state the ``size-efficient'' reduction from $\NTIME_{\RAM}(t)$ in a very generic form. For details, see \sectionref{sec:quasilinear-PCPs}.

\subsubsection{Improvements to \cite{CD10}}

Aaronson et al. \cite{ABDFS09} raised the question of whether it is possible to construct a (multi-prover) $\QMA$ protocol with constant soundness gap and sublinear proof size for an $\NP$-complete language, but using \emph{no entangled measurements}. Chen and Drucker \cite{CD10} gave a positive answer:

\begin{utheorem}[{\cite{CD10}}]
Let $\varphi$ be a satisfiable $\TSAT$ instance with $N$ variables and $M$ clauses (and $M \geq N$). Then one can prove satisfiability of $\varphi$, with almost-perfect completeness and constant soundness, using $\widetilde{\Theta}(\sqrt{M})$ unentangled quantum proofs, each with $\Theta(\log M)$ qubits, and by only making LOCC (in fact, Bell) measurements.
\end{utheorem}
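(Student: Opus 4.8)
The plan is to go through a constraint-graph (coloring-type $\TCSP$) intermediary and then design a verifier whose only operations are Bell measurements on pairs of the $\kappa = \widetilde{\Theta}(\sqrt{M})$ registers. First I would apply a quasilinear-size gap-producing (PCP) reduction from $\varphi$ to a constraint-graph instance on $n = \widetilde{\Theta}(M)$ vertices, so that satisfiable $\varphi$ maps to a properly colorable graph while unsatisfiable $\varphi$ maps to an instance in which \emph{every} coloring violates a constant fraction of the constraints; this constant gap is exactly what the soundness analysis will consume. I would encode a proper coloring $c$ by the \emph{proper state}
$$
\ket{\psi_c} = \frac{1}{\sqrt{n}}\sum_{v=1}^{n}\ket{v}\ket{c(v)},
$$
which costs $\Theta(\log n) = \Theta(\log M)$ qubits, and ask each of the $\kappa$ honest provers for one copy. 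The verifier then flips coins to run one of three LOCC (Bell-only) tests: (i) a \emph{consistency/satisfiability test}, measuring every register in the computational basis to obtain pairs $(v_i,c_i)$ and rejecting if two colliding indices $v_i=v_j$ carry different colors, or if some pair $(v_i,v_j)$ straddles a violated constraint edge; (ii) a \emph{uniformity test}, a Fourier-basis (Bell) measurement on a pair of registers certifying that each register is close to uniform over the index $v$ rather than concentrated on a few indices; and (iii) a \emph{symmetry test} forcing distinct registers to hold approximately the same state. The whole content of the theorem is that (ii) and (iii), done with Bell measurements, replace the swap test of \theoremref{thm:ABDFS09}.

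For completeness, honest provers sending $\ket{\psi_c}^{\otimes \kappa}$ pass the consistency test with certainty (colors agree on every collision and all edges are properly colored), and pass the uniformity and symmetry tests with all but exponentially small probability. This is where the ``almost-perfect'' completeness $1 - e^{-\Omega(\sqrt{M})}$ enters: the exponentially small error is an artifact of the randomized Bell-based uniformity/symmetry checks on the $\widetilde{\Theta}(\sqrt{M})$ registers, not of the consistency check.

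The soundness argument is a case analysis driven by the birthday paradox, and this is where I expect the real work to lie. The unentanglement promise already gives a product state $\ket{\psi_1}\ox\cdots\ox\ket{\psi_\kappa}$ across registers; passing the uniformity and symmetry tests with high probability should force each factor $\ket{\psi_i}$ to be close to a single common state $\ket{\psi}$ that is close to uniform over the index register. Given such a near-uniform, near-identical collection, measuring all $\kappa = \widetilde{\Theta}(\sqrt{n})$ index registers makes \emph{some} pair collide on an index and \emph{some} pair straddle a constraint edge, each with constant probability (the birthday paradox). The consistency test thereby reads a genuine coloring $c$ off $\ket{\psi}$ and checks a random constraint; since a NO instance forces a constant fraction of violated constraints, the verifier rejects with constant probability. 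The complementary case --- $\ket{\psi}$ being far from any encoding $\ket{\psi_c}$ of a coloring --- splits into ``not uniform'' (caught by test (ii)) and ``inconsistent colors on collisions'' (caught by test (i)).

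The main obstacle is the Bell-only soundness of the uniformity/symmetry step: unlike the swap test, a Bell measurement gives only a coarse, basis-dependent handle on the states, so I must show that arbitrary registers that fool these tests can do so only with bounded probability, and then \emph{quantify} how close to uniform-and-identical they are forced to be, so that the extracted coloring is well defined. This is also exactly what pins the number of provers at $\kappa \in \widetilde{\Theta}(\sqrt{n})$: with asymptotically fewer registers, no colliding or edge-straddling pair is guaranteed with constant probability, so the birthday-paradox step --- and hence any constant soundness gap --- collapses. This is consistent with the paper's later remark that the $\kappa \in O(1)$ regime lies beyond the reach of this analysis.
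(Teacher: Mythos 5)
Your outline matches the actual protocol on most components --- the quasilinear gap-producing reduction to a constant-gap $\TCSP$ instance, graph coloring states of $\Theta(\log M)$ qubits, the computational-basis consistency test with its two rejection branches, and a Fourier-basis uniformity check --- but your third test is a genuine gap. The protocol of \cite{CD10} (reproduced as \algorithmref{alg:CD10}) has \emph{no} symmetry test, and for good reason: a test ``forcing distinct registers to hold approximately the same state'' is precisely what the swap test provides, and no such test is implementable with Bell (or even LOCC) measurements --- non-adaptive local measurements only constrain each register's outcome distributions in the measured bases, and two states can agree on computational- and Fourier-basis statistics while remaining far apart as states (e.g., via relative phases invisible in those bases). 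Since the entire point of the theorem is to avoid entangling measurements, the step in your soundness argument where the tests ``force each factor $\ket{\BPSI^{(i)}}$ to be close to a single common state $\ket{\BPSI}$'', from which a single coloring is then extracted, cannot be carried out in this measurement model; attempting to build a generic LOCC substitute for the swap/product test also runs into the structural obstruction discussed in the paper's conclusions (it would risk collapsing $\QMA(\NUMP)$ to $\QMA$ via \cite{BCY10}).

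The actual route drops the symmetry requirement entirely and this is the real content of the soundness proof. The verifier runs only two tests: $\CONDUNITEST$ --- which Fourier-transforms every register individually, rejects if \emph{too few} color registers yield outcome $0$, and rejects if some $j_{i}=0$ arrives with $v_{i}\neq 0$ --- and $\CONSTEST$. Soundness is then proved for \emph{independent but not necessarily identical} provers: via \lemmaref{lemma:gen-uniformity} and \lemmaref{lemma:vertex-and-Fourier}, any register that passes the (conditional) uniformity check induces a measured vertex $v_{i}$ that is near-uniform over a large set $S_{i} \subseteq V$ with $\Abs{S_{i}} \geq (1-\veps)N$, and the birthday/second-moment argument is run directly on the pairwise indicators $V_{i,j}$ for the event ``$(v_{i},v_{j})$ is a violated edge \emph{or} $v_{i}=v_{j}$ with $c_{i}\neq c_{j}$'' (the modified \cite[Lemma 4]{CD10}). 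The collision-with-disagreement branch of $\CONSTEST$ is what replaces your symmetry test: inconsistent colorings across registers are caught \emph{statistically at consistency time}, so no common state ever needs to be certified. Your remaining points are sound: the imperfect completeness $1-e^{-\Omega(\kappa)}$ does stem from the threshold cutoff in the conditional uniformity test (cf. \remarkref{remark:CD10-soundness-tight}), and the birthday-paradox bottleneck is indeed why the original analysis needs $\kappa \in \widetilde{\Theta}(\sqrt{N})$ for a constant gap. To repair your proposal, delete test (iii), make test (ii) per-register with the zero-count threshold, and redo the soundness case analysis for non-identical near-uniform distributions as above.
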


The analysis of \cite{CD10} does not give a smooth tradeoff between the number of provers and soundness, because their proof only shows a soundness gap when the number of provers is $\widetilde{\Theta}(\sqrt{M})$. We give a tight analysis of their protocol that yields a soundness gap for a number of provers $\NUMP \in \Omega(\log N)$. We believe the smooth trade-off is of interest because it helps us ``push the barrier closer'' to the best two-prover LOCC $\QMA$ protocols with logarithmic proof length and small soundness gap. 

\begin{proposition}
\label{proposition:improved:CD10}
Let $\varphi$ be a satisfiable $\TSAT$ instance with $N$ variables and $M$ clauses (and $M \geq N$). Then one can prove satisfiability of $\varphi$, with completeness $1-e^{-\Omega(\NUMP)}$ and soundness $1-\Omega(\frac{\NUMP^{2}}{N+\NUMP^{2}})$, using $\NUMP$ unentangled quantum proofs, each with $\Theta(\log N)$ qubits, and by only making LOCC (in fact, Bell) measurements. Thus, for $\NUMP \in \Omega(\log N)$ and $\NUMP \in O(\sqrt{N})$, the soundness gap is $\Omega(\NUMP^{2}N^{-1})$. Moreover, our analysis is tight, for reasons described in \remarkref{remark:CD10-soundness-tight} on page \pageref{remark:CD10-soundness-tight}.
\end{proposition}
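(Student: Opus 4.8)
The plan is to instantiate the protocol of Chen and Drucker \cite{CD10} with an arbitrary number $\NUMP$ of provers in place of the $\widetilde{\Theta}(\sqrt{N})$ they hard-code, and then to replace their ``all-or-nothing'' soundness bound by a quantitative, pair-counting (``birthday paradox'') argument. First I would preprocess $\varphi$ using the quasilinear, constant-gap reduction underlying \propositionref{prop:extend-aaronson}, mapping it to a bounded-degree $\TCSP$ instance on $\widetilde{\Theta}(N)$ variables with $\Theta(N)$ constraints that is satisfiable iff $\varphi$ is, and that under every assignment violates a constant fraction of its constraints when $\varphi$ is unsatisfiable. The honest proof is then $\NUMP$ identical copies of the proper state $\frac{1}{\sqrt{N}}\sum_i (-1)^{a_i}\ket{i}$ encoding a satisfying assignment $a$ (over a constant-size alphabet, with the assignment recorded in relative phases as in \cite{ABDFS09,CD10}), and the verifier runs, with suitable probabilities, a uniformity test and a consistency test — both implementable by $\BELL$ measurements on pairs of registers, as in \cite{CD10} — together with a satisfiability test that inspects a random constraint.

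For completeness, each honest register passes the uniformity and consistency tests with probability bounded away from the verifier's acceptance threshold; since the decision is a threshold over $\Theta(\NUMP)$ such (independent, product-state) Bell outcomes, a Chernoff bound gives failure probability $e^{-\Omega(\NUMP)}$, which is exactly the claimed completeness $1 - e^{-\Omega(\NUMP)}$ and recovers the $1 - e^{-\Omega(\sqrt{N})}$ of \cite{CD10} at $\NUMP \in \widetilde{\Theta}(\sqrt{N})$.

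The soundness bound is the crux and the main new contribution. By convexity I may assume the provers send a product of pure states $\ket{\phi_1}\otimes\cdots\otimes\ket{\phi_\NUMP}$, and the argument splits into two parts. (i) Conditioned on passing the uniformity and consistency tests with high probability, I would show — adapting the local analysis of \cite{CD10} but now tracking the dependence on $\NUMP$ — that each $\ket{\phi_t}$ is close, in trace distance, to a proper state for a single common assignment $a$; the residual closeness error can be driven down to $1/\poly(N)$ precisely when $\NUMP \in \Omega(\log N)$, which is where that hypothesis enters. (ii) Since the reduced instance is unsatisfiable, $a$ violates $\Omega(N)$ constraints, and the satisfiability test catches a violation whenever two of the $\NUMP$ registers, measured in the computational basis, sample the two endpoints of a common violated constraint. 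Because the registers are unentangled these samples are independent, so a single register-pair hits a fixed (bounded-degree) violated-constraint pattern with probability $\Theta(1/N)$, and summing over the $\binom{\NUMP}{2}$ pairs with inclusion--exclusion to handle overlaps yields detection probability $\Theta\!\big(\frac{\NUMP^2}{N+\NUMP^2}\big)$, saturating at a constant near $\NUMP \in \Theta(\sqrt{N})$.

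The main obstacle I anticipate is combining (i) and (ii) cleanly: the birthday-paradox ``signal'' has magnitude only $\Theta(\NUMP^2/N)$, so the closeness errors from (i) must be controlled at strictly smaller order, which is what forces $\NUMP \in \Omega(\log N)$ and also explains why no gap survives for $\NUMP \in O(1)$. Making the pair-counting quantitative and matching (rather than merely positive at $\NUMP = \sqrt{N}$, as in \cite{CD10}), and then checking it cannot be improved, is exactly the tightness discussed in \remarkref{remark:CD10-soundness-tight}; the remaining work is the bookkeeping of these two error terms against the claimed $\Omega(\NUMP^{2}N^{-1})$ gap.
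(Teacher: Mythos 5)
Your high-level plan---reuse the \cite{CD10} protocol with $\NUMP$ provers and replace their all-or-nothing soundness bound by a quantitative pair-counting argument---matches the paper's, and your completeness analysis (a Chernoff bound over the threshold step of the conditional uniformity test) is exactly right. But your soundness argument has a genuine gap, in two places. First, your step (i) would fail as stated: in the soundness case nothing forces the $\NUMP$ states to be close in trace distance to proper states of a single common assignment, and increasing $\NUMP$ cannot drive a per-state closeness error to $1/\poly(N)$---each state is tested once, so more provers do not amplify anything at the single-state level. The paper (following \cite{CD10}) never extracts a common assignment: the uniformity machinery only guarantees that each surviving state samples a pair $(v_{i},c_{i})$ whose vertex marginal is nearly uniform over some large set $S_{i} \subseteq V$, with colors arbitrary, and the possibility of different registers encoding different colorings is absorbed into the pair event itself, whose indicator $V_{i,j}$ fires also when $v_{i}=v_{j}$ but $c_{i} \neq c_{j}$. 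Relatedly, you misattribute the hypothesis $\NUMP \in \Omega(\log N)$: it has nothing to do with controlling closeness errors in step (i); it is needed solely so that the completeness error $e^{-\Omega(\NUMP)}$ (from the threshold cut in $\CONDUNITEST$) is dominated by the soundness gap $\Omega(\NUMP^{2}/N)$, as \remarkref{remark:CD10-soundness-tight} explains.

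Second, your step (ii) does not deliver the claimed bound across the whole range. Writing $V = \sum_{i<j} V_{i,j}$, second-order inclusion--exclusion (Bonferroni) gives $\Pr[V>0] \geq \mathbb{E}[V] - O_{\veps,d}(\NUMP^{3}/N^{2} + \NUMP^{4}/N^{2})$, which is $\Omega(\NUMP^{2}/N)$ only for $\NUMP \leq c\sqrt{N}$ with a sufficiently small constant $c$, and becomes vacuous (negative) beyond that; so it yields neither the saturating form $\Omega\big(\frac{\NUMP^{2}}{N+\NUMP^{2}}\big)$ nor the recovery of \cite{CD10}'s constant gap at $\NUMP \in \Theta(\sqrt{N})$. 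The paper's actual new idea, which your proposal is missing, is a one-line replacement of the second-moment argument by Cantelli's one-sided Chebyshev inequality: $\Pr[V=0] \leq \frac{\Var(V)}{\Var(V)+\mathbb{E}[V]^{2}}$, which combined with the moment bounds already proved in \cite{CD10} (namely $\mathbb{E}[V_{i,j}] \geq \veps/N$ and $\Var(V) = O_{\veps,d}(\NUMP^{2}/N + \NUMP^{3}/N^{2})$) immediately gives $\Pr[V>0] \geq \Omega_{\veps,d}\big(\frac{\NUMP^{2}}{N+\NUMP^{2}}\big)$ uniformly in $\NUMP$, with no case split. (A minor further inaccuracy: the honest proofs here are graph coloring states $\frac{1}{\sqrt{N}}\sum_{v}\ket{v}\ket{\chi(v)}$ with an explicit color register, not phase-encoded states $\frac{1}{\sqrt{N}}\sum_{i}(-1)^{a_{i}}\ket{i}$, which would call for swap-type rather than Bell measurements.)
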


The proof follows by improving the second-moment argument of \cite{CD10} by using a one-sided Chebyshev inequality. See \sectionref{sec:CD10-improved-soundness} for more details.

\subsubsection{Improvements to \cite{BT09}}

We give a tight soundness analysis of the protocol of Blier and Tapp \cite{BT09}.

\begin{proposition}
\label{proposition:improved:BT09}
The protocol of Blier and Tapp \cite{BT09} for $\TCSP$'s on $N$ vertices and $M$ edge constraints over a $K$-size alphabet has soundness $s = 1 - \Omega(N^{-2})$, assuming $K \in O(1)$. Moreover, our analysis is tight, for reasons described in \remarkref{remark:BT09-soundness-tight} on page \pageref{remark:BT09-soundness-tight}.
\end{proposition}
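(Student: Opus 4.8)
The plan is to lower-bound, for every unsatisfiable $\TCSP$ instance and every pair of (product) cheating proofs, the verifier's rejection probability by $\Omega(N^{-2})$, and then to exhibit a matching strategy. The first step is to use the swap test to collapse the two proofs into one. Since the two provers are unentangled, their joint state is a product $\ket{\phi_1}\otimes\ket{\phi_2}$, and the swap test accepts with probability $\tfrac12\bigl(1+\abs{\braket{\phi_1}{\phi_2}}^2\bigr)$; hence if the total rejection were $o(N^{-2})$ we would have $\abs{\braket{\phi_1}{\phi_2}}=1-o(N^{-2})$, so the two proofs are essentially identical. I may therefore work with a single state $\ket{\phi}=\sum_{v,a}\alpha_{v,a}\ket{v}\ket{a}$ on the index register $I\in[N]$ and the color register $C\in[K]$, and with the induced vertex-marginal distribution $p_v=\sum_a\abs{\alpha_{v,a}}^2$ together with, for each vertex, its conditional color distribution.

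The conceptual heart of the argument --- and the reason the gap is exactly quadratic --- is the behaviour of the edge/constraint test. Crucially, the verifier never gets to sample an \emph{edge} of the graph directly; it only obtains two (essentially independent, identically distributed) vertices $v_1,v_2$ by measuring the two copies. To catch the violation of one fixed constraint on an edge $(u,v)$ it therefore needs the two measurements to land on $u$ and $v$, which happens with probability $\approx p_u\,p_v$ times the probability that the observed colors are the offending ones. For a near-uniform marginal $p_u p_v=\Theta(N^{-2})$, so even in the best case a single violated constraint is detected only with probability $\Theta(N^{-2})$ --- not the $\Theta(N^{-1})$ one would naively hope for from a density argument. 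The plan is to show that no cheating state can do better, via a trichotomy.

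For the lower bound I would split on the structure of $\ket{\phi}$. (i) If $p$ is far from uniform, as measured by the $\ell_2$ deviation $\sum_v(p_v-1/N)^2$, the uniformity test rejects with probability proportional to this deviation, and I would calibrate the threshold so that this already yields $\Omega(N^{-2})$. (ii) If $p$ is near-uniform but the conditional color distributions are spread out, the consistency test --- which measures both copies and rejects when a single vertex is seen with two different colors --- rejects with the corresponding probability, again $\Omega(N^{-2})$ after calibration. (iii) In the remaining regime $\ket{\phi}$ is close to an honest encoding $\tfrac{1}{\sqrt N}\sum_v\ket{v}\ket{c(v)}$ of a genuine assignment $c$; since the instance is unsatisfiable $c$ violates some constraint $(u,v)$, and by the previous paragraph the edge test rejects with probability $\gtrsim p_u p_v=\Omega(N^{-2})$. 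Summing the three (disjoint) contributions and optimizing the thresholds gives the claimed $\Omega(N^{-2})$, and the assumption $K\in O(1)$ keeps the color-register factors constant throughout.

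The delicate part --- and the main obstacle --- is the bookkeeping in the intermediate regime, where $\ket{\phi}$ is simultaneously slightly non-uniform and slightly inconsistent: a crude union bound over the tests loses extra factors of $N$, which is precisely what forces the weaker $\Omega(N^{-6})$ bound of Blier and Tapp. The fix is to keep every deviation as a squared ($\ell_2$/fidelity) quantity, exploiting that quantum test rejection probabilities are naturally quadratic in amplitude deviations, so that an $O(1/N)$ budget in amplitudes translates into the desired $\Omega(N^{-2})$ in probability rather than a higher power. Finally, for tightness I would analyze the strategy of \remarkref{remark:BT09-soundness-tight}: the honest-form proof encoding an \emph{optimal} assignment of a no-instance whose optimum violates a single constraint passes the swap, uniformity, and consistency tests exactly as the honest proof does, while the edge test fires only when both measurements land on the unique violated edge --- with probability $\Theta(N^{-2})$ --- matching the lower bound and showing the analysis cannot be improved.
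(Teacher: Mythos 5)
There is a genuine gap, located in case (i) of your trichotomy. You claim that if the vertex marginal $p_{v}$ is far from uniform, then the uniformity test rejects with probability proportional to the deviation. This is false as stated, because $\UNITEST$ only fires on the event that the \emph{Fourier-transformed color register} yields outcome $j=0$, and a cheating prover can drive the probability $p_{0}(\ket{\BPSI})$ of that outcome to zero by a phase trick in the color register, independently of the vertex marginal. Concretely, take $\beta_{v,j}=e^{2\pi\sqrt{-1}\,j/K}/\sqrt{K}$ for every $v$: then $\sum_{j}\beta_{v,j}=0$, so the outcome $j=0$ never occurs and $\UNITEST$ never rejects, even if all the amplitude sits on a single vertex. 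Such a state is of course caught by the color-consistency test (each vertex's color measurement is spread over all $K$ colors), but this shows your three cases are not independent: the uniformity test has no power until the other tests have already constrained the color register. The paper handles exactly this by \emph{chaining} rather than splitting into disjoint regimes: first, the swap and color-consistency budgets force every heavy vertex to have a single dominant color (\lemmaref{lemma:S-has-unique-coloring}); this dominant-color structure is then what yields the lower bound $p_{j}(\ket{\BPSI})\geq\frac{1}{4K}$ for all $j$ (\lemmaref{lemma:get-0-color-often}); and only with that bound in hand does the Fourier argument (\lemmaref{lemma:gen-uniformity} together with \lemmaref{lemma:vertex-and-Fourier}) convert a low-amplitude vertex into a rejection probability $\nu=\frac{1}{64KN^{2}}$ for $\UNITEST$ (\lemmaref{lemma:all-vertices}). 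Note also that what the chained argument establishes is an amplitude \emph{lower bound} $\abs{\alpha_{v}}^{2}\geq\frac{1}{8KN}$ for every vertex, not $\ell_{2}$-closeness to the uniform marginal; the former is all the edge test needs and is strictly weaker than what your case (iii) posits.

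Apart from this misordering, your route is the paper's: calibrate the swap, consistency, and uniformity thresholds at $\Theta(N^{-2})$, conclude that any state passing all of them encodes a well-defined coloring $C(v)=\arg\max_{j}\abs{\beta^{(1)}_{v,j}}^{2}$, and catch the (at least one) violated edge with probability at least $\bigl(\frac{1}{8KN}\cdot\frac{100K-1}{100K}\bigr)^{2}=\Omega(N^{-2})$; your tightness analysis via a no-instance with a unique violated edge is identical to \remarkref{remark:BT09-soundness-tight}. One caution on your first step: with a swap budget $\delta=\Theta(N^{-2})$ the two proofs are only $\sqrt{2\delta}=\Theta(1/N)$-close in trace distance, which is the same scale as the per-outcome probabilities $\Theta(1/N)$ being compared; so ``essentially identical'' must be implemented quantitatively, as the paper does by subtracting $\sqrt{2\delta}$ inside the color-consistency estimate and choosing $\delta=\frac{1}{2\cdot 1600^{2}K^{4}N^{2}}$ so the error term is strictly smaller than the signal $\frac{1}{800K^{2}N}$. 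Your remark about keeping all deviations quadratic gestures at this, but the proposal as written does not carry out the calibration, and without the $p_{0}$ lemma above the calibration has nothing to attach to.
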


The above results improves on the original analysis by Blier and Tapp \cite{BT09}, who show that the protocol for instances of graph $3$-coloring on $N$ vertices and $M$ edges that has completeness $c=1$ and soundness $s = 1 - \Omega(N^{-6})$.  It also improves on the result of Beigi~\cite{Bei10}, who gives a protocol for constant-gap instances of (balanced) $2$-out-of-$4$ $\SAT$ with $M$ clauses that has completeness $c = \frac{3}{4} + \frac{\sqrt{2}}{6M}\sqrt{1 - \frac{1}{M}}$ and soundness $s = c - \Omega(M^{-3-\veps})$ for every $\veps > 0$.  In independent work, Le Gall, Nakagawa and Nishimura~\cite{GallNN12} also gave an improved version of the Blier and Tapp \cite{BT09} protocol, by changing the protocol to utilize $\TCSP$ instances with constant soundness gaps, and achieve $c=1$ and $s=1-\Omega(\frac{1}{N})$, which when applied with the requisite PCP results, improves upon our soundness gap by nearly a quadratic factor. See \sectionref{sec:BT09-improved-soundness} for details.

\subsubsection{Conclusions}

Our results have made limited progress in answering \questionref{question:main}, and we now comment on avenues for further progress.

One possible approach is to first construct a two-prover LOCC $\QMA$ protocol for $\TSAT$ with $\widetilde{\Omega}(\frac{1}{\sqrt{N}})$ soundness gap and logarithmic proof size, and then suitably amplify the protocol to constant soundness. Note that since LOCC $\QMA$ protocols amplify naturally, there is no need to use a product test \cite{HM10} (which would have regardless created a non-LOCC protocol), nor there is any need to invoke additional assumptions such as the Weak Additivity Conjecture \cite[Theorem 35]{ABDFS09}.)

Through \propositionref{proposition:improved:BT09} we have made some progress in this direction by improving the soundness gap of two-prover protocols for $\TSAT$ with a polylogarithmic proof size to $\Omega(N^{-2})$. Unfortunately, the protocol is not LOCC and does not achieve the required soundness gap of $\widetilde{\Omega}(\frac{1}{\sqrt{N}})$. 

Another approach is to construct an LOCC protocol that acts on all the provided qubits at the same time, possibly in a much more complicated way than amplifying a two-prover protocol. The main difficulty in such an approach is that one of the main tools (as used in \cite{ABDFS09,KMY03,BT09,Bei10,HM10,GallNN12}) for multi-prover $\QMA$ protocols is the swap test, which is not LOCC. Attempting to replicate the properties of the product test of of Harrow and Montanaro \cite{HM10} (which relies on the swap test) within the LOCC framework (in order to apply it to LOCC protocols such as \cite{CD10}) runs into the risk of implying that $\QMA(\NUMP)=\QMA_{\LOCC}(2)$, which, through the result of Brand\~{a}o et al.\ \cite{BCY10}, would have the unlikely consequence of $\QMA(\NUMP)=\QMA$. Thus, any such approach must make essential ``non-black-box'' use of the structure of the language at hand (e.g., that of $2$-out-of-$4$ $\SAT$) to avoid being a ``generic'' test.

\section{Preliminaries}
\label{sec:preliminaries}

\parhead{Two languages and non-deterministic time} First, we define two $\NP$ languages that we will be working with. The first language is constraint-satisfaction problems on graphs:

\begin{definition}[Graph Constraint Satisfaction]
Let $G,=(V,E)$ be a graph (possibly with self-loops) and an alphabet $\Sigma$. A \emph{graph constraint-satisfaction problem} is a pair $\CTRS=(G,\{R_{e}\}_{e \in E})$ where $R_{e} \colon \Sigma\times\Sigma\to\{0,1\}$ for each $e \in E$. We say that $\CTRS$ is satisfiable if there is a labeling $C \colon V \to \Sigma$ such that every edge predicate evaluates to 1. We say that $\CTRS$ is $\delta$-satisfiable if, for every possibly labeling of the vertices, at most a $\delta$ fraction of the edge predicates evaluate to 1.

Fix positive integers $N$, $M$, and $K$. The class $\TCSP(N,M,K)$ consists of satisfiable graph constraint-satisfaction problems over $K$-size alphabets on graphs of $N$ vertices and $M$ edges.
\end{definition}

The second language is $\SAT$ formulae with some additional structure:

\begin{definition}[$2$-Out-Of-$4$ $\SAT$]
The class $\TOFSAT$ consists of $2$-out-of-$4$ satisfiable $4$-CNF formulae in which every variable appears in $\Theta(1)$ number of clauses. (A $4$-CNF formula is $2$-out-of-$4$ satisfiable if there is an assignment to the variables such that for every clause in the $4$-CNF exactly two of the four variables are satisfied.)

\end{definition}

Next, we recall the definition of a proper complexity function:

\begin{definition}[Proper Complexity Function]
A monotonically increasing function $f \colon \mathbb{N}\to\mathbb{N}$ is a \emph{proper complexity function} if a multi-tape Turing machine can compute $1^n\mapsto 1^{f(n)}$ in time and space $O(f(n))$. See \cite[Definition 7.1]{Pap94} for more details.
\end{definition}

Finally, we recall non-deterministic time complexity classes with respect to multi-tape Turing machines and random-access machines:

\begin{definition}[$\NTIME_{\MTM}$]
A multi-tape Turing machine is a finite state machine attached to multiple tapes, with one head per tape. The tapes are infinite in one direction. The machine can read and write to each tape, moving one cell per time step. See \cite{Pap94} for more details.

That the machine is non-deterministic means that the finite state control of the Turing machine can non-deterministically decide its next move, such that the machine accepts if and only if there is some non-deterministic choice that allows it to accept.

For a proper complexity function $t$, $\NTIME_{\MTM}(t)$ denotes the class of languages that can be recognized by a $t$-time non-deterministic multi-tape Turing machine.
\end{definition}

\begin{definition}[$\NTIME_{\RAM}$]
A random-access machine (RAM) is a list of commands that includes a finite number of control registers as well as an unbounded number of indexable registers.  Each register holds an integer. Commands include addition, multiplication (with a log-cost penalty), branching on register contents, and indexing the registers with the contents of other registers. See \cite{GS89} for more details.

That the machine is non-deterministic means that the finite list of commands of the random-access machine allow non-deterministic branching to its next move, such that the machine accepts if and only if there is some non-deterministic choice that allows it to accept.

For a proper complexity function $t$, $\NTIME_{\RAM}(t)$ denotes the class of languages that can be recognized by a $t$-time non-deterministic random-access machine.
\end{definition}

\parhead{Information theory}
First, we recall the classical information-theoretic notion of statistical distance between two probability distributions \cite[Sec. 9.1]{NC00}:

\begin{definition}[Statistical Distance]
\label{def:statistical-distance}
Let $P$ and $Q$ be two probability distributions over the same finite set $S$. The \emph{statistical distance} between $P$ and $Q$, denoted $\VARNORM{P-Q}$, is defined as the quantity
\begin{equation*}
\VARNORM{P-Q} := \dfrac{1}{2} \sum_{s \in S} \big\vert P(s) - Q(s) \big\vert \enspace.
\end{equation*}
\end{definition}

Next, we recall its quantum analogue of trace distance \cite[Sec. 9.2.1]{NC00}:

\begin{definition}[Trace Distance]
\label{def:trace-distance}
The \emph{trace distance} between two quantum states $\rho$ and $\sigma$, denoted $\TRACENORM{\rho-\sigma}$, is defined as the quantity:
\begin{equation*}
\TRACENORM{\rho-\sigma}:=\dfrac{1}{2} \tr \left( \sqrt{(\rho - \sigma)^{\dagger}(\rho - \sigma)} \right) \enspace.
\end{equation*}
\end{definition}

If $\rho$ and $\sigma$ commute, then they can be simultaneously diagonalized, and the trace distance between $\rho$ and $\sigma$ reduces to the statistical distance between the two probability distributions induced by the two sets of eigenvalues of $\rho$ and $\sigma$.

If $\rho$ and $\sigma$ are two pure states $\ketbra{\phi}{\phi}$ and $\ketbra{\psi}{\psi}$, then the trace distance between $\ketbra{\phi}{\phi}$ and $\ketbra{\psi}{\psi}$ simplifies \cite[Sec. 9.2.3]{NC00} to the quantity $\sqrt{1-\abs{\braket{\phi}{\psi}}^{2}}$. 

Also, we recall that, given a projective measurement (i.e., a Hermitian operator) $M$, if $P$ and $Q$ are the probability distributions describing the outcomes obtained when measuring $M$ on $\ket{\phi}$ and $\ket{\psi}$ respectively, then $\VARNORM{P-Q} \leq \TRACENORM{\ketbra{\phi}{\phi}-\ketbra{\psi}{\psi}}$. For convenience, we will denote by $\DIST_{M}(\ket{\phi})$ the distribution $P$, and simply $\DIST(\ket{\phi})$ when $M$ is assumed to be a full computational basis measurement.

\parhead{Swap test}
The  \emph{swap-test on two quantum states $\rho$ and $\sigma$} \cite{BarencoBDEJM97,BCWdW01} is given by the following quantum circuit:
\begin{equation*}
\Qcircuit @C=.7em @R=.5em
{
\lstick{\ket{0}}      & \gate{H} & \ctrl{1}                     & \gate{H} & \qw & \meter & \cw & b  \\
\lstick{\ket{\rho}}   & \qw      & \multigate{1}{\mathsf{SWAP}} & \qw      & \qw & \qw    &     &    \\
\lstick{\ket{\sigma}} & \qw      & \ghost{\mathsf{SWAP}}        & \qw      & \qw & \qw    &     &
}
\end{equation*}
Essentially, the swap-test measures the overlap between two quantum states, because
\begin{equation*}
\Pr \big[ b=0 \big] = \dfrac{1 + \tr (\rho \sigma)}{2} \enspace,
\end{equation*}

If $\rho$ and $\sigma$ are two pure states $\ketbra{\phi}{\phi}$ and $\ketbra{\psi}{\psi}$, then the probability above is equal to $\frac{1 + \abs{\braket{\phi}{\psi}}^{2}}{2}$. Interpreting $b=0$ as an ``accept'' and $b=1$ as a ``reject'', we define:
\begin{equation*}
\REJ\big(\SWAPTEST(\ket{\phi},\ket{\psi})\big) := \frac{1 - \abs{\braket{\phi}{\psi}}^{2}}{2} \enspace.
\end{equation*}
The swap test can thus be used to check whether $\ketbra{\phi}{\phi}$ and $\ketbra{\psi}{\psi}$ are equal or not: if they are equal, then $\REJ\big(\SWAPTEST(\ket{\phi},\ket{\psi})\big)=0$; if they are not equal, then the probability of rejection is inversely proportional to the overlap between $\ketbra{\phi}{\phi}$ and $\ketbra{\psi}{\psi}$.

If the probability that the swap test rejects two quantum states is bounded from above, then the statistical distance between the two probability distributions arising when measuring the two quantum states (in any basis) is also bounded from above:

\begin{lemma}
\label{lemma:swap-test-and-measurements}
Let $\ket{\phi}$ and $\ket{\psi}$ be quantum states and $\delta$ a number in $[0,1]$. If $\REJ\big(\SWAPTEST(\ket{\phi},\ket{\psi})\big) \leq \delta$, then $\VARNORM{\DIST(\ket{\phi})-\DIST(\ket{\psi})} \leq \sqrt{2\delta}$.
\end{lemma}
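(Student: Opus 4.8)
The plan is to chain together three facts that are already recorded in the preliminaries, so the argument is essentially a two-line computation once the right quantities are identified. First I would unpack the hypothesis using the definition of the swap-test rejection probability given in the excerpt, namely
\begin{equation*}
\REJ\big(\SWAPTEST(\ket{\phi},\ket{\psi})\big) = \frac{1 - \abs{\braket{\phi}{\psi}}^{2}}{2} \leq \delta,
\end{equation*}
which rearranges to the inequality $1 - \abs{\braket{\phi}{\psi}}^{2} \leq 2\delta$ controlling the overlap from below.

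Next I would invoke the closed form for the trace distance between two pure states, recalled earlier from \cite[Sec.\ 9.2.3]{NC00}, which states that $\TRACENORM{\ketbra{\phi}{\phi}-\ketbra{\psi}{\psi}} = \sqrt{1-\abs{\braket{\phi}{\psi}}^{2}}$. Combining this with the bound from the previous step and using monotonicity of the square root immediately gives $\TRACENORM{\ketbra{\phi}{\phi}-\ketbra{\psi}{\psi}} \leq \sqrt{2\delta}$.

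Finally I would apply the measurement-contraction fact also stated in the preliminaries: for any projective measurement $M$, the statistical distance between the induced outcome distributions is at most the trace distance between the measured states, i.e.\ $\VARNORM{P-Q} \leq \TRACENORM{\ketbra{\phi}{\phi}-\ketbra{\psi}{\psi}}$. Specializing $M$ to the full computational-basis measurement so that $P = \DIST(\ket{\phi})$ and $Q = \DIST(\ket{\psi})$, and substituting the trace-distance bound from the previous step, yields $\VARNORM{\DIST(\ket{\phi})-\DIST(\ket{\psi})} \leq \sqrt{2\delta}$, which is exactly the claim.

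There is no real obstacle here: every ingredient has been established earlier in the excerpt, and the only thing to be careful about is bookkeeping — making sure the factor of $2$ in the rejection probability propagates correctly into the $\sqrt{2\delta}$ bound, and that the measurement-contraction inequality is applied with $M$ being the basis measurement underlying $\DIST(\cdot)$. The entire proof is therefore a short composition of the three cited facts rather than a substantive new argument.
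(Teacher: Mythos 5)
Your proposal is correct and follows exactly the same route as the paper's proof: rewriting the rejection probability to bound $1-\abs{\braket{\phi}{\psi}}^{2} \leq 2\delta$, using the pure-state trace-distance formula $\sqrt{1-\abs{\braket{\phi}{\psi}}^{2}}$, and applying the contraction of statistical distance under measurement. The paper merely compresses these three steps into a single displayed chain of inequalities, so there is nothing to add.
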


\begin{proof}
Recall that $\REJ\big(\SWAPTEST(\ket{\phi},\ket{\psi})\big) = \frac{1}{2} - \frac{|\braket{\phi}{\psi}|^{2}}{2}$, so that
\begin{equation*}
     \VARNORM{\DIST(\ket{\phi})-\DIST(\ket{\psi})}
\leq \TRACENORM{\ketbra{\phi}{\phi} - \ketbra{\psi}{\psi}}
=    \sqrt{1 - |\braket{\phi}{\psi}|^{2}}
\leq \sqrt{2\delta}
\enspace.
\end{equation*}
\end{proof}

\parhead{Quantum Fourier transform}
Finally, we recall the quantum Fourier transform:

\begin{definition}[Quantum Fourier Transform]
Let $\mathcal{H}_{n}$ be an $n$-dimensional Hilbert space with orthonormal basis $\{\ket{0},\ldots,\ket{n-1}\}$. The \emph{$n$-dimensional quantum Fourier transform}, denoted $F_{n}$, is the linear operator whose action on the basis vector $\ket{j}$, $j \in \{0,1\ldots,n-1\}$, is given by
\begin{equation*}
\ket{j} \mapsto \dfrac{1}{\sqrt{n}} \cdot \sum_{k=0}^{n-1} e^{2\pi \sqrt{-1} (jkn^{-1})} \ket{k} \enspace.
\end{equation*}
\end{definition}

We recall that $F_{n}$ is a unitary operator. Also, we will denote by $\ket{0_{F_{n}}}$ the image of $\ket{0}$ under $F_{n}$, and it satisfies the following equation:
\begin{equation*}
\ket{0_{F_{n}}} = \dfrac{\ket{0} + \ket{1} + \cdots + \ket{n-1}}{\sqrt{n}} \enspace.
\end{equation*}
For more details on the quantum Fourier transform, see \cite[Ch. 5]{NC00}.

\parhead{Quantum proofs}
In an $\NUMP$-prover $\QMA$ protocol, the $\BQP$ verifier (Arthur) receives a classical input $x \in \{0,1\}^{*}$ together with $\NUMP$ quantum proofs $\ket{\BPSI^{(1)}},\ldots,\ket{\BPSI^{(\NUMP)}}$ sent by the $\NUMP$ provers (Merlins); the provers (and thus the quantum proofs they send) are promised to be \emph{unentangled}. The verifier will then decide whether to accept $x$ or not, based on all the quantum proofs he received.

\begin{definition}[Multi-Prover $\QMA$]
Fix a set of projectors $\mathcal{M}$, polynomially-bounded functions $\NUMP,\PRL \colon \mathbb{N} \to \mathbb{N}$, and arbitrary functions $s,c \colon \mathbb{N} \to [0,1]$.

A language $L \subseteq \{0,1\}^{*}$ is in $\QMA_{\PRL}^{\mathcal{M}}(\NUMP,c,s)$ if there exists a polynomial-time quantum algorithm $V_{L}$ restricted to performing measurements from $\mathcal{M}$ such that, for all inputs $x \in \{0,1\}^{n}$, the following conditions hold:
\begin{itemize}
  \item \emph{Completeness:} If $x \in L$, there exist $\NUMP(n)$ quantum proofs $\ket{\BPSI^{(1)}},\ldots,\ket{\BPSI^{(\NUMP(n))}}$, each a state of at most $\PRL(n)$ qubits, such that $V_{L}$ accepts with probability at least $c(n)$ on input $\ket{x}\ket{\BPSI^{(1)}} \cdots \ket{\BPSI^{(\NUMP(n))}}$; and
  \item \emph{Soundness:} If $x \not\in L$, for every $\NUMP(n)$ quantum proofs $\ket{\BPSI^{(1)}},\ldots,\ket{\BPSI^{(\NUMP(n))}}$, each a state of at most $\PRL(n)$ qubits, $V_{L}$ accepts with probability at most $s(n)$ on input $\ket{x}\ket{\BPSI^{(1)}} \cdots \ket{\BPSI^{(\NUMP(n))}}$.
\end{itemize}
The class $\QMA_{\PRL}(\NUMP,c,s)$ is defined to be the class $\QMA_{\PRL}^{\mathcal{M}}(\NUMP,c,s)$ where $\mathcal{M}$ is the set of all Hermitian operators. The class $\QMA(\NUMP,c,s)$ is defined to be the class $\QMA_{\poly(\cdot)}(\NUMP,c,s)$. The class $\QMA(\NUMP)$ is defined to be the class $\QMA(\NUMP,2/3,2/3)$.
\end{definition}

Note that any set of admissible Hermitian operators $\mathcal{M}$ induces a set of binary measurements, where each $M \in \mathcal{M}$ means ``accept'' and $I - M$ means ``reject''. For example, $\mathcal{M} = \BELL$ is the set of Bell measurements (non-adaptive, unentangled measurements),  $\mathcal{M} = \LOCC$ is the set of LOCC measurements (adaptive, unentangled measurements), and $\mathcal{M} = \SEP$ is the set of separable measurements (which includes the swap test and product test).

\section{Quasilinear-Time Has Sublinear Unentangled Quantum Proofs}
\label{sec:quasilinear-PCPs}

In this section, we show how a few simple observations suffice to generalize the known positive results on multi-prover $\QMA$ protocols for $\NP$ languages (i.e., \cite{BT09}, \cite{Bei10}, \cite{ABDFS09}, \cite{CD10}, and \cite{HM10}). Doing so allows us to exhibit a large class of problems that qualify as positive examples to \questionref{question:ABDFS09} raised by Aaronson et al. \cite{ABDFS09}.

The main observation is the fact that ``short'' PCPs exist not only for $\TSAT$ but, more generally, for every $\NTIME$ language:

\begin{claim}[Quasilinear PCPs for $\NTIME$ Languages]
\label{claim:quasilinear-PCPs}
Let $t \colon \mathbb{N} \to \mathbb{N}$ be any proper complexity function and let $L$ be a language in $\NTIME_{\RAM}(t)$. Then, there exist
\begin{itemize}[nolistsep]
  \item a \emph{size function} $S_{L} \colon \mathbb{N} \to \mathbb{N}$ with $S_{L}(n) \in \widetilde{O}(t(n))$,
  \item a \emph{density function} $D_{L} \colon \mathbb{N} \to \mathbb{N}$ with $D_{L}(n) \in\widetilde{O}(t(n))$,
  \item a \emph{color constant} $K_{L} \in \mathbb{N}$,
  \item a \emph{reduction complexity function} $C_{L} \colon \mathbb{N} \to \mathbb{N}$ with $C_{L} \in \poly(t(n))$,
  \item a $C_{L}$-time reduction $R_{L} \colon \{0,1\}^{*} \to \{0,1\}^{*}$ from $L$ to $\TCSP$,
  \item a \emph{gap constant} $\eta_{L} \in (0,1)$, and
  \item a \emph{regularity constant} $d_{L} \in \mathbb{N}$,
\end{itemize}
such that, for every instance $x \in \{0,1\}^{n}$, the following properties hold:
\begin{itemize}[nolistsep]
  \item \emph{Efficiency:} $R_{L}(x)$ is a $\TCSP(N,M,K)$ instance with $N = S_{L}(n)$, $M = D_{L}(n)$, and $K = K_{L}$; 
  \item \emph{Completeness:} if $x \in L$, then $R_{L}(x) \in \TCSP(N,M,K)$;
  \item \emph{Soundness:} if $x \not\in L$, then $R_{L}(x)$ is $(1-\eta_{L})$-satisfiable; and
  \item \emph{Regularity:} $R_{L}(x)$ is $d_{L}$-regular (with self-loops).
\end{itemize}
\end{claim}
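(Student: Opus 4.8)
The plan is to produce the desired constant-gap, constant-alphabet, regular $\TCSP$ instance by composing two results already in the PCP literature---the quasilinear-length PCPs of Ben-Sasson and Sudan~\cite{BS05} and the combinatorial gap amplification (together with its alphabet- and degree-reduction steps) of Dinur~\cite{Din07}---and then reshaping the output into the precise constraint-graph format demanded by the claim. The first step is to pass from the random-access model, in which $L$ is given, to the model on which the short-PCP machinery natively operates: a $t$-time non-deterministic $\RAM$ can be simulated by a non-deterministic multi-tape Turing machine with only polylogarithmic overhead, so that $\NTIME_{\RAM}(t) \subseteq \NTIME_{\MTM}(\widetilde{O}(t))$ (this is exactly the robustness of the nearly-linear-time class under model changes established in~\cite{GS89}). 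Since this simulation is itself computable in time $\poly(t(n))$, it contributes only a $\poly(t(n))$ term to the eventual reduction-complexity function $C_{L}$.

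Next I would apply the quasilinear PCP of~\cite{BS05} to the resulting $\widetilde{O}(t(n))$-time non-deterministic computation, obtaining a probabilistically checkable proof of length $\widetilde{O}(t(n))$ with perfect completeness and a \emph{constant} soundness gap (its query complexity is polylogarithmic rather than constant, which I will address in the reshaping step). I would then convert this into a two-variable constraint system: first fold the verifier's local checks into constraints, then reduce the (poly-logarithmic) arity and alphabet to constant size via Dinur-style composition with a constant-size assignment tester, and finally apply the standard arity-reduction gadget (one vertex per proof symbol, one vertex per local check, plus consistency constraints) to reach a genuine $\TCSP$ instance. Each of these transformations changes the gap, the alphabet size, and the number of constraints by at most a constant factor, so the output is a $\TCSP$ instance with $N = S_{L}(n)$ vertices and $M = D_{L}(n)$ edges, both in $\widetilde{O}(t(n))$, over an alphabet of constant size $K = K_{L}$, satisfiable when $x \in L$ and $(1-\eta_{L})$-satisfiable for a constant $\eta_{L}$ when $x \notin L$---giving the Efficiency, Completeness, and Soundness properties simultaneously.

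The step I expect to require the most care is maintaining quasilinearity end to end, and in particular sidestepping the polynomial blow-up that a naive invocation of gap amplification would incur. If one started from an unsatisfiable instance with only an inverse-polynomial gap $\Omega(1/M)$, the $\Theta(\log M)$ amplification rounds of~\cite{Din07}---each a constant-factor size blow-up---would multiply the size by $M^{\Omega(1)}$, yielding only a \emph{polynomial}-size PCP and destroying the whole point of the claim. The resolution is precisely that~\cite{BS05} already delivers a \emph{constant} starting gap, so only $O(1)$ composition/reduction steps (rather than $\Theta(\log M)$ amplification rounds) are needed to reach the target constant $\eta_{L}$, and each incurs only a constant-factor overhead; the total size therefore stays $\widetilde{O}(t(n))$. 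Finally, for the Regularity property I would invoke Dinur's expander-based degree-reduction (the constraint-graph preprocessing step), replacing each vertex by an expander cloud tied together with equality constraints to make the graph $d_{L}$-regular for a constant $d_{L}$, again at only a constant-factor cost in size and gap, and padding any residual degree discrepancy with self-loops (which the definition of $\TCSP$ permits). Accounting for the running time of every reduction in the chain yields $C_{L} \in \poly(t(n))$, completing the construction.
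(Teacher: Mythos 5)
Your first and last steps match the paper's proof: the pass from $\NTIME_{\RAM}(t)$ to $\NTIME_{\MTM}(\widetilde{O}(t))$ via \cite{GS89}, the invocation of the quasilinear-length PCPs of \cite{BS05}, and the use of Dinur's expander-based preprocessing for $d_{L}$-regularity with self-loops are all as in the paper. The genuine gap is in your middle step, and it is precisely the step you flagged as delicate: you claim that converting the polylog-query, constant-gap PCP of \cite{BS05} into a $2$-query, constant-alphabet $\TCSP$ costs ``at most a constant factor'' in the gap, so that gap amplification can be skipped entirely. This is false. The standard arity-reduction gadget (clause vertices plus consistency edges) divides the gap by roughly the query complexity, i.e.\ by a $\polylog$ factor, not a constant; and Dinur-style composition with an assignment tester has size blow-up and gap loss that depend on the alphabet size of the graph being composed, so applying it while the constraint vertices still carry $\polylog$ bits (alphabet $2^{\polylog(t(n))}$) does not incur constant-factor overhead --- the known constant-size assignment testers (Hadamard- or long-code-based) blow each such constraint up by a factor exponential, or worse, in the number of alphabet bits. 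Dinur's composition is affordable in her scheme exactly because it is applied only after powering, when the alphabet is a fixed constant. With only $O(1)$ constant-loss steps, as you propose, the output $\TCSP$ would have gap $1/\polylog(t(n))$, not the constant $\eta_{L}$ the claim requires.

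Your stated reason for avoiding amplification --- that $\Theta(\log M)$ doubling rounds would cause a polynomial blow-up --- is a miscalculation in both directions. One never starts from a $1/M$ gap here: Dinur \cite[Lemma 8.3]{Din07} shows that the \cite{BS05} PCP yields a constant-alphabet $\TCSP$ of quasilinear size with gap $1/\polylog(n)$, so recovering a constant gap via \cite[Theorem 1.5]{Din07} takes only $O(\log\log n)$ rounds, for a total size increase of $2^{O(\log\log n)} = \polylog(n)$, which preserves quasilinearity. This is exactly the paper's route (it is Dinur's own derivation of short PCPs, \cite[Theorem 8.1]{Din07}, observed to work for any $\NTIME_{\MTM}(t)$ language rather than just $\TSAT$). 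To repair your argument you must either reinstate these $O(\log\log n)$ amplification rounds after your arity/alphabet reduction, or replace your composition step with genuinely different machinery (robust-PCP composition in the style of Ben-Sasson et al., which reduces queries to a constant at quasilinear length while preserving a constant gap) --- but the latter requires robustness/decodability properties of the outer PCP that your proposal neither states nor verifies.
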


The claim is simply a statement about the short PCPs that are obtained from the works of Ben-Sasson and Sudan \cite{BS05} and Dinur \cite{Din07}, together with some simple observations about the generality of their results.

\begin{proof}
Ben-Sasson and Sudan proved \cite[Theorem 2.2]{BS05} that
\begin{equation*}
\NTIME_{\MTM}(t(n)) \subseteq \PCP_{1,\frac{1}{2}} \left[ \log\big(t(n) \polylog(t(n))\big) \,,\, \polylog\big(t(n)\big) \right] \enspace.
\end{equation*}
In order to construct ``short'' PCPs for $\TSAT$, Dinur:
\begin{itemize}
  \item[(i)] observes \cite[Lemma 8.3]{Din07} that the result of Ben-Sasson and Sudan implies that $\TSAT$ can be reduced to $\TCSP$ instances that satisfy all the properties of the claim, with the exception that the soundness gap is is only $1/\polylog(n)$; and
  \item[(ii)] then she applies her main technical result of gap amplification \cite[Theorem 1.5]{Din07} to bring the soundness gap to a constant $\eta$. 
\end{itemize}
Then, (i) and (ii) together easily imply ``short'' PCPs for $\TSAT$ \cite[Theorem 8.1]{Din07}.

We note that Dinur's first observation, $(i)$, only relies on the fact that $\TSAT \in \NTIME_{\MTM}(t(n))$ for some $t(n) \in \widetilde{O}(n)$, and a similar observation can be made for a general language $L \in \NTIME_{\MTM}(t(n))$, which, again combined with her gap amplification result, yields the claim for languages in $\NTIME_{\MTM}(t(n))$.

We choose to not state the claim for languages in $\NTIME_{\MTM}(t(n))$, because it is not as illuminating; it seems quite tedious (and difficult) to check whether a given language $L$ can be recognized in non-deterministic $t(n)$-time by some multi-tape Turing machine. Instead, we observe that, by using a result of Gurevich and Shelah \cite[Theorem 2]{GS89}, which implies that $\NTIME_{\RAM}(t) \subseteq \NTIME_{\MTM}( t'(n) )$ for some $t'(n) \in \widetilde{O}(t(n))$, we obtain the claim as stated;\footnote{The result of Gurevich and Shelah \cite[Theorem 2]{GS89} in fact states that $\NTIME_{\RAM}(n) \subseteq \NTIME_{\MTM}( \widetilde{O}(n) )$, but the proof can easily be extended to show that $\NTIME_{\RAM}(t) \subseteq \NTIME_{\MTM}( t'(n) )$ for some $t'(n) \in\widetilde{O}(t(n))$.} this way, the task of checking whether a given language is in $L$ is in $\NTIME_{\RAM}(t)$ is much simpler: one only needs to write ``pseudocode'' for the non-deterministic verifier, and prove that it halts in time $t(n)$.
\end{proof}

The $\TCSP$ instance guaranteed by \claimref{claim:quasilinear-PCPs} is already a ``nice'' instance for which multiple-prover $\QMA$ results have been proved. For example, a $\TCSP$ instance is the starting point of Blier and Tapp \cite{BT09} and Chen and Drucker \cite{CD10}, so that we obtain generic results for both protocols.\footnote{Blier and Tapp, though, do not exploit the constant soundness gap, so they simply start from the classical $\NP$-complete problem of graph $3$-colorability.} 

Other works, instead, such as \cite{ABDFS09} and \cite{Bei10} ``process the $\TCSP$ instance further'', in order to give it additional structure (that is exploited in their protocols). Thus, these additional processings also inherit the more general reduction guaranteed by \claimref{claim:quasilinear-PCPs} for all of the languages in $\NTIME_{\RAM}(t)$:

\begin{corollary}[Constant-Gap Boolean Formulae for $\NTIME$ Languages]
\label{for:constant-gap}
Let $t \colon \mathbb{N} \to \mathbb{N}$ be any proper complexity function and let $L$ be a language in $\NTIME_{\RAM}(t)$. Then,
\begin{itemize}

  \item[(i)] \emph{\textbf{Constant-Gap $\TSAT$ Formulae:}} there exist
  \begin{itemize}[nolistsep,leftmargin=13pt]
    \item a \emph{size function} $S_{L} \colon \mathbb{N} \to \mathbb{N}$ with $S_{L}(n) \in \widetilde{O}(t(n))$,
    \item a \emph{density function} $D_{L} \colon \mathbb{N} \to \mathbb{N}$ with $D_{L}(n) \in \widetilde{O}(t(n))$,
    \item a \emph{reduction complexity function} $C_{L} \colon \mathbb{N} \to \mathbb{N}$ with $C_{L} \in \poly(t(n))$,
    \item a $C_{L}$-time reduction $R_{L} \colon \{0,1\}^{*} \to \{0,1\}^{*}$ from $L$ to $\TSAT$, and
    \item a \emph{gap constant} $\eta_{L} \in (0,1)$,
  \end{itemize}
  such that, for every instance $x \in \{0,1\}^{n}$, the following properties hold:
  \begin{itemize}[nolistsep,leftmargin=13pt]
    \item \emph{Efficiency:} $R_{L}(x)$ is a $\TSAT$ instance with $N = S_{L}(n)$ variables and $M = D_{L}(n)$ clauses;
    \item \emph{Completeness:} if $x \in L$, then $R_{L}(x) \in \TSAT$; and
    \item \emph{Soundness:} if $x \not\in L$, then $R_{L}(x)$ is $(1-\eta_{L})$-satisfiable.
  \end{itemize}
  
  \item[(ii)] \emph{\textbf{Constant-Gap $\TOFSAT$ Formulae:}} there exist
  \begin{itemize}[nolistsep,leftmargin=13pt]
    \item a \emph{size function} $S_{L} \colon \mathbb{N} \to \mathbb{N}$ with $S_{L}(n) \in \widetilde{O}(t(n))$,
    \item a \emph{density function} $D_{L} \colon \mathbb{N} \to \mathbb{N}$ with $D_{L}(n) \in \widetilde{O}(t(n))$,
    \item a \emph{reduction complexity function} $C_{L} \colon \mathbb{N} \to \mathbb{N}$ with $C_{L} \in \poly(t(n))$,
    \item a $C_{L}$-time reduction $R_{L} \colon \{0,1\}^{*} \to \{0,1\}^{*}$ from $L$ to $\TOFSAT$, and
    \item a \emph{gap constant} $\eta_{L} \in (0,1)$,
    \item a \emph{balance constant} $b_{L} \in \mathbb{N}$,
  \end{itemize}
  such that, for every instance $x \in \{0,1\}^{n}$, the following properties hold:
  \begin{itemize}[nolistsep,leftmargin=13pt]
    \item \emph{Efficiency:} $R_{L}(x)$ is a $\TOFSAT$ instance with $N = S_{L}(n)$ variables and $M = D_{L}(n)$ clauses;
    \item \emph{Completeness:} if $x \in L$, then $R_{L}(x) \in \TOFSAT$;
    \item \emph{Soundness:} if $x \not\in L$, then $R_{L}(x)$ is $(1-\eta_{L})$-satisfiable;
    \item \emph{Balance:} each variable of $R_{L}(x)$ appears in at most $b_{L}$ clauses.
  \end{itemize}
  
\end{itemize}
\end{corollary}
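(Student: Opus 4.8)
The plan is to derive both parts by composing the quasilinear, gap-preserving reduction of \claimref{claim:quasilinear-PCPs} (from an arbitrary $L \in \NTIME_{\RAM}(t)$ to a constant-gap, regular $\TCSP$ instance) with standard gap-preserving gadget reductions from the PCP literature that carry a constant-alphabet $\TCSP$ first to $\TSAT$ and then to balanced $\TOFSAT$. The crucial observation --- exactly the point of the surrounding discussion --- is that these gadget reductions only ever need a \emph{constant-gap} $\TCSP$ as input, and that both the constant gap and the quasilinear size survive composition because every quantity involved blows up by at most a constant factor; here it is essential that the color constant $K = K_L \in O(1)$, so that $\log K \in O(1)$ as well.

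For part (i), I would first apply \claimref{claim:quasilinear-PCPs} to obtain a $\TCSP(N,M,K)$ instance $\CTRS = R_L(x)$ with $N, M \in \widetilde{O}(t(n))$, $K = K_L \in O(1)$, gap constant $\eta_L$, and $d_L$-regularity. I would then encode each vertex label in $\lceil \log_2 K \rceil \in O(1)$ Boolean variables and rewrite each edge predicate $R_e$ as a CNF over the $O(\log K)$ variables of its two endpoints; since $K$ is constant, this CNF has $O(1)$ clauses, each of width $O(\log K) = O(1)$. Converting each wide clause to width-$3$ clauses by the usual auxiliary-variable trick adds $O(1)$ new variables and clauses per edge, so the resulting $\TSAT$ instance has $\widetilde{O}(t(n))$ variables and clauses and is computable in $\poly(t(n))$ time, yielding the size functions $S_L, D_L \in \widetilde{O}(t(n))$ and reduction complexity $C_L \in \poly(t(n))$. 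Completeness is immediate; for soundness one checks that the reduction is gap-preserving, since violating a constraint of a $(1-\eta_L)$-satisfiable $\TCSP$ forces at least one of its associated $O(1)$ width-$3$ clauses to be false under \emph{every} setting of the auxiliary variables, so the $\TSAT$ instance is $(1-\eta_L')$-satisfiable for a constant $\eta_L'$ depending only on $\eta_L$, $K_L$, and the gadget sizes.

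For part (ii), I would continue from the constant-gap $\TSAT$ instance of part (i) and apply the gap-preserving gadget reduction to $\TOFSAT$ used by Aaronson et al.\ \cite{ABDFS09}: each $3$-clause is replaced by a constant-size gadget of $2$-out-of-$4$ clauses over its variables plus $O(1)$ fresh auxiliary variables, arranged so that a satisfying assignment extends to make exactly two of four literals true in each gadget clause, while a violated $3$-clause forces a constant fraction of its gadget clauses to fail. This again multiplies sizes by $O(1)$, preserving quasilinearity and the constant gap. The genuinely new feature is the balance condition (each variable in at most $b_L$ clauses): I would split each variable into one copy per occurrence and tie the copies together with equality constraints laid out on an explicit constant-degree expander, whose spectral gap guarantees that unequal copies violate a constant fraction of the equality constraints, so that degree reduction is itself gap-preserving; since constant-degree expanders on $m$ vertices are computable in $\widetilde{O}(m)$ time, the whole construction stays quasilinear. (I would also note that the $d_L$-regularity from the Claim already bounds the occurrence of the label variables, so expander replacement is only needed to tame the auxiliary variables introduced by the gadgets.)

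The step I expect to be the main obstacle is simultaneously securing all three guarantees --- constant gap, quasilinear size, and bounded occurrence --- in the passage to balanced $\TOFSAT$. The tension is that degree reduction is precisely what forces the balance condition, yet a naive copy-and-equate step could either blow up the size super-linearly or erode the gap; the resolution is to use explicit constant-degree expanders (so that the number of equality constraints per variable is $O(1)$ and the total size remains $\widetilde{O}(t(n))$) together with their spectral gap (so that equality enforcement costs only a constant factor in $\eta$). Verifying that the specific $2$-out-of-$4$ gadget of \cite{ABDFS09} is gap-preserving, and that expander replacement composes cleanly with it without the auxiliary variables of one gadget spoiling the balance bookkeeping of the next, is the part that requires care, although each ingredient is standard in the PCP literature.
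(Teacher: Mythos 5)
Your proposal is correct, and part (i) is essentially the paper's own argument: both reduce the constant-gap $\TCSP$ of \claimref{claim:quasilinear-PCPs} to $\TSAT$ by a per-constraint constant-size gadget (the paper routes this through Cook--Levin on each binary-alphabet constraint, you through a direct CNF encoding plus width reduction), with the identical gap-accounting that a violated constraint kills at least one of its $g \in O(1)$ clauses under every setting of the auxiliary variables. For part (ii), you follow the same pipeline but in a different order and with the ingredients unpacked: the paper, following the outline of \cite[Lemma 12]{ABDFS09}, first applies the occurrence-bounding reduction of Papadimitriou and Yannakakis \cite{PY91} (obtaining $b_{L}=29$) and \emph{then} the reduction of Khanna et al.\ \cite{KSTW01} from $\TSAT$ to $\TOFSAT$, crucially using that the latter preserves the bounded-occurrence property; you instead apply the $2$-out-of-$4$ gadget first and then re-derive the degree reduction by hand via copies tied together with equality constraints on an explicit constant-degree expander --- which is precisely the mechanism inside \cite{PY91}, so nothing is lost in the gap or size analysis. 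Your ordering does incur one obligation you do not discharge: after moving to $\TOFSAT$, the equality constraints of the expander replacement must themselves be expressed as $2$-out-of-$4$ clauses without breaking balance or the gap; this is doable (e.g., the clause on literals $(x, x, \bar{y}, \bar{y})$ has exactly two true literals iff $x=y$, and uses each variable a constant number of times), but it is exactly the bookkeeping the paper's ordering sidesteps by letting \cite{KSTW01} carry the balance property through as a black box. Both routes yield the stated constants; the paper's is shorter because it imports the two classical reductions, while yours makes the expander-based gap preservation explicit, at the cost of having to verify the equality-gadget detail above.
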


Of course, one could add more items to the above corollary, other than $\TSAT$ and $\TOFSAT$, if other languages that can be efficiently reduced to from $\TCSP$ are found to be useful. We chose to mention only $\TSAT$ because of its general importance and $\TOFSAT$ because it was successfully used by \cite{ABDFS09} and \cite{Bei10}.

The proof of the corollary was partly sketched, in the particular case of $t(n)=n$ in \cite[Lemma 12]{ABDFS09}. We give here the more general proof:

\begin{proof}[Proof of \corollaryref{for:constant-gap}]
To obtain (i), we argue as follows. To prove the first item, it suffices to convert the instance guaranteed by \claimref{claim:quasilinear-PCPs}, which is a $\TCSP$ instance over a constant-size alphabet, into a $\TSAT$ instance, in a way that preserves perfect completeness and degrades the soundness gap by at most a constant factor.

First, consider a $\TCSP$ instance over a constant-size alphabet. Observe that we can transform this into a $\CSP$ over a binary alphabet by allowing constraints to restrict multiple variables. As the original alphabet was of constant-size, this only increase the arity, number of variables, and number of constraints in the $\CSP$ by a constant factor. Further, the soundness gap is preserved.  

So consider now a constraint $C$ in the $\CSP$ over variables $\vec{x}$. By the Cook-Levin Theorem, there exist a $\TSAT$ formula $\varphi_{C}$ and additional variables $\vec{y}$ such that $C(\vec{x})$ if and only if there exists $\vec{y}$ such that $\varphi_C(\vec{x},\vec{y})$. Observe that the size of $\varphi_{C}$ is at most some constant $g$, because the original $\CSP$ is over a constant-size alphabet and has arity $2$. Define the output of this reduction to be the $\TSAT$ formula $\varphi:=\bigwedge_C \varphi_C$. 

We now analyze the properties of $\varphi$. First, observe that the number of clauses in $\varphi$ is at most $g$ times the number of constraints in the original $\CSP$, and the number of variables is also a constant-factor more than the number of variables in the original $\CSP$.  Further, if the original $\CSP$ was satisfiable, then so must be $\varphi_{C}$, so perfect completeness is preserved. To analyze soundness, suppose that the original $\CSP$ was at most $\delta$ satisfiable. Then, in any assignment to $\varphi$, at least $(1-\delta)\cdot E$ clauses must be unsatisfied, where $E$ is the number of constraints in the original $\CSP$. As there are at most $gE$ clauses in $\varphi$, this means that $\varphi$ can have at most a $(1-\frac{1-\delta}{g})$-fraction of satisfied clauses. Thus, there is still a constant soundness gap.

To obtain (ii), we first invoke (i) so to obtain a reduction to $\TSAT$, and then follow the outline of Aaronson et al. \cite[Lemma 12]{ABDFS09}. Specifically, the instance output by the reduction guaranteed by (i) can first be further modified using a reduction of Papadimitriou and Yannakakis \cite{PY91} from $\TSAT$ to $\TSAT$ that makes each variable appear in at most $b_{L}=29$ (in fact, exactly) clauses (and this reduction preserves the constant soundness gap); then, we apply a reduction of Khanna et al. \cite{KSTW01} (that preserves both the constant soundness gap and the balanced property of the formula) from $\TSAT$ to $\TOFSAT$. The reason that the outline of Aaronson et al. \cite[Lemma 12]{ABDFS09} also works in the general case considered in this corollary is that the number of variables and clauses increases only a by a constant through these two additional reductions.
\end{proof}

By combining the above results with \cite{ABDFS09}, we have now established \propositionref{proposition:sublinear}.

\section{Graph Coloring States}
\label{sec:graph-states}

Let $G=(V,E)$ be a graph with $N$ vertices and $M$ edges, and let $\Sigma$ be a color alphabet of size $K$. The graph $G$ and the color alphabet $\Sigma$ will be fixed throughout the rest of the paper.

We say that a quantum state $\ket{\BPSI}$ is a \emph{graph coloring state} (for $G$ and $\Sigma$) if it is a quantum state over a Hilbert space $\HILBERT = (\HILBERT_{2})^{\otimes \log N} \otimes (\HILBERT_{2})^{\otimes \log K}$. Thus, any graph coloring state $\ket{\BPSI}$ can be written as
\begin{equation*}
\ket{\BPSI} = \sum_{v=0}^{N-1} \alpha_{v} \ket{v} \sum_{j=0}^{K-1} \beta_{v,j} \ket{j} \enspace,
\end{equation*}
where $\sum_{v=0}^{N-1} \abs{\alpha_{v}}^{2}=1$ and $\sum_{j=0}^{K-1} \abs{\beta_{v,j}}^{2}=1$ for each $v \in \{0,\ldots,N-1\}$. Note that the definition of a graph coloring state is independent of the edge set $E$.  Such a state is intended to allow the provers to honestly encode a coloring $\chi:V\to\Sigma$ of the graph via the state \[\ket{\BPSI}=\frac{1}{\sqrt{N}}\sum_{v=0}^{N-1}\ket{v}\ket{\chi(v)}\;.\]  The challenge of using these states is to enforce the provers to act as in the honest case.  If they encode the coloring as above, we can recover/check it by measuring the state and recovering vertex/color pairs. If the coloring is invalid then with measurements of independent (identical) states we can observe invalidly colored edges.  However, dishonest provers could allow many colors for each vertex, or make it so some vertices have zero amplitudes.  Doing such things would fool the above coloring test.  Thus, the main challenge is to detect this dishonest case and reject it with good probability.

We believe that developing strong tools for \emph{quantum property testing} is essential for making improvement towards better multi-prover $\QMA$ protocols.\footnote{For example, we believe that a two-prover $\QMA_{\LOCC}$ protocol for $\TCSP(N,\widetilde{O}(N),O(1))$ with $\Omega(1/N)$ soundness gap and polylogarithmic proof length exists, but we do not know of one yet. Our improved soundness analysis of \cite{CD10} almost achieves that, and it seems that a somewhat smarter LOCC verifier should suffice. Developing a theory of quantum property testing should shed some light on how to design such a verifier.} As a first move in that direction, we give in the next subsection two lemmas for graph coloring states, which were implicitly used in both \cite{BT09} and \cite{CD10} with very different parameters, and present them in a generic form. After that, we summarize the tests for graph coloring states that have been used in previous protocols.

\subsection{Two Lemmas on Graph Coloring States}

Let us first introduce some simple notation: given any graph coloring state $\ket{\BPSI}$,
\begin{itemize}
  \item for $c \in (0,1]$, $R_{c}(\ket{\BPSI})$ is the subset of $V$ consisting of those vertices $v$ for which $\abs{\alpha_{v}}^{2} < c$;
  \item for $c \in (0,1]$, $S_{c}(\ket{\BPSI})$ is equal to $V - R_{c}(\ket{\BPSI})$;
  \item for $j=0,\dots,K-1$, $p_{j}(\ket{\BPSI})$ is equal to the probability of measuring $j$ in the color register of the quantum state $(I_{N} \otimes F_{K}) \ket{\BPSI}$; and
  \item for $j=0,\dots,K-1$, $\ket{\gamma(j)} = \sum_{v=0}^{N-1} \gamma_{v}(j) \ket{v}$ is the reduced quantum state obtained when we measure $j$ in the color register of $(I_{N} \otimes F_{K})\ket{\BPSI}$.
\end{itemize}

First, we prove that, as long as a color $j$ has a large-enough probability of being measured in the color register of $(I_{N} \otimes F_{K}) \ket{\BPSI}$, if a vertex $v$ has small amplitude then it will also have a small amplitude in the reduced state conditioned on measuring $j$.

\begin{lemma}[modified {\cite[Lemma 3]{CD10}}, which was implicit in {\cite[Lemma 3.7]{BT09}}]
\label{lemma:gen-uniformity}
Fix a vertex $v \in \{0,\ldots,N-1\}$, a color $j \in \{0,\ldots,K-1\}$, and two positive numbers $c_{1}$ and $c_{2}$. Then:
\begin{equation*}
\left(
p_{j}\big(\ket{\BPSI}\big) \geq \frac{1}{c_{1}} \text{ and }
\abs{\alpha_{v}}^{2} < \frac{1}{c_{2}N}
\right)
\longrightarrow
\left(
\Abs{\gamma_{v}(j)}^{2} < \frac{c_{1}}{c_{2}N}
\right)
\enspace.
\end{equation*}
\end{lemma}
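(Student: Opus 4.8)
The plan is to make everything explicit by computing $(I_N \ox F_K)\ket{\BPSI}$ in the computational basis and then reading off both $p_{j}(\ket{\BPSI})$ and the reduced amplitudes $\gamma_{v}(j)$ directly. First I would introduce the Fourier coefficients $\hat{\beta}_{v,k} := \frac{1}{\sqrt{K}}\sum_{j'=0}^{K-1} \beta_{v,j'}\, e^{2\pi\sqrt{-1}\,(j'k K^{-1})}$, so that applying $I_N \ox F_K$ to the color register of $\ket{\BPSI}$ yields $\sum_{v,k} \alpha_{v}\hat{\beta}_{v,k}\ket{v}\ket{k}$. From this expression the probability of observing color $j$ in the color register is $p_{j}(\ket{\BPSI}) = \sum_{v} \abs{\alpha_{v}}^{2}\abs{\hat{\beta}_{v,j}}^{2}$, and the (normalized) post-measurement state on the vertex register has amplitudes $\gamma_{v}(j) = \alpha_{v}\hat{\beta}_{v,j}/\sqrt{p_{j}(\ket{\BPSI})}$, so that $\abs{\gamma_{v}(j)}^{2} = \abs{\alpha_{v}}^{2}\abs{\hat{\beta}_{v,j}}^{2}/p_{j}(\ket{\BPSI})$.

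The key step is to observe that each Fourier coefficient is bounded, namely $\abs{\hat{\beta}_{v,j}}^{2} \leq 1$. This follows because $F_K$ is unitary and $(\beta_{v,0},\ldots,\beta_{v,K-1})$ is a unit vector (by the normalization $\sum_{j} \abs{\beta_{v,j}}^{2} = 1$ assumed of any graph coloring state), so its image $(\hat{\beta}_{v,0},\ldots,\hat{\beta}_{v,K-1})$ is again a unit vector and hence each of its coordinates has squared magnitude at most $1$.

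With this bound in hand the conclusion is immediate arithmetic: using $\abs{\hat{\beta}_{v,j}}^{2} \leq 1$, then the hypothesis $p_{j}(\ket{\BPSI}) \geq 1/c_{1}$ (so that $1/p_{j}(\ket{\BPSI}) \leq c_{1}$), and finally the hypothesis $\abs{\alpha_{v}}^{2} < 1/(c_{2}N)$, I obtain $\Abs{\gamma_{v}(j)}^{2} = \abs{\alpha_{v}}^{2}\abs{\hat{\beta}_{v,j}}^{2}/p_{j}(\ket{\BPSI}) \leq c_{1}\abs{\alpha_{v}}^{2} < c_{1}/(c_{2}N)$, which is exactly the desired inequality.

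There is no substantial obstacle here; the computation is short once the post-measurement amplitudes are written out correctly. The only point requiring care --- and really the heart of the lemma --- is the bound $\abs{\hat{\beta}_{v,j}}^{2} \leq 1$: it is the unitarity of the Fourier transform that prevents the color amplitude of any single vertex from concentrating, which is precisely what lets me transfer a smallness bound on $\abs{\alpha_{v}}^{2}$ through the measurement to a smallness bound on $\abs{\gamma_{v}(j)}^{2}$, up to the factor $c_{1}$ coming from conditioning on a color of probability at least $1/c_{1}$. I would also double-check the bookkeeping of the normalization factor $1/\sqrt{p_{j}(\ket{\BPSI})}$ in the definition of $\gamma_{v}(j)$, since that conditioning factor is exactly what produces the $c_{1}$ in the final bound.
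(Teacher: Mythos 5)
Your proof is correct and takes essentially the same route as the paper: both write out $(I_{N}\otimes F_{K})\ket{\BPSI}$ explicitly, identify the joint probability $p_{j}(\ket{\BPSI})\cdot\Abs{\gamma_{v}(j)}^{2}=\abs{\alpha_{v}}^{2}\abs{\hat{\beta}_{v,j}}^{2}$, and bound it by $\abs{\alpha_{v}}^{2}$ before dividing by $p_{j}(\ket{\BPSI})\geq 1/c_{1}$. The only cosmetic difference is that you justify $\abs{\hat{\beta}_{v,j}}^{2}\leq 1$ via unitarity of $F_{K}$ (norm preservation of the unit vector $(\beta_{v,0},\ldots,\beta_{v,K-1})$), whereas the paper derives the same bound by Cauchy--Schwarz on the Fourier sum; these are interchangeable one-line arguments, and yours is arguably the cleaner phrasing.
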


\begin{proof}
Let $\ket{\BCHI}$ be the quantum state obtained from $\ket{\BPSI}$ after performing the quantum Fourier transform on the color register of $\ket{\BPSI}$, i.e., 
\begin{align*}
     \ket{\BCHI}
=\,& (I_{N} \otimes F_{K}) \ket{\BPSI} \\
=\,& (I_{N} \otimes F_{K}) \sum_{v=0}^{N-1} \alpha_{v} \ket{v} \sum_{j=0}^{K-1} \beta_{v,j} \ket{j} \\
=\,& \sum_{v=0}^{N-1} \alpha_{v} \ket{v} \sum_{j=0}^{K-1} \beta_{v,j} \dfrac{1}{\sqrt{K}} \sum_{k=0}^{K-1} e^{\frac{2 \pi \sqrt{-1} jk}{K}} \ket{k} \\
=\,& \dfrac{1}{\sqrt{K}} \sum_{k=0}^{K-1} \left( \sum_{v=0}^{N-1} \alpha_{v} \left( \sum_{j=0}^{K-1} \beta_{v,j} e^{\frac{2 \pi \sqrt{-1} jk}{K}} \right) \ket{v} \right) \ket{k}
\enspace.
\end{align*}
For each $v \in \{0,\ldots,N-1\}$, let $P_{v,j}(\ket{\BPSI})$ be the probability that the color register of $\ket{\BCHI}$ is measured $j$ and the vertex register is measured $v$. Recalling that $\ket{\gamma(j)} = \sum_{v=0}^{N-1} \gamma_{v}(j) \ket{v}$ is the reduced quantum state when outcome $j$ occurs, we have that
\begin{equation*}
P_{v,j}(\ket{\BPSI}) = p_{j}(\ket{\BPSI}) \cdot \Abs{\gamma_{v}(j)}^{2} \enspace.
\end{equation*}
On the other hand, it is also the case that
\begin{align*}
         P_{v,j}(\ket{\BPSI})
=   \,&  \Abs{ \dfrac{\alpha_{v}}{\sqrt{K}} \sum_{j=0}^{K-1} \beta_{v,j} e^{\frac{2 \pi \sqrt{-1} jk}{K}} }^{2} \\
=   \,& \dfrac{\Abs{\alpha_{v}}^{2}}{K} \cdot \Abs{\sum_{j=0}^{K-1} \beta_{v,j} e^{\frac{2 \pi \sqrt{-1} jk}{K}}}^{2} \\
\leq\,& \dfrac{\Abs{\alpha_{v}}^{2}}{K} \cdot K \sum_{j=0}^{K-1} \Abs{\beta_{v,j} e^{\frac{2 \pi \sqrt{-1} jk}{K}}}^{2} \hspace{1cm}\text{(by Cauchy--Schwarz)} \\
=   \,& \Abs{\alpha_{v}}^{2}
\enspace.
\end{align*}
We deduce that $p_{j}(\ket{\BPSI}) \cdot \Abs{\gamma_{v}(j)}^{2} \leq \Abs{\alpha_{v}}^{2}$ or, equivalently, that
\begin{equation*}
\Abs{\gamma_{v}(j)}^{2} \leq \dfrac{\Abs{\alpha_{v}}^{2}}{p_{j}(\ket{\BPSI})} \enspace.
\end{equation*}
By assumption, the probability of measuring $j$ in the color register of $\ket{\BCHI}=(I_{N} \otimes F_{K})\ket{\BPSI}$, which is $p_{j}\big(\ket{\BPSI}\big)$, is at least $\frac{1}{c_{1}}$. Also by assumption, $\abs{\alpha_{v}}^{2} < \frac{1}{c_{2}N}$. Therefore,
\begin{equation*}
\Abs{\gamma_{v}(j)}^{2} \leq \dfrac{\Abs{\alpha_{v}}^{2}}{p_{j}(\ket{\BPSI})} < \dfrac{c_{1}}{c_{2}N} \enspace,
\end{equation*}
as desired.
\end{proof}

Next, we prove that if a quantum state has at least one amplitude that is ``far'' from uniform, then the probability of measuring any given outcome in the Fourier basis can be upper bounded.

\begin{lemma}
\label{lemma:vertex-and-Fourier}
Let $\ket{\gamma} = \sum_{w=0}^{N-1} \gamma_{w} \ket{w}$ be a quantum state. For every $v \in \{0,\ldots,N-1\}$, the probability of measuring $v$ in the (only) register of $F_{N}\ket{\gamma}$ is at most
\begin{equation*}
1 - \dfrac{1}{4} \left( \sum_{w=0}^{N-1} \Abs{ \Abs{\gamma_{w}}^{2} - \dfrac{1}{N} } \right)^{2} \enspace.
\end{equation*}
\end{lemma}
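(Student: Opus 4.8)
The plan is to reduce the statement to a purely classical inequality between the Bhattacharyya overlap of $\{\Abs{\gamma_w}^2\}$ with the uniform distribution and their $\ell_1$-distance. First I would write the target probability explicitly. Since $\braKet{v}{F_N}{w} = \frac{1}{\sqrt{N}} e^{2\pi\sqrt{-1}\,vw/N}$, the amplitude of $\ket{v}$ in $F_N\ket{\gamma}$ is $\braKet{v}{F_N}{\gamma} = \frac{1}{\sqrt{N}}\sum_{w=0}^{N-1} \gamma_{w}\, e^{2\pi\sqrt{-1}\,vw/N}$, so the triangle inequality (the phases of the $\gamma_w$ and of the Fourier kernel can only decrease the magnitude) gives, uniformly over $v$,
\[
\Abs{\braKet{v}{F_N}{\gamma}} \le \frac{1}{\sqrt{N}}\sum_{w=0}^{N-1}\Abs{\gamma_{w}}, \qquad\text{hence}\qquad \Pr[\text{measure } v] \le \frac{1}{N}\Big(\sum_{w=0}^{N-1}\Abs{\gamma_{w}}\Big)^{2}.
\]

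Second, I would recognize the right-hand side as a squared inner product. Writing $a_{w} := \Abs{\gamma_{w}}$ and $b_{w} := 1/\sqrt{N}$, both $(a_w)_w$ and $(b_w)_w$ are unit vectors in $\mathbb{R}^{N}$ (because $\sum_w \Abs{\gamma_w}^2 = 1$ and $\sum_w 1/N = 1$), and $\frac{1}{N}\big(\sum_w \Abs{\gamma_w}\big)^2 = \big(\sum_w a_w b_w\big)^2 =: \langle a,b\rangle^2$. It therefore suffices to show that $\langle a,b\rangle^{2} \le 1 - \frac14 D^{2}$, where $D := \sum_{w}\Abs{\Abs{\gamma_{w}}^{2} - \frac{1}{N}}$ is the quantity appearing in the statement.

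Third — and this is the crux — I would bound $D$ by Cauchy--Schwarz after the factorization $\Abs{\gamma_{w}}^{2} - \frac{1}{N} = a_{w}^{2} - b_{w}^{2} = (a_{w}-b_{w})(a_{w}+b_{w})$:
\[
D = \sum_{w}\Abs{a_{w}-b_{w}}\,(a_{w}+b_{w}) \le \sqrt{\sum_{w}(a_{w}-b_{w})^{2}}\cdot\sqrt{\sum_{w}(a_{w}+b_{w})^{2}}.
\]
Expanding using $\sum_w a_w^2 = \sum_w b_w^2 = 1$ gives $\sum_{w}(a_{w}-b_{w})^{2} = 2(1-\langle a,b\rangle)$ and $\sum_{w}(a_{w}+b_{w})^{2} = 2(1+\langle a,b\rangle)$, so $D \le 2\sqrt{1-\langle a,b\rangle^{2}}$. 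Squaring and rearranging yields $\langle a,b\rangle^{2} \le 1 - \frac14 D^{2}$, which chained with the first two steps proves the lemma.

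The only real obstacle is isolating the right Cauchy--Schwarz split $a_w^2 - b_w^2 = (a_w-b_w)(a_w+b_w)$ in the third step; the triangle-inequality bound and the inner-product rewriting are routine. It is worth noting that what this argument establishes is exactly the classical fidelity-versus-total-variation bound $\mathrm{TV} \le \sqrt{1-F^{2}}$ specialized to the uniform distribution, with $F = \langle a,b\rangle$ the Bhattacharyya coefficient of $\{\Abs{\gamma_w}^2\}$ against uniform and $D = 2\,\mathrm{TV}$.
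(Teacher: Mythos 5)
Your proof is correct and every step checks out, but it is structured differently from the paper's. The paper writes the target probability as $\Abs{\bra{\gamma}F_{N}^{\dagger}\ket{v}}^{2}$, observes that measuring $F_{N}^{\dagger}\ket{v}$ in the computational basis yields the uniform distribution, so that $\VARNORM{\DIST(\ket{\gamma})-\DIST(F_{N}^{\dagger}\ket{v})}=\frac{1}{2}\sum_{w}\Abs{\Abs{\gamma_{w}}^{2}-\frac{1}{N}}$, and then invokes as a black box the standard fact recalled in its preliminaries, $\VARNORM{\DIST(\ket{\phi})-\DIST(\ket{\psi})}\leq\sqrt{1-\abs{\braket{\phi}{\psi}}^{2}}$; rearranging gives the lemma in two lines. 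You instead first discard all phases via the triangle inequality, bounding the probability uniformly in $v$ by the squared Bhattacharyya coefficient $\langle a,b\rangle^{2}$ of $\{\Abs{\gamma_{w}}^{2}\}$ against the uniform distribution, and then prove the classical fidelity-versus-total-variation inequality from scratch through the factorization $a_{w}^{2}-b_{w}^{2}=(a_{w}-b_{w})(a_{w}+b_{w})$ and Cauchy--Schwarz, using $\sum_{w}(a_{w}\mp b_{w})^{2}=2(1\mp\langle a,b\rangle)$, all of which is algebraically sound. The trade-off: the paper's proof is shorter given the quantum toolkit, while yours is self-contained and entirely elementary (no trace distance needed), and it passes through the intermediate bound $\frac{1}{N}\big(\sum_{w}\Abs{\gamma_{w}}\big)^{2}$, which is independent of $v$ and, by the very inequality you prove, never worse than the stated bound. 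The phase-discarding first step is lossless for this lemma since the final bound depends only on the magnitudes $\Abs{\gamma_{w}}$; in essence you re-derive, on the classical shadows of the two states, the same pure-state trace-distance relation that the paper imports.
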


\begin{proof}
The probability of measuring $v$ in the (only) register of $F_{N}\ket{\gamma}$ is given by
\begin{equation*}
\Abs{\bra{\gamma}F_{N}^\dagger\ket{v}}^{2} \enspace.
\end{equation*}
Observe that
\begin{align*}
       \VARNORM{\DIST\big(\ket{\gamma}\big)-\DIST\big(F_{N}^\dagger\ket{v}\big)}
=   \, \dfrac{1}{2} \sum_{w=0}^{N-1} \Abs{ \Abs{\gamma_{w}}^{2} - \Abs{\dfrac{e^{-\frac{2\pi\sqrt{-1}wv}{N}}}{\sqrt{N}}}^{2} }
=   \, \dfrac{1}{2} \sum_{w=0}^{N-1} \Abs{ \Abs{\gamma_{w}}^{2} - \dfrac{1}{N} }
\enspace.
\end{align*}
Recalling that $\VARNORM{\DIST(\ket{\phi})-\DIST(\ket{\psi})} \leq \sqrt{1 - \abs{\braket{\phi}{\psi}}^{2}}$, we obtain that
\begin{equation*}
\Abs{\bra{\gamma}F_{N}^\dagger\ket{v}}^{2} \leq 1 - \dfrac{1}{4} \left( \sum_{w=0}^{N-1} \Abs{ \Abs{\gamma_{w}}^{2} - \dfrac{1}{N} } \right)^{2} \enspace,
\end{equation*}
as desired.
\end{proof}

\subsection{Summary of Tests for Graph Coloring States}

We give a brief summary and description of the tests that have been used successfully in protocols with graph coloring states. The first one is the \textbf{swap test}, which checks whether two states are close to each other:
\begin{itemize}
  \item[] $\SWAPTEST\big(\ket{\BPSI},\ket{\BPHI}\big) \equiv$
  \begin{enumerate}[nolistsep]
    \item Perform the swap test on the two quantum (graph) states $\ket{\psi}$ and $\ket{\phi}$.
    \item Accept if and only if the swap test accepts.
  \end{enumerate}
\end{itemize}
The swap test performs a superposition of swapping the two states, and not swapping the states.  By then combining these superpositions, the interference will leave a result proportional to the distance of the two original states. See \sectionref{sec:preliminaries} for the details and properties of the swap test. Another test that is often useful is the \textbf{uniformity test}:
\begin{itemize}
  \item[] $\UNITEST\big(\ket{\BPSI}\big) \equiv$
  \begin{enumerate}[nolistsep]
    \item Compute $\ket{\BPHI} = (F_{N} \otimes F_{K}) \ket{\BPSI}$.
    \item Measure the vertex and color register of $\ket{\BPHI}$ in the computational basis to get outcome $(v,j)$.
    \item If $j=0$ but $v \neq 0$, then reject.
    \item Accept.
  \end{enumerate}
\end{itemize}
The uniformity test seeks to ensure that the total amplitude of each vertex is large, assuming that the probability of measuring $j=0$ is large.  This is used in ensuring that a graph coloring state meaningfully assigns a color to each vertex in the graph.  A generalization of this test is the \textbf{conditional uniformity test}: for any $z \in [0,\kappa]$,
\begin{itemize}
  \item[] $\CONDUNITEST_{z}\big(\ket{\BPSI^{(1)}},\ldots,\ket{\BPSI^{(\NUMP)}}\big) \equiv$
  \begin{enumerate}[nolistsep]
    \item For $i=1,\ldots,\NUMP$, compute $\ket{\BPHI^{(i)}} = (F_{N} \otimes F_{K}) \ket{\BPSI^{(i)}}$.
    \item For $i=1,\ldots,\NUMP$, measure the vertex and color register of $\ket{\BPHI^{(i)}}$ in the computational basis to get outcome $(v_{i},j_{i})$.
    \item If $z  > \Abs{\{ i \in \{1,\ldots,\NUMP\} \,:\, j_{i}=0 \}}$, then reject.\label{cut}
    \item For $i=1,\ldots,\NUMP$, if $j_{i}=0$ but $v_{i} \neq 0$, then reject.
    \item Accept.
  \end{enumerate}
\end{itemize}
Intuitively, the conditional uniformity test also makes sure that a significant fraction of the graph coloring states are such that, when their color register is measured in the Fourier basis, the color $0$ has a not too small probability of occurring. Once this is ensured, the uniformity test ensures that vertices have near-uniform amplitudes, and thus are meaningfully colored. Finally, the \textbf{consistency test} with respect to a given $\TCSP$ instance $\CTRS = (G,\{R_{e}\}_{e \in E})$ is:
\begin{itemize}
  \item[] $\CONSTEST_{\CTRS}\big(\ket{\BPSI^{(1)}},\ldots,\ket{\BPSI^{(\NUMP)}}\big) \equiv$
  \begin{enumerate}[nolistsep]
    \item For $i=1,\ldots,\NUMP$, measure the graph coloring state $\ket{\BPSI^{(i)}}$ in the standard basis to get outcome $(v_{i},j_{i})$.
    \item If there exist distinct $i,i' \in \{1,\ldots,\NUMP\}$ such that $v_{i}=v_{i'}$ but $j_{i} \neq j_{i'}$, then reject.
    \item If there exist distinct $i,i' \in \{1,\ldots,\NUMP\}$ such that $(v_{i},v_{i'}) \in E$ but $R_{(v_{i},v_{i'})}(j_{i},j_{i'})=0$, reject. 
    \item Accept.
  \end{enumerate}
\end{itemize}
The consistency test just checks that the states meaningfully encode a solution to the $\TCSP$ instance, by ensuring that each vertex has a unique color, and no edge is violated.  This test is only meaningful with honest encodings of the graph coloring state, and we can perform other tests (such as the conditional uniformity test) to rule out dishonest encodings.

Throughout, we will denote by $\REJ(\cdot)$ the rejection probability of a given test; e.g., $\REJ\big(\SWAPTEST\big(\ket{\BPSI},\ket{\BPHI}\big)\big)$ denotes the rejection probability of the swap test on the two quantum states $\ket{\BPSI}$ and $\ket{\BPHI}$.

\section{An Improvement on the Soundness Analysis of \cite{CD10}}
\label{sec:CD10-improved-soundness}

In this section, we give the details for our tight soundness analysis of the two-prover $\QMA$ protocol of Chen and Drucker \cite{CD10}. Specifically, we prove:

\begin{proposition*}[\propositionref{proposition:improved:CD10}, restated]
The $\NUMP$-prover $\QMA$ protocol for $\TCSP(N,M,K)$ given by \algorithmref{alg:CD10} has completeness $1-e^{-\Omega(\NUMP)}$ and soundness $1-\Omega\left(\frac{\NUMP^{2}}{N+\NUMP^{2}}\right)$, assuming $K \in O(1)$; thus, for $\NUMP \in \Omega(\log N)$ and $\NUMP \in O(\sqrt{N})$, the soundness gap is $\Omega(\NUMP^{2}N^{-1})$. Moreover, the analysis of the soundness of the protocol cannot be improved, in the sense of \remarkref{remark:CD10-soundness-tight} below.
\end{proposition*}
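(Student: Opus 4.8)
The plan is to follow the architecture of the Chen--Drucker analysis but to replace their second-moment step by Cantelli's (one-sided Chebyshev) inequality, which is exactly what buys the smooth trade-off. Recall that the protocol of \algorithmref{alg:CD10} is a constant-probability mixture of the consistency test $\CONSTEST_{\CTRS}$ and the conditional uniformity test $\CONDUNITEST_{z}$ (with threshold $z = \Theta(\NUMP)$) applied to the $\NUMP$ received registers. For \emph{completeness}, I would first check that on the honest proofs $\ket{\BPSI^{(i)}} = \frac{1}{\sqrt N}\sum_{v}\ket{v}\ket{\chi(v)}$ the consistency test never rejects (the encoded $\chi$ satisfies $\CTRS$), and that under $(F_N\otimes F_K)$ the only measurement outcome carrying color $0$ is the one with vertex $0$, so Step~4 of $\CONDUNITEST$ never rejects either; since each honest register produces color $0$ with probability exactly $1/K$, a Chernoff bound shows that at least $z=\Theta(\NUMP)$ of the $\NUMP$ registers do so except with probability $e^{-\Omega(\NUMP)}$, giving completeness $1-e^{-\Omega(\NUMP)}$.

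For \emph{soundness}, assume the instance is not satisfiable; to extract an inverse-polynomial gap I take it to be $(1-\eta)$-satisfiable with $\eta=\Omega(1)$ and $d_{L}$-regular, as produced by \claimref{claim:quasilinear-PCPs}. I would split on how close the proofs are to honest uniform colorings. If they are \emph{far} from uniform, I use the two lemmas of \sectionref{sec:graph-states}: either too few registers yield color $0$ under $(F_N\otimes F_K)$, so the threshold Step~\ref{cut} of $\CONDUNITEST_{z}$ rejects; or enough do, but then \lemmaref{lemma:vertex-and-Fourier} applied to the color-$0$-conditioned vertex states $\ket{\gamma(0)}$ shows the probability of measuring vertex $0$ after $F_N$ is bounded away from $1$ by the $\ell_1$-deviation from uniform, so Step~4 rejects with probability $\Omega(1)$. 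Conversely, \lemmaref{lemma:gen-uniformity} lets me turn ``passes $\CONDUNITEST$ with high probability'' into the quantitative statement that the heavy-vertex set $S_{c}$ is small and the standard-basis distribution of each register is $\Omega(1)$-close to uniform over $V$ --- exactly the spread needed in the uniform case.

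In the uniform case I run the second-moment argument on the consistency test. Let $X$ be the number of unordered prover-pairs $(i,i')$ whose standard-basis outcomes $(v_i,j_i),(v_{i'},j_{i'})$ are caught by $\CONSTEST_{\CTRS}$ (same vertex/different color, or a violated edge). Using near-uniformity together with $(1-\eta)$-satisfiability (at least $\eta M$ of the $M=d_{L}N/2$ edges are violated by any coloring), each pair is caught with probability $\Omega(\eta d_{L}/N)=\Omega(1/N)$, so $\mu:=\mathbb{E}[X]=\binom{\NUMP}{2}\cdot\Omega(1/N)=\Omega(\NUMP^{2}/N)$. The crucial point is the second moment: because the proofs are \emph{unentangled}, the measurements of distinct registers are independent, so pair-indicators on disjoint pairs are independent and contribute at most $\mu^{2}$ to $\mathbb{E}[X^2]$; the diagonal contributes $\mu$; and two pairs sharing a single prover contribute, by $d_{L}$-regularity, at most $\NUMP^{3}(d_{L}/N)^{2}=O(\mu^{2}/\NUMP)=O(\mu^{2})$. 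Hence $\mathbb{E}[X^2]=O(\mu+\mu^{2})$.

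Finally I apply Cantelli's inequality in the form $\Pr[X>0]\geq(\mathbb{E}[X])^{2}/\mathbb{E}[X^2]$ --- this is where we beat the two-sided bound implicit in \cite{CD10}, which is vacuous once $\mu=o(1)$ --- obtaining
\[\REJ\big(\CONSTEST_{\CTRS}\big)\;\geq\;\Pr[X>0]\;\geq\;\Omega\!\left(\frac{\mu}{1+\mu}\right)\;=\;\Omega\!\left(\frac{\NUMP^{2}}{N+\NUMP^{2}}\right),\]
which combined with the far-from-uniform case yields soundness $1-\Omega(\NUMP^{2}/(N+\NUMP^{2}))$, and in particular a gap $\Omega(\NUMP^{2}N^{-1})$ once $\NUMP=O(\sqrt N)$. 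I expect the main obstacle to be the bookkeeping that reconciles the two regimes: choosing the uniformity threshold $c$ and the mixing probabilities so that the near-uniformity extractable from $\CONDUNITEST$ is strong enough to guarantee the $\Omega(1/N)$ per-pair catch probability, while the far-from-uniform rejection stays $\Omega(\NUMP^{2}/(N+\NUMP^{2}))$, so neither case dominates the loss. The tightness claim of \remarkref{remark:CD10-soundness-tight} should then follow by exhibiting a family of cheating unentangled proofs concentrated on a small uniform sub-instance, for which $X$ has mean $\Theta(\NUMP^{2}/N)$ and every test accepts except with probability $O(\NUMP^{2}/(N+\NUMP^{2}))$.
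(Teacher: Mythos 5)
Your proposal is correct and takes essentially the same route as the paper: completeness via Chernoff as in Chen--Drucker, the uniformity regime handled through the two graph-coloring-state lemmas of \sectionref{sec:graph-states}, and the key improvement being the replacement of Chen--Drucker's two-sided Chebyshev step by a one-sided bound --- your inequality $\Pr[X>0]\geq\mathbb{E}[X]^{2}/\mathbb{E}[X^{2}]$ is algebraically identical to the paper's application of Cantelli's inequality with $a=\mathbb{E}[V]$, since $\mathbb{E}[X^{2}]=\Var(X)+\mathbb{E}[X]^{2}$. The only cosmetic difference is that the paper imports the moment estimates $\mathbb{E}[V_{i,j}]\geq\veps/N$ and $\Var(V)=O_{\veps,d}(\NUMP^{2}/N+\NUMP^{3}/N^{2})$ directly from \cite{CD10} rather than re-deriving them as you sketch.
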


The proposition improves the status quo by giving a smooth trade-off between the number of provers $\NUMP$ and the soundness gap as a function of $\NUMP$, whereas the soundness analysis of \cite{CD10} only gave a soundness gap for $\NUMP \in \Theta(\sqrt{N})$.

\begin{algorithm}[h!]
\caption{Verifier of \cite{CD10}}
\label{alg:CD10}
\begin{flushleft}
\textbf{inputs:} a $\TCSP(N,M,K)$ instance $\CTRS = (G,\{R_{e}\}_{e \in E})$\\
\vspace{2mm}
\textbf{proofs:} $\NUMP$ unentangled graph coloring states $\ket{\BPSI^{(1)}},\ldots,\ket{\BPSI^{(\NUMP)}}$\\
\vspace{2mm}
\textbf{verifier:} draw $r \in \{1,2\}$ at random, and perform the $r$-th test below:
\begin{enumerate}[nolistsep]
   \item $\CONDUNITEST_{\frac{99}{100}\NUMP}\big(\ket{\BPSI^{(1)}},\ldots,\ket{\BPSI^{(\NUMP)}}\big)$
   \item $\CONSTEST_{\CTRS}\big(\ket{\BPSI^{(1)}},\ldots,\ket{\BPSI^{(\NUMP)}}\big)$
\end{enumerate}
\end{flushleft}  
\end{algorithm}

\begin{remark}[``Tightness'' of Our Analysis]
\label{remark:CD10-soundness-tight}
Consider a $\TCSP(N,M,K)$ instance $\CTRS = (G,\{R_{e}\}_{e \in E})$; suppose that $\CTRS$ is \emph{not} satisfiable, and suppose also that $\CTRS$ has constant soundness gap $\eta$. Hence, for any coloring $C \colon V \to \Sigma$, at least $\eta |E|$ of the edge constraints $\{R_{e}\}_{e \in E}$ are \emph{not} satisfied. So fix some coloring $C$.

Now suppose that the $\NUMP$ graph coloring states $\ket{\BPSI^{(1)}},\ldots,\ket{\BPSI^{(\NUMP)}}$ given to the verifier are all equal and indeed are a uniform superposition of all vertices with a unique color determined by $C$. If so, the test $\CONSTEST_{\CTRS}\big(\ket{\BPSI^{(1)}},\ldots,\ket{\BPSI^{(\NUMP)}}\big)$ rejects with probability $O(\frac{\NUMP^{2}}{N})$ by the Birthday Problem (indeed, we only have $\NUMP^{2}$ chances to see a particular edge in the constraint graph, and a constrained edge is seen with only probability $\Theta(N^{-1})$ because the graph is sparse). Thus, our analysis is ``tight'' in the sense that the assumptions we made could indeed really be the case, so one cannot hope to exhibit an even better soundness analysis that proves a soundness gap of $\omega(\NUMP^{2}/N)$.

Furthermore, if instead $\CTRS$ is satisfiable (and the verifier receives uniform and equal $\NUMP$ graph coloring states $\ket{\BPSI^{(1)}},\ldots,\ket{\BPSI^{(\NUMP)}}$ with a satisfying coloring), then completeness would be only $1 - e^{-\Theta(\NUMP)}$, due to the imperfect completeness of the conditional uniformity test.  This test has imperfect completeness due to Line \eqref{cut} of that test, that rejects whenever the number of 0's measured is below the threshold.  Due to natural variability, this can happen with non-zero probability even in the satisfiable case. Thus, we are forced to take $\NUMP \in \Omega(\log N)$ in order for there to be any inverse-polynomial soundness gap. (In other words, the protocol of \cite{CD10} has no soundness gap in the ``constant regime'' $\NUMP \in O(1)$; to breach the constant regime, it seems that one would have to strengthen the verifier with additional LOCC measurements to increase the soundness gap, or, at the very least, to endow the protocol with perfect completeness.)
\end{remark}

We now proceed to the proof of \propositionref{proposition:improved:CD10}, which follows closely the proof of Chen and Drucker \cite{CD10}. Throughout, we use notation for graph coloring states, which was introduced in \sectionref{sec:graph-states}.

Observe that the completeness in \propositionref{proposition:improved:CD10} follows exactly as in the analysis of
Chen and Drucker.  Thus, it remains to examine the soundness. Chen and Drucker \cite[Lemma 3]{CD10} gave sufficient conditions for an arbitrary graph coloring state $\ket{\BPSI}$ to be accepted by the uniformity test $\UNITEST(\ket{\BPSI})$ with constant probability. We first show how to use the generic lemmas of \sectionref{sec:graph-states} to prove the same result (and these same lemmas are used with very different parameters in our soundness analysis of the protocol of Blier and Tapp \cite{BT09} in \sectionref{sec:BT09-improved-soundness}). In particular, this next lemma says that, assuming the 0 coloring is measured with good probability, we can reject the dishonest case of when the provers assign too small amplitude to many vertices.

\begin{lemma}
Fix $\veps \in [0,1]$. If $p_{0}(\ket{\BPSI}) \geq \frac{1}{4K}$ and $\Abs{R_{\frac{1}{8KN}}\big(\ket{\BPSI}\big)} \geq \veps N$, then $\REJ\big(\UNITEST\big(\ket{\BPSI}\big)\big) \ge \frac{\veps^{2}}{64K}$.
\end{lemma}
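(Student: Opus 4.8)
The plan is to express the rejection probability of $\UNITEST(\ket{\BPSI})$ directly in terms of the conditional vertex state $\ket{\gamma(0)}$ and then apply the two lemmas of \sectionref{sec:graph-states} in sequence. First I would write $(F_{N} \otimes F_{K}) = (F_{N} \otimes I_{K})(I_{N} \otimes F_{K})$ and observe that the color register of $\ket{\BPHI} = (F_{N} \otimes F_{K})\ket{\BPSI}$ carries exactly the same distribution as that of $\ket{\BCHI} = (I_{N} \otimes F_{K})\ket{\BPSI}$, since $F_{N}$ acts only on the vertex register. Hence the outcome $j=0$ occurs with probability $p_{0}(\ket{\BPSI})$, and conditioned on $j=0$ the vertex register of $\ket{\BPHI}$ is in the state $F_{N}\ket{\gamma(0)}$. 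Since $\UNITEST$ rejects precisely when $j=0$ and $v \neq 0$,
\begin{equation*}
\REJ\big(\UNITEST(\ket{\BPSI})\big) = p_{0}(\ket{\BPSI}) \cdot \Big( 1 - \Abs{\bra{0}F_{N}\ket{\gamma(0)}}^{2} \Big) \enspace.
\end{equation*}

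Next I would lower-bound the conditional rejection factor using \lemmaref{lemma:vertex-and-Fourier} applied to $\ket{\gamma} = \ket{\gamma(0)}$ and the outcome $v=0$, which gives $\Abs{\bra{0}F_{N}\ket{\gamma(0)}}^{2} \leq 1 - \tfrac{1}{4} \big(\sum_{w} \bigAbs{\Abs{\gamma_{w}(0)}^{2} - \tfrac{1}{N}}\big)^{2}$. Combined with the hypothesis $p_{0}(\ket{\BPSI}) \geq \tfrac{1}{4K}$, this yields
\begin{equation*}
\REJ\big(\UNITEST(\ket{\BPSI})\big) \geq \frac{1}{16K} \left( \sum_{w=0}^{N-1} \bigAbs{ \Abs{\gamma_{w}(0)}^{2} - \frac{1}{N} } \right)^{2} \enspace,
\end{equation*}
so it remains only to show that $\ket{\gamma(0)}$ is far from uniform in $\ell_{1}$.

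For this last step I would invoke \lemmaref{lemma:gen-uniformity} with $c_{1} = 4K$ and $c_{2} = 8K$: for every vertex $v \in R_{\frac{1}{8KN}}(\ket{\BPSI})$ we have $\Abs{\alpha_{v}}^{2} < \tfrac{1}{8KN}$ while $p_{0}(\ket{\BPSI}) \geq \tfrac{1}{4K}$, so the lemma gives $\Abs{\gamma_{v}(0)}^{2} < \tfrac{c_{1}}{c_{2}N} = \tfrac{1}{2N}$. Each such vertex therefore contributes $\tfrac{1}{N} - \Abs{\gamma_{v}(0)}^{2} > \tfrac{1}{2N}$ to the (nonnegative-termed) deviation sum, and restricting the sum to the at least $\veps N$ vertices of $R_{\frac{1}{8KN}}(\ket{\BPSI})$ gives $\sum_{w} \bigAbs{\Abs{\gamma_{w}(0)}^{2} - \tfrac{1}{N}} > \veps N \cdot \tfrac{1}{2N} = \tfrac{\veps}{2}$. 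Plugging this into the previous display yields $\REJ \geq \tfrac{1}{16K}(\veps/2)^{2} = \tfrac{\veps^{2}}{64K}$, as claimed.

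The argument is essentially a clean chaining of the two graph-coloring-state lemmas, so I do not anticipate a serious obstacle; the only points requiring care are the bookkeeping in the first step --- correctly commuting $F_{N}$ past the color measurement and identifying $F_{N}\ket{\gamma(0)}$ as the conditional post-measurement vertex state --- and the matching of constants, namely choosing the thresholds so that the bound $\Abs{\gamma_{v}(0)}^{2} < \tfrac{1}{2N}$ sits safely below the uniform value $\tfrac{1}{N}$, which is exactly what makes each deviation term at least $\tfrac{1}{2N}$.
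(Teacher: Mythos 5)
Your proof is correct and takes essentially the same route as the paper's: both invoke \lemmaref{lemma:gen-uniformity} with $j=0$, $c_{1}=4K$, $c_{2}=8K$ to get $\Abs{\gamma_{v}(0)}^{2} < \frac{1}{2N}$ on the at least $\veps N$ vertices of $R_{\frac{1}{8KN}}(\ket{\BPSI})$, lower-bound the $\ell_{1}$ deviation from uniform by $\veps/2$, and feed this into \lemmaref{lemma:vertex-and-Fourier} before multiplying by $p_{0}(\ket{\BPSI}) \geq \frac{1}{4K}$. Your explicit factorization $\REJ\big(\UNITEST(\ket{\BPSI})\big) = p_{0}(\ket{\BPSI})\bigl(1-\Abs{\bra{0}F_{N}\ket{\gamma(0)}}^{2}\bigr)$, justified by commuting $F_{N}\otimes I_{K}$ past the color measurement, merely makes precise a bookkeeping step the paper leaves implicit in its final line.
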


\begin{proof}
For each $v \in R_{\frac{1}{8KN}}\big(\ket{\BPSI}\big)$, by invoking \lemmaref{lemma:gen-uniformity} with $j=0$, $c_{1}=4K$, and $c_{2}=8K$, we get that $\Abs{\gamma_{v}(0)}^{2} < \frac{1}{2N}$. In particular, we deduce that
\begin{align*}
\sum_{v=0}^{N-1} \Abs{ \Abs{\gamma_{v}(0)}^{2} - \dfrac{1}{N} }
\geq \sum_{v \in R_{\frac{1}{8KN}}(\ket{\BPSI})} \Abs{ \Abs{\gamma_{v}(0)}^{2} - \dfrac{1}{N} }
\geq \Abs{R_{\frac{1}{8KN}}(\ket{\BPSI})} \cdot \dfrac{1}{2N}
\geq \veps N \cdot \dfrac{1}{2N}
=    \dfrac{\veps}{2}
\enspace.
\end{align*}
Next, by invoking \lemmaref{lemma:vertex-and-Fourier} with $\ket{\gamma}=\ket{\gamma(0)}$, we get that the probability of measuring 0\ in the (only) register of $F_{N}\ket{\gamma(0)}$ is at most
\begin{equation*}
1 - \dfrac{1}{4} \left( \sum_{v=0}^{N-1} \Abs{ \Abs{\gamma_{v}(0)}^{2} - \dfrac{1}{N} } \right)^{2} \enspace.
\end{equation*}
We deduce that the probability of measuring 0\ in the (only) register of $F_{N}\ket{\gamma(0)}$ is at most $1-\frac{\veps^{2}}{16}$.  Thus, the probability of measuring $j=0$ and $v\ne 0$ is at least $\frac{\veps^2}{16}\cdot \frac{1}{4K}=\frac{\veps^2}{64K}$ as desired.
\end{proof} 

The above result shows that for graph coloring states with constant probability of measuring the 0 color, we reject with good probability if there are many vertices with small amplitudes.  In the case when 0 is not measured with such probability, nothing can be said.  Thus, Chen and Drucker argue that amongst the different graph coloring states, we can detect if very few of them have a good probability of measuring 0.  This can simply be done by measuring said states and comparing the number of zeroes measured and the relevant threshold value.  Thus, the remaining case to analyze is when there are many states with good probability of measuring 0, and each state has few vertices with small amplitude.  They give a reduction (with some loss in the constants) to a slightly simpler normal form of this case, which they then analyze.  We present a slightly better analysis of this normal form.

\begin{lemma}[modified {\cite[Lemma 4]{CD10}}]
Let $G=(V,E)$ be a $d$-regular graph (possibly with self-loops) with $N$ vertices, $M$ edges, and $d>1$. Let $\CTRS=(G,\{R_{e}\}_{e})$ be a $\TCSP$ on the graph $G$ with color alphabet $\Sigma$, and suppose that $\CTRS$ is $(1-\eta)$-unsatisfiable. Let $D_{1},\ldots,D_{\NUMP}$ be independent distributions on $V \times \Sigma$, where $(v_{i},c_{i})$ denotes the output of $D_{i}$.  

Suppose that for each $i \in \{1,\ldots,\NUMP\}$ there exists $S_{i} \subseteq V$ with $|S_{i}| \geq (1-\veps) N$ such that $v_{i}$ is uniformly distributed over $S_{i}$, and $\veps<\eta/20$. Then, when sampling $(v_i,c_i)$ from $D_i$ for all $i$, there is a probability of at least $\Omega_{\veps,d}(\frac{\NUMP^{2}}{N+\NUMP^{2}})$ such that there exists an $i<j$ with: either $e=(v_{i},v_{j})$ is an edge of $G$ and $R_{e}(c_{i},c_{j})=0$, or $v_{i}=v_{j}$ and $c_{i} \neq c_{j}$.
\end{lemma}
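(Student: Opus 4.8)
The plan is to apply a one-sided Chebyshev (Cantelli) inequality to the number of sample-pairs caught by the consistency test, so that a lower bound on the expectation converts into a lower bound on the probability of catching \emph{at least one} violation in \emph{every} regime of $\NUMP$ (rather than only $\NUMP \in \Omega(\sqrt N)$, which is all the plain second-moment bound of \cite{CD10} gives). Concretely, for each pair $i<j$ let $X_{ij}\in\{0,1\}$ indicate the ``catch'' event that either $(v_i,v_j)\in E$ with $R_{(v_i,v_j)}(c_i,c_j)=0$, or $v_i=v_j$ with $c_i\neq c_j$, and set $X=\sum_{i<j}X_{ij}$. The target probability is exactly $\Pr[X\geq 1]$, and since $X\geq 0$ is integer-valued, Cantelli's inequality with $a=\mathbb{E}[X]$ gives $\Pr[X\geq 1]\geq \mathbb{E}[X]^2/(\Var[X]+\mathbb{E}[X]^2)$. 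It therefore suffices to prove $\mathbb{E}[X]=\Omega(\NUMP^2/N)$ and $\Var[X]=O(d\NUMP^2/N+d^2\NUMP^3/N^2)$; substituting these and using $\NUMP\leq \NUMP^2\leq N+\NUMP^2$ to absorb the cross term then yields the claimed $\Omega_{\veps,d}(\NUMP^2/(N+\NUMP^2))$.

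The main obstacle is the per-pair bound $\Pr[X_{ij}=1]=\Omega(1/N)$, where the unsatisfiability hypothesis must enter and where the randomness of the colors and the non-identical distributions across registers complicate matters. I would pass to \emph{plurality colorings}: for each $i$ let $\chi_i(v)$ be a most-likely color of $D_i$ conditioned on $v_i=v$, so $\Pr[c_i=\chi_i(v)\mid v_i=v]\geq 1/K$. Fixing $i<j$, I would then run a dichotomy on $S_i\cap S_j$ (of size $\geq(1-2\veps)N$) with threshold $\delta=\eta/10$. If $\chi_i,\chi_j$ disagree on at least $\delta N$ of these vertices, then, summing over those vertices the probability $\geq 1/(|S_i|\,|S_j|\,K^2)\geq 1/(N^2K^2)$ of landing both samples on one of them with plurality colors, the same-vertex/different-color branch fires with probability $\Omega(\delta/(NK^2))=\Omega(1/N)$. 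Otherwise $\chi_i=\chi_j$ off a set of size $<\delta N$, so viewing $\chi_i$ as a global coloring, $(1-\eta)$-unsatisfiability forces at least $\eta M=\Theta(\eta Nd)$ violated edges; discarding edges incident to the $\leq(2\veps+\delta)N$ ``bad'' vertices (outside $S_i$, outside $S_j$, or in the disagreement set) removes at most $(2\veps+\delta)Nd$ of them, and since $\veps<\eta/20$ and $\delta=\eta/10$ give $\eta/2-2\veps-\delta\geq 3\eta/10>0$, a count $\Omega(\eta Nd)$ of violated edges survives with both endpoints in $S_i\cap S_j$ where $\chi_i=\chi_j$ (so the violation transfers to the plurality colors $c_i=\chi_i(v_i)$, $c_j=\chi_j(v_j)=\chi_i(v_j)$). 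Each such edge is caught with probability $\geq 1/(N^2K^2)$, giving $\Omega(\eta d/(NK^2))=\Omega(1/N)$. Either way $\Pr[X_{ij}=1]=\Omega(1/N)$, hence $\mathbb{E}[X]\geq \binom{\NUMP}{2}\cdot\Omega(1/N)=\Omega(\NUMP^2/N)$.

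The variance bound is routine. Because the registers are independent, $\mathrm{Cov}(X_{ij},X_{kl})=0$ whenever $\{i,j\}\cap\{k,l\}=\emptyset$. The $O(\NUMP^2)$ diagonal terms contribute $\sum_{i<j}\Var(X_{ij})\leq\mathbb{E}[X]$, which, using the matching upper bound $\Pr[X_{ij}=1]\leq(1+d)/((1-\veps)N)$, is $O(d\NUMP^2/N)$. For the $O(\NUMP^3)$ terms sharing exactly one index, conditioning on the shared sample makes the two catch-events independent, each of conditional probability $O(d/N)$, so each such covariance is $O(d^2/N^2)$ and they total $O(d^2\NUMP^3/N^2)$. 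Adding the two estimates gives the stated variance bound, and the Cantelli computation above then finishes the proof; the constants hidden in $\Omega_{\veps,d}(\cdot)$ also depend on the constant instance parameters $\eta$ and $K$.
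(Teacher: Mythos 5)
Your proof is correct, and its skeleton is exactly the paper's: define pairwise catch indicators, sum them, and replace the plain second-moment bound of \cite{CD10} by Cantelli's inequality with $a=\mathbb{E}[X]$, so that $\Pr[X\geq 1]\geq \mathbb{E}[X]^{2}/(\Var(X)+\mathbb{E}[X]^{2})$, after which the stated moment bounds give $\Omega_{\veps,d}\bigl(\NUMP^{2}/(N+\NUMP^{2})\bigr)$. The one genuine difference is scope: the paper treats the two moment estimates as black boxes imported from Chen and Drucker's analysis --- namely (i) $\mathbb{E}[V_{i,j}]=\Omega(1/N)$ and (ii) $\Var(V)=O_{\veps,d}(\NUMP^{2}/N+\NUMP^{3}/N^{2})$ --- whereas you re-derive both from scratch. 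Your derivations are sound: the plurality-coloring dichotomy (disagreement on $\geq(\eta/10)N$ vertices of $S_{i}\cap S_{j}$ feeds the same-vertex branch; otherwise near-agreement lets the $(1-\eta)$-unsatisfiability survive the deletion of edges touching the $\leq(2\veps+\eta/10)N$ bad vertices, since $\veps<\eta/20$ leaves a margin of $3\eta/10$) correctly handles the arbitrary vertex--color correlations within each $D_{i}$, and your covariance split (zero for disjoint index pairs, $O(d/N)$ per pair for the diagonal, conditional independence giving $O(d^{2}/N^{2})$ for triples sharing one index) matches the structure of the bound the paper cites. What your route buys is a self-contained proof that makes explicit where the unsatisfiability hypothesis and the condition $\veps<\eta/20$ enter; what it costs is only the (correctly acknowledged) point that your hidden constants also depend on $\eta$ and $K$, which is consistent with the surrounding assumptions ($\eta$ a constant gap, $K\in O(1)$) and with the implicit dependence in the facts the paper imports.
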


\begin{proof}
We follow the proof of Chen and Drucker \cite{CD10}. For $i,j \in \{1,\dots,\NUMP\}$, define $V_{i,j}$ to be an indicator for the event that either $e=(v_{i},v_{j})$ is an edge of $G$ and  $R_{e}(c_{i},c_{j})=0$, or $v_{i}=v_{j}$ and $c_{i} \neq c_{j}$. Denote $V=\sum_{i=1}^{\NUMP-1} \sum_{j=i+1}^{\NUMP} V_{i,j}$. Observe that the result follows from bounding $\Pr[V=0]$. To bound this probability, we use Cantelli's inequality (also known as the one-sided Chebyschev inequality, cf \cite{Ros84}): for a random variable $X$ and $a>0$, $\Pr[X\le \mathbb{E}[X]-a]\le\frac{\Var(X)}{\Var(X)+a^{2}}$. Thus, taking $X=V$ and $a=\mathbb{E}(V)$, and using the fact that $V$ is a non-negative random variable, we have
\begin{equation*}
     \Pr[V=0]
\leq \frac{\Var(V)}{\Var(V)+\mathbb{E}[V]^{2}}
=    1 - \frac{1}{\frac{\Var(V)}{\mathbb{E}[V]^{2}}+1}
\enspace.
\end{equation*}
The result will then follow from an upper bound on $\Var(V)$ and a lower bound on $\mathbb{E}[V]^{2}$.

We now invoke the following facts from the analysis of \cite{CD10}:
\begin{enumerate}[label=(\roman{*})] 
  \item $\mathbb{E}[V_{i,j}] \geq \veps/N$, and
  \item $\Var(V) = O_{\veps,d}(\NUMP^{2}/N+\NUMP^{3}/N^{2})$.
\end{enumerate}
Hence, the upper bound for $\Var(V)$ is already given. As for the lower bound on $\mathbb{E}[V]^{2}$: by linearity of expectation and (i) above, we see that $\mathbb{E}[V] = \binom{\NUMP}{2}\mathbb{E}[V_{i,j}] = \Omega_{\veps}(\NUMP^{2}/N)$; thus, $\mathbb{E}[V]^{2} \geq \Omega_{\veps}(\NUMP^{4}/N^{2})$. Therefore,
\begin{equation*}
     \dfrac{\Var(V)}{\mathbb{E}[V]^{2}}
\leq O_{\veps,d}\left(\frac{\NUMP^{2}/N+\NUMP^{3}/N^{2}}{\NUMP^{4}/N^{2}} \right)
\leq O_{\veps,d}\left(\frac{N+\NUMP}{\NUMP^{2}} \right)
\enspace.
\end{equation*}
Combining with the above, we conclude that
\begin{align*}
      \Pr \big[ V=0 \big]
\leq 1 - \frac{1}{\frac{\Var(V)}{\mathbb{E}[V]^{2}}+1}
\leq 1 - \frac{1}{O_{\veps,d}\left(\frac{N+\NUMP}{\NUMP^{2}}\right)+1}
\leq 1 - \Omega_{\veps,d}\left(\frac{\NUMP^{2}}{N+\NUMP^{2}}\right)
\enspace,
\end{align*}
where the big-$O$ and big-$\Omega$ notation hide constants depending on $\veps$ and $d$.  Thus, we obtain the desired lower bound on $\Pr[V>0]$.
\end{proof}

The above lemma, combined with the rest of Chen and Drucker's analysis, readily yield \propositionref{proposition:improved:CD10}.  That is, \cite{CD10} use the conditional uniformity test to rule out dishonest provers presenting malformed graph coloring states, and then use an analysis of the above type to analyze the case of dishonest provers presenting invalid colorings.  With the improved analysis, we can analyze the protocol in a larger parameter regime, giving the claim.

\section{An Improvement on the Soundness Analysis of \cite{BT09}}
\label{sec:BT09-improved-soundness}

In this section, we give the details for our tight soundness analysis of the two-prover $\QMA$ protocol of Blier and Tapp \cite{BT09}. Specifically, we prove:

\begin{proposition*}[\propositionref{proposition:improved:BT09}, restated]
The two-prover $\QMA$ protocol for $\TCSP(N,M,K)$ given in \algorithmref{alg:BT09} has (perfect completeness and) soundness $1-\Omega(N^{-2})$, assuming $K \in O(1)$. Moreover, the analysis of the soundness of the protocol cannot be improved, in the sense of \remarkref{remark:BT09-soundness-tight} below.
\end{proposition*}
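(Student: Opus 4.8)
The plan is to transport the template of the Chen--Drucker soundness analysis above to the two-prover setting, but to exploit the \emph{quantitative} sensitivity of the uniformity test rather than only its constant-fraction version. Fix an unsatisfiable $\TCSP(N,M,K)$ instance $\CTRS$ with $K\in O(1)$ (and, by \claimref{claim:quasilinear-PCPs}, a sparse $d$-regular graph, so $M=O(N)$), and let $\ket{\BPSI^{(1)}},\ket{\BPSI^{(2)}}$ be arbitrary graph coloring states submitted to the verifier, with $\ket{\BPSI^{(i)}}=\sum_v \alpha^{(i)}_v\ket{v}\sum_c\beta^{(i)}_{v,c}\ket{c}$. I would show that one of the verifier's tests (a swap test, the uniformity test $\UNITEST$, and the consistency test $\CONSTEST_{\CTRS}$), each chosen with constant probability, rejects with probability $\Omega(N^{-2})$; since the tests are chosen with constant probability, this yields soundness $1-\Omega(N^{-2})$. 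The whole point is to \emph{balance} the case analysis so that every branch contributes $\Omega(N^{-2})$, which is exactly where the original Blier--Tapp argument was lossy, compounding constant factors down to $N^{-6}$.

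The key sub-lemma I would prove is that a \emph{single} under-weighted vertex already forces $\Omega(N^{-2})$ rejection. Suppose $p_0(\ket{\BPSI^{(i)}})\ge\frac1{4K}$ and some vertex $a$ has $\Abs{\alpha^{(i)}_a}^2<\frac1{8KN}$. Applying \lemmaref{lemma:gen-uniformity} with $j=0$, $c_1=4K$, $c_2=8K$ gives $\Abs{\gamma_a(0)}^2<\frac1{2N}$, so $\sum_v\bigAbs{\Abs{\gamma_v(0)}^2-\frac1N}\ge\frac1{2N}$; feeding this into \lemmaref{lemma:vertex-and-Fourier} shows the Fourier vertex outcome equals $0$ with probability at most $1-\frac1{16N^2}$, whence $\REJ(\UNITEST(\ket{\BPSI^{(i)}}))\ge p_0\cdot\frac1{16N^2}\ge\frac1{64KN^2}$. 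Contrapositively, if $\UNITEST$ rejects with probability $o(N^{-2})$ while $p_0\ge\frac1{4K}$, then \emph{every} vertex carries amplitude $\Abs{\alpha^{(i)}_v}^2\ge\frac1{8KN}$. This pointwise form, absent in \cite{BT09}, is what lets a cheater who ``zeroes out'' even one problematic vertex be caught at the $N^{-2}$ scale.

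Next I would dispose of the color register. If some proof has $p_0<\frac1{4K}$, then its color registers are non-concentrated on a set of vertices of total weight $\ge\frac12$ (a color concentrated to within a small constant contributes $\ge\frac1{2K}$ to $p_0$), so sampling a common vertex from both proofs yields two \emph{different} colors with constant probability; summing $\sum_v\Abs{\alpha_v}^4\ge\frac1N$ over this weight-$\frac12$ set makes the first check of $\CONSTEST$ fire with probability $\Omega(N^{-1})$. So assume both proofs have $p_0\ge\frac1{4K}$ and both pass $\UNITEST$ (otherwise it already rejects at $\Omega(N^{-2})$), so that all amplitudes are $\ge\frac1{8KN}$. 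Let $\chi_i(v)=\arg\max_c\Abs{\beta^{(i)}_{v,c}}^2$ be the dominant coloring of proof $i$ (each dominant color sampled with probability $\ge\frac1K$). If $\chi_1,\chi_2$ disagree at some vertex $v$, the first check of $\CONSTEST$ fires with probability $\ge(\frac1{8KN})^2(\frac1K)^2=\Omega(N^{-2})$; and if $\chi_1\equiv\chi_2=:\chi$, then unsatisfiability of $\CTRS$ forces an edge $(a,b)$ with $R_{(a,b)}(\chi(a),\chi(b))=0$, so the second check of $\CONSTEST$ fires (sampling $a$ from one proof and $b$ from the other, each with its dominant color) with probability $\ge(\frac1{8KN}\cdot\frac1K)^2=\Omega(N^{-2})$. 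Every branch meets the $\Omega(N^{-2})$ target, and, pleasingly, the swap test turns out not to be needed to close the soundness argument.

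The step I expect to be the main obstacle is the bookkeeping in the second paragraph: propagating ``$\UNITEST$ rejects rarely'' into the clean pointwise conclusion ``all $\Abs{\alpha_v}^2\ge\frac1{8KN}$'' through \lemmaref{lemma:gen-uniformity} and \lemmaref{lemma:vertex-and-Fourier} while keeping every constant independent of $N$ (any slack that scales with $N$ would destroy the $N^{-2}$ gap), and doing so simultaneously for both proofs and for the small-$p_0$ dichotomy without double-counting rejection probability. Once this normal form is in place, the consistency analysis is essentially a birthday-style computation. Finally, tightness (\remarkref{remark:BT09-soundness-tight}) would be witnessed by two identical, perfectly uniform proofs encoding an optimal coloring of a sparse instance with a single unsatisfiable edge: there the only rejection comes from the second check of $\CONSTEST$ sampling exactly that edge's two endpoints, which happens with probability $\Theta(N^{-2})$, so the gap cannot be improved by any sharper analysis.
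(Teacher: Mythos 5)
Your first sub-lemma is exactly the paper's \lemmaref{lemma:all-vertices} (same constants, same chaining of \lemmaref{lemma:gen-uniformity} into \lemmaref{lemma:vertex-and-Fourier}), and your concluding birthday computation and tightness example match the paper's \lemmaref{lemma:soundness-conclusion} and \remarkref{remark:BT09-soundness-tight}. The genuine gap is in your second paragraph, where you discharge the hypothesis $p_{0} \geq \frac{1}{4K}$ and declare the swap test unnecessary. In the small-$p_{0}$ branch you lower-bound the color-consistency rejection probability by $\sum_{v}\abs{\alpha_{v}}^{4}$, which silently identifies the two proofs' vertex distributions: the actual collision probability is $\sum_{v}\abs{\alpha^{(1)}_{v}}^{2}\abs{\alpha^{(2)}_{v}}^{2}$, and nothing in $\UNITEST$ or $\CONSTEST_{\CTRS}$ forces the two proofs to overlap. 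Concretely, partition $V$ into a set $A$ carrying an unsatisfiable sub-instance and a set $B$ whose constraints (including any $A$--$B$ crossing constraints) are satisfiable by suitable color pairs; let $\ket{\BPSI^{(1)}}$ be supported on $A$ and $\ket{\BPSI^{(2)}}$ on $B$, with each vertex's color register in a phase-cancelling superposition such as $\frac{1}{\sqrt{2}}(\ket{c_{1}}-\ket{c_{2}})$ chosen so that all sampled cross pairs satisfy their constraints. Then $p_{0}=0$ for both proofs, so $\UNITEST$ never rejects (it only rejects when the color-Fourier outcome is $0$); the color-consistency subtest never fires because the supports are disjoint; and the edge-consistency subtest never fires by the choice of colors. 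Every test in your analysis passes with probability $1$ on an unsatisfiable instance, while the swap test — the one test you discarded — rejects these orthogonal proofs with probability $\frac{1}{2}$. So any analysis omitting it cannot be sound, and your dichotomy on $p_{0}$ collapses for non-identical proofs; the same hole recurs in your $\chi_{1}\neq\chi_{2}$ subcase, which needs both proofs pointwise-heavy at a common vertex and hence needs the small-$p_{0}$ elimination for both proofs.

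This is precisely the role the swap test plays in the paper's route: \lemmaref{lemma:S-has-unique-coloring} and \lemmaref{lemma:get-0-color-often} feed the bound $\REJ\big(\SWAPTEST\big(\ket{\BPSI^{(1)}},\ket{\BPSI^{(2)}}\big)\big)\leq\delta$ through \lemmaref{lemma:swap-test-and-measurements}, so that $\VARNORM{\DIST(\ket{\BPSI^{(1)}})-\DIST(\ket{\BPSI^{(2)}})}\leq\sqrt{2\delta}$ converts the color-consistency probability — a quantity coupling the two proofs — into a single-proof quantity; this simultaneously yields the dominant-color structure on all heavy vertices \emph{and} the bound $p_{j}\geq\frac{1}{4K}$ for every color $j$, after which the uniformity and edge-consistency steps proceed as you have them. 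To repair your proposal you would need to reinstate the swap test and prove an analogue of \lemmaref{lemma:S-has-unique-coloring} before branching on $p_{0}$; at that point your argument essentially becomes the paper's.
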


\begin{algorithm}[h!]
\caption{Verifier of \cite{BT09}}
\label{alg:BT09}
\begin{flushleft}
\textbf{inputs:} a $\TCSP(N,M,K)$ instance $\CTRS = (G,\{R_{e}\}_{e \in E})$\\
\vspace{2mm}
\textbf{proofs:} two unentangled graph coloring states $\ket{\BPSI^{(1)}}$ and $\ket{\BPSI^{(2)}}$\\
\vspace{2mm}
\textbf{verifier:} draw $r \in \{1,2,3\}$ at random, and perform the $r$-th test below:
\begin{enumerate}[nolistsep]
   \item $\SWAPTEST\big(\ket{\BPSI^{(1)}},\ket{\BPSI^{(2)}}\big)$
   \item $\CONSTEST_{\CTRS}\big(\ket{\BPSI^{(1)}},\ket{\BPSI^{(2)}}\big)$
   \item $\UNITEST\big(\ket{\BPSI^{(1)}}\big) \,\wedge\, \UNITEST\big(\ket{\BPSI^{(2)}}\big)$
\end{enumerate}
\end{flushleft}  
\end{algorithm}

\begin{remark}[``Tightness'' of Our Analysis]
\label{remark:BT09-soundness-tight}
Consider a $\TCSP(N,M,K)$ instance $\CTRS = (G,\{R_{e}\}_{e \in E})$; suppose that $\CTRS$ is \emph{not} satisfiable, and suppose further that there exists a coloring of the vertices $C \colon V \to \Sigma$ for which there exists exactly one edge $(\tilde{v},\tilde{w}) \in E$ such that $R_{(\tilde{v},\tilde{w})}(C(\tilde{v}),C(\tilde{w}))=0$.

Now suppose that the two graph coloring states $\ket{\BPSI^{(1)}}$ and $\ket{\BPSI^{(2)}}$ given to the verifier are equal and that they indeed are a uniform superposition of all vertices, colored with $C$. If so, both the first test (i.e., the swap test) and the third test (i.e., the two uniformity tests) accept with probability $1$; however, the second test (i.e., the consistency test) accepts with probability that is exactly $1 - N^{-2}$.

In other words, our analysis is ``tight'' in the sense that the assumptions we made could indeed really be the case, thus implying that one cannot hope to exhibit an even better soundness analysis that proves a soundness of $1 - \omega(N^{-2})$.
\end{remark}

We now proceed to the proof of \propositionref{proposition:improved:BT09}, which we tackle in several lemmas, whose overall structure follows the approach taken by \cite{BT09}. Throughout, we use notation for graph coloring states introduced in \sectionref{sec:graph-states}. Also, given a $\TCSP$ instance $\CTRS$, $\COLCONSTEST_{\CTRS}\big(\ket{\BPSI^{(1)}},\ket{\BPSI^{(2)}}\big)$ denotes only the color consistency subtest of the test $\CONSTEST_{\CTRS}\big(\ket{\BPSI^{(1)}},\ket{\BPSI^{(2)}}\big)$ and $\EDGECONSTEST_{\CTRS}\big(\ket{\BPSI^{(1)}},\ket{\BPSI^{(2)}}\big)$ denotes only the edge consistency subtest of the test $\CONSTEST_{\CTRS}\big(\ket{\BPSI^{(1)}},\ket{\BPSI^{(2)}}\big)$.

First, we show that, as long as two graph coloring states $\ket{\BPSI^{(1)}}$ and $\ket{\BPSI^{(2)}}$ are ``close enough'' and the colors of the vertices are ``consistent enough'', then a definite color can be chosen for vertices with large enough amplitude. (Indeed, if vertices with large enough amplitude were to be colored very inconsistently, then we would be able to catch them, through the second test.)

\begin{lemma}[modified {\cite[Lemma 3.4]{BT09}}]
\label{lemma:S-has-unique-coloring}
Fix any $\TCSP$ instance $\CTRS$. Define
\begin{equation*}
\delta = \dfrac{1}{2 \cdot 1600^{2}K^{4}N^{2}} \enspace\text{ and }\enspace
\mu    = \dfrac{1}{1600^{2}K^{4}N^{2}} \enspace.
\end{equation*}
Suppose that:
\begin{itemize}
  \item[(i)] $\REJ\big(\SWAPTEST\big(\ket{\BPSI^{(1)}},\ket{\BPSI^{(2)}}\big)\big) \leq \delta$, and
  \item[(ii)] $\REJ\big(\COLCONSTEST_{\CTRS}\big(\ket{\BPSI^{(1)}},\ket{\BPSI^{(2)}}\big)\big) \leq \mu$.
\end{itemize}
Then, for every vertex $v \in S_{\frac{1}{8N}}\big(\ket{\BPSI^{(1)}}\big)$, there exists a (unique) $j \in \{0,\ldots,K-1\}$ such that $\abs{\beta_{v,j}^{(1)}}^{2} \geq \frac{100K-1}{100K}$. (And, similarly, for every vertex $v \in S_{\frac{1}{8N}}\big(\ket{\BPSI^{(2)}}\big)$, there exists a (unique) $j \in \{0,\ldots,K-1\}$ such that $\abs{\beta_{v,j}^{(2)}}^{2} \geq \frac{100K-1}{100K}$.)
\end{lemma}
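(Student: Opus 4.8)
The plan is to show that the two hypotheses together force, for each heavy vertex $v$ (i.e. $v \in S_{\frac{1}{8N}}(\ket{\BPSI^{(1)}})$, meaning $\abs{\alpha_v^{(1)}}^2 \geq \frac{1}{8N}$), the color distribution $\{\abs{\beta_{v,j}^{(1)}}^2\}_j$ to be concentrated on a single color $j$. The intuition is that heavy vertices contribute non-negligibly to the measurement statistics of both the consistency test and the swap test, so if their colors were spread out, one of those tests would reject with probability exceeding the allowed budgets $\mu$ or $\delta$.

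**First I would** analyze the color consistency subtest directly. When we measure both states in the standard basis, the probability of drawing vertex $v$ from state $1$ and vertex $v$ from state $2$ with \emph{different} colors is a contribution to $\REJ(\COLCONSTEST)$. Concretely, the rejection probability is at least $\sum_v \abs{\alpha_v^{(1)}}^2 \abs{\alpha_v^{(2)}}^2 \sum_{j \neq j'} \abs{\beta_{v,j}^{(1)}}^2 \abs{\beta_{v,j'}^{(2)}}^2$, which bounds the ``self-inconsistency'' of each heavy vertex. But this alone only controls inconsistency \emph{between} the two states; to pin down a single color within state $1$, I would use the swap-test hypothesis (i) to transfer information between the states. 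By \lemmaref{lemma:swap-test-and-measurements}, condition (i) gives $\VARNORM{\DIST(\ket{\BPSI^{(1)}}) - \DIST(\ket{\BPSI^{(2)}})} \leq \sqrt{2\delta}$, so the two states induce nearly identical measurement distributions over $(v,j)$ pairs. This lets me argue that the color distribution of $v$ in state $2$ is close to that in state $1$, so that the cross-consistency bound effectively becomes a self-consistency bound for state $1$ alone.

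**The key chain of inequalities** I expect to carry out is: for a fixed heavy vertex $v$, let $q = \max_j \abs{\beta_{v,j}^{(1)}}^2$ be the largest color weight. If $q < \frac{100K-1}{100K}$, then the ``collision probability'' $\sum_j \abs{\beta_{v,j}^{(1)}}^2 \abs{\beta_{v,j}^{(1)}}^2$ is bounded away from $1$, forcing a constant ($\gtrsim \frac{1}{K}$) fraction of mass onto the off-diagonal $j \neq j'$ terms. Since $\abs{\alpha_v^{(1)}}^2 \geq \frac{1}{8N}$, the resulting contribution to the rejection probability scales like $\frac{1}{N^2}$ times a constant depending on $K$. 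The constants $\delta$ and $\mu$ are chosen precisely as $\Theta(1/(K^4 N^2))$ so that the budget is smaller than this forced contribution, yielding a contradiction and hence $q \geq \frac{100K-1}{100K}$. Uniqueness of $j$ is then immediate, since two colors each exceeding $\frac{100K-1}{100K}$ cannot coexist when $K \geq 1$ (their weights would sum past $1$).

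**The main obstacle** I anticipate is cleanly combining the swap-test closeness with the color-consistency bound without losing too much in the constants, since the statistical-distance bound $\sqrt{2\delta}$ is only at the level of the \emph{full} distribution, whereas I need a per-vertex, per-color transfer. I would handle this by carefully partitioning the rejection-probability sum over vertices and colors, using the triangle inequality to replace state-$2$ color weights by state-$1$ color weights up to an error controlled by $\sqrt{2\delta}$, and then verifying that the specific numerical choices of $\delta$ and $\mu$ (with their $1600^2 K^4$ factors) leave enough slack. The by-symmetry claim for state $2$ follows by swapping the roles of the two states throughout, so I would only write out the argument for state $1$ and remark that the other case is identical.
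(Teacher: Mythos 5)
Your proposal is correct and follows essentially the same route as the paper's proof: assume for contradiction that some heavy vertex $v \in S_{\frac{1}{8N}}\big(\ket{\BPSI^{(1)}}\big)$ has no dominant color, lower-bound $\REJ\big(\COLCONSTEST_{\CTRS}\big)$ by that vertex's contribution, and use \lemmaref{lemma:swap-test-and-measurements} per outcome (total variation distance bounds each single outcome's probability difference) to replace the state-$2$ joint weights $\abs{\alpha_{v}^{(2)}\beta_{v,j'}^{(2)}}^{2}$ by $\abs{\alpha_{v}^{(1)}\beta_{v,j'}^{(1)}}^{2} - \sqrt{2\delta}$, contradicting the budget $\mu$ --- exactly the paper's chain of inequalities. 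The only cosmetic difference is that you bound the off-diagonal color mass via $1 - \max_{j}\abs{\beta_{v,j}^{(1)}}^{2} \geq \frac{1}{100K}$, whereas the paper pigeonholes two colors $j_{1} \neq j_{2}$ with $\abs{\beta_{v,j_{1}}^{(1)}}^{2},\abs{\beta_{v,j_{2}}^{(1)}}^{2} > \frac{1}{100K^{2}}$; both comfortably clear the chosen constants.
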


\begin{proof}
First, if such a $j$ exists, it is unique, because $\frac{100K-1}{100K} > \frac{1}{2}$. Next suppose for the sake of contradiction that there exists some vertex $\tilde{v} \in S_{\frac{1}{8N}}\big(\ket{\BPSI^{(1)}}\big)$ for which there is no such $j$, so that there exist distinct $j_{1},j_{2} \in \{0,\ldots,K-1\}$ such that $\abs{\beta_{\tilde{v},j_{1}}^{(1)}}^{2},\abs{\beta_{\tilde{v},j_{2}}^{(1)}}^{2} > \frac{1}{100K^{2}}$.\footnote{Indeed, from $\abs{\beta_{\tilde{v},0}^{(1)}}^{2},\ldots,\abs{\beta_{\tilde{v},K-1}^{(1)}}^{2} < \frac{100K-1}{100K}$ and $\sum_{j=0}^{K-1} \abs{\beta_{\tilde{v},j}^{(1)}}^{2} = 1$, we deduce that there exists some $j_{1} \in \{0,\ldots,K-1\}$ such that $\abs{\beta_{\tilde{v},j_{1}}^{(1)}}^{2} \geq \frac{1}{K}$ and, from $\abs{\beta_{\tilde{v},0}^{(1)}}^{2},\ldots,\abs{\beta_{\tilde{v},K-1}^{(1)}}^{2} < \frac{100K-1}{100K}$ and $\sum_{j \neq j_{1}} \abs{\beta_{\tilde{v},j}^{(1)}}^{2} > \frac{1}{100K}$, we deduce that there exists some $j_{2} \in \{0,\ldots,K-1\}-\{j_{1}\}$ such that $\abs{\beta_{\tilde{v},j_{2}}^{(1)}}^{2} \geq \frac{1}{100K(K-1)} > \frac{1}{100K^{2}}$. Overall, $\abs{\beta_{\tilde{v},j_{1}}^{(1)}}^{2},\abs{\beta_{\tilde{v},j_{2}}^{(1)}}^{2} > \frac{1}{100K^{2}}$.} Then, the probability that the color-consistency test rejects the two graph coloring states $\ket{\BPSI^{(1)}}$ and $\ket{\BPSI^{(2)}}$, is
\begin{align*}
      &\sum_{v=0}^{N-1} \sum_{j=0}^{K-1} \sum_{j' \neq j} \Abs{\alpha_{v}^{(1)} \beta_{v,j}^{(1)}}^{2} \cdot \Abs{\alpha_{v}^{(2)} \beta_{v,j'}^{(2)}}^{2} \\
\geq\,& \sum_{v=0}^{N-1} \sum_{j=0}^{K-1} \sum_{j' \neq j} \Abs{\alpha_{v}^{(1)} \beta_{v,j}^{(1)}}^{2} \cdot \left( \Abs{\alpha_{v}^{(1)} \beta_{v,j'}^{(1)}}^{2} - \sqrt{2\delta} \right) \hspace{1cm} \text{(by \lemmaref{lemma:swap-test-and-measurements})} \\
\geq\,& \sum_{v \in S_{\frac{1}{8N}}(\ket{\BPSI^{(1)}})} \sum_{j=0}^{K-1} \sum_{j' \neq j} \dfrac{\Abs{\beta_{v,j}^{(1)}}^{2}}{8N} \cdot \left( \dfrac{\Abs{\beta_{v,j'}^{(1)}}^{2}}{8N} - \sqrt{2\delta} \right) \\
\geq\,& \sum_{j=0}^{K-1} \sum_{j' \neq j} \dfrac{\Abs{\beta_{\tilde{v},j}^{(1)}}^{2}}{8N} \cdot \left( \dfrac{\Abs{\beta_{\tilde{v},j'}^{(1)}}^{2}}{8N} - \sqrt{2\delta} \right) \\
\geq\,& \dfrac{\Abs{\beta_{\tilde{v},j_{1}}^{(1)}}^{2}}{8N} \cdot \left( \dfrac{\Abs{\beta_{\tilde{v},j_{2}}^{(1)}}^{2}}{8N} - \sqrt{2\delta} \right) \\
>\,   & \dfrac{1}{800K^{2}N} \cdot \left( \dfrac{1}{800K^{2}N} - \sqrt{2\delta} \right) \hspace{1cm} \text{(observe the strict inequality)}\\
\geq\,& \dfrac{1}{800K^{2}N} \cdot \left( \dfrac{1}{800K^{2}N} - \sqrt{2 \cdot \dfrac{1}{2\cdot 1600^{2}K^{4}N^{2}}} \right) \\
\geq\,& \dfrac{1}{1600^{2}K^{4}N^{2}}=\mu
\enspace,
\end{align*}
which contradicts the assumption that $\REJ\big(\COLCONSTEST_{\CTRS}\big(\ket{\BPSI^{(1)}},\ket{\BPSI^{(2)}}\big)\big) \leq \mu$. An analogous argument holds for $\ket{\BPSI^{(2)}}$.
\end{proof}

Next, we show that, under the same assumptions as \lemmaref{lemma:S-has-unique-coloring}, the probability of measuring $j$ in the color register of $(I_{N} \otimes F_{K})\ket{\BPSI^{(1)}}$ is at least $\frac{1}{4K}$ for every color $j \in \{0,\ldots,K-1\}$.

\begin{lemma}[modified {\cite[Lemma 3.5]{BT09}}]
\label{lemma:get-0-color-often}
Fix any $\TCSP$ instance $\CTRS$. Define
\begin{equation*}
\delta = \dfrac{1}{2 \cdot 1600^{2}K^{4}N^{2}} \enspace\text{ and }\enspace
\mu    = \dfrac{1}{1600^{2}K^{4}N^{2}} \enspace.
\end{equation*}
Suppose that:
\begin{itemize}
  \item[(i)] $\REJ\big(\SWAPTEST\big(\ket{\BPSI^{(1)}},\ket{\BPSI^{(2)}}\big)\big) \leq \delta$, and
  \item[(ii)] $\REJ\big(\COLCONSTEST_{\CTRS}\big(\ket{\BPSI^{(1)}},\ket{\BPSI^{(2)}}\big)\big) \leq \mu$.
\end{itemize}
Then, $p_{j}\big(\ket{\BPSI^{(1)}}\big) \geq \frac{1}{4K}$ for every $j \in \{0,\ldots,K-1\}$. (And, similarly, $p_{j}\big(\ket{\BPSI^{(2)}}\big) \geq \frac{1}{4K}$ for every $j \in \{0,\ldots,K-1\}$.)
\end{lemma}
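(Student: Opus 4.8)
The plan is to reduce everything to \lemmaref{lemma:S-has-unique-coloring}, whose hypotheses are identical to (i)--(ii) here and which already tells us that every ``heavy'' vertex carries an essentially pure color. Concretely, I would fix a target color $j$ and write the probability of measuring it in the Fourier-transformed color register as
\[
p_{j}\big(\ket{\BPSI^{(1)}}\big) = \sum_{v=0}^{N-1} \Abs{\alpha_{v}^{(1)}}^{2} \cdot \Abs{\tfrac{1}{\sqrt{K}} \sum_{j'=0}^{K-1} \beta_{v,j'}^{(1)} e^{2\pi\sqrt{-1}j'j/K}}^{2} ,
\]
so that, writing $c_{v,j}$ for the inner color overlap $\bra{j} F_{K} \ket{\beta_{v}^{(1)}}$ (with $\ket{\beta_{v}^{(1)}} = \sum_{j'} \beta_{v,j'}^{(1)} \ket{j'}$), it suffices to lower-bound $\Abs{c_{v,j}}^{2}$ on a set of vertices carrying most of the vertex-amplitude weight.

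First I would restrict the sum to $v \in S_{\frac{1}{8N}}\big(\ket{\BPSI^{(1)}}\big)$. Every vertex outside this set has $\Abs{\alpha_{v}^{(1)}}^{2} < \frac{1}{8N}$, and there are at most $N$ vertices in total, so the discarded weight is at most $N \cdot \frac{1}{8N} = \frac{1}{8}$; hence $\sum_{v \in S_{1/(8N)}} \Abs{\alpha_{v}^{(1)}}^{2} \geq \frac{7}{8}$.

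Then the heart of the argument is a ``Fourier of a near-basis state is near-uniform'' estimate. For each $v \in S_{\frac{1}{8N}}\big(\ket{\BPSI^{(1)}}\big)$, \lemmaref{lemma:S-has-unique-coloring} supplies a unique color $j(v)$ with $\Abs{\beta_{v,j(v)}^{(1)}}^{2} \geq \frac{100K-1}{100K}$, so $\ket{\beta_{v}^{(1)}}$ is close to the basis state $\ket{j(v)}$. Since $F_{K}$ sends every basis state to the uniform superposition, $\Abs{\bra{j}F_{K}\ket{j(v)}}^{2} = \frac{1}{K}$ exactly; I would then isolate this dominant contribution, whose magnitude is $\frac{1}{\sqrt{K}}\Abs{\beta_{v,j(v)}^{(1)}} \geq \frac{1}{\sqrt{K}}\big(1 - \frac{1}{100K}\big)$, and bound the off-color tail by the triangle inequality followed by Cauchy--Schwarz: $\sum_{j' \neq j(v)} \Abs{\beta_{v,j'}^{(1)}} \leq \sqrt{(K-1)\cdot\frac{1}{100K}} < \frac{1}{10}$, so the tail adds at most $\frac{1}{10\sqrt{K}}$ to $\Abs{c_{v,j}}$. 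This yields $\Abs{c_{v,j}} \geq \frac{1}{\sqrt{K}}\big(\frac{99}{100} - \frac{1}{10}\big) \geq \frac{0.89}{\sqrt{K}}$ and hence $\Abs{c_{v,j}}^{2} \geq \frac{1}{2K}$, uniformly over all target colors $j$ and all heavy vertices $v$.

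Finally I would assemble the pieces: $p_{j}\big(\ket{\BPSI^{(1)}}\big) \geq \sum_{v \in S_{1/(8N)}} \Abs{\alpha_{v}^{(1)}}^{2} \cdot \frac{1}{2K} \geq \frac{7}{8} \cdot \frac{1}{2K} = \frac{7}{16K} \geq \frac{1}{4K}$, and the identical argument applied to $\ket{\BPSI^{(2)}}$ (again invoking \lemmaref{lemma:S-has-unique-coloring} for the second state) gives its bound. I expect the only delicate step to be the tail estimate in the third paragraph: one must verify that the small off-color mass permitted by \lemmaref{lemma:S-has-unique-coloring} cannot depress any single Fourier coefficient below $\frac{1}{2K}$. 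Everything else is bookkeeping, and the comfortable slack (we reach $\frac{7}{16K}$ rather than merely $\frac{1}{4K}$) confirms that the constants inherited from \lemmaref{lemma:S-has-unique-coloring} are more than strong enough.
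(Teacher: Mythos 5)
Your proposal is correct and follows essentially the same route as the paper's proof: both restrict to the heavy vertices $S_{\frac{1}{8N}}\big(\ket{\BPSI^{(1)}}\big)$ (carrying weight at least $\frac{7}{8}$), invoke \lemmaref{lemma:S-has-unique-coloring} to get a dominant color per heavy vertex, and lower-bound each Fourier coefficient by the reverse triangle inequality plus Cauchy--Schwarz on the off-color tail. Your constants are marginally lossier at the squaring step ($\frac{1}{2K}$ conditional mass versus the paper's $\frac{4}{5K}$), but the final bound $\frac{7}{16K} \geq \frac{1}{4K}$ holds all the same.
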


\begin{proof}
Suppose that the vertex register of the graph coloring state $\ket{\BPSI^{(1)}}$ is measured and that the outcome is some vertex $v \in \{0,\ldots,N-1\}$. If $v \in S_{\frac{1}{8N}}\big(\ket{\BPSI^{(1)}}\big)$, from \lemmaref{lemma:S-has-unique-coloring} we deduce that there exists a (unique) color $\tilde{j} \in \{0,\ldots,K-1\}$ such that $\abs{\beta_{v,\tilde{j}}^{(1)}}^{2} \geq \frac{100K-1}{100K}$; in particular, we also deduce that $\sum_{j \neq \tilde{j}} \abs{\beta_{v,j}^{(1)}}^{2} < \frac{1}{100K}$. Therefore, (conditioned on getting outcome $v$ in the vertex register) the probability of measuring $j$ in the color register of $(I_{N} \otimes F_{K})\ket{\BPSI^{(1)}}$ is
\begin{align*}
      & \dfrac{1}{K} \Abs{\sum_{j=1}^{K-1} \beta_{v,j}^{(1)}e^{\frac{2 \pi \sqrt{-1} jv}{K}}}^{2} \\
\geq\,& \dfrac{1}{K} \Abs{\Abs{\beta_{v,\tilde{j}}^{(1)}e^{\frac{2 \pi \sqrt{-1} \tilde{j}v}{K}}} - \Abs{\sum_{j \neq \tilde{j}} \beta_{v,j}^{(1)}e^{\frac{2 \pi \sqrt{-1} jv}{K}}}}^{2} \\
\geq\,& \dfrac{1}{K} \Abs{\Abs{\beta_{v,\tilde{j}}^{(1)}e^{\frac{2 \pi \sqrt{-1} \tilde{j}v}{K}}} - \sqrt{K\sum_{j \neq \tilde{j}} \Abs{\beta_{v,j}^{(1)}e^{\frac{2 \pi \sqrt{-1} jv}{K}}}^{2}}}^{2} \hspace{1cm}\text{(by Cauchy--Schwarz)} \\
=   \,& \dfrac{1}{K} \Abs{\Abs{\beta_{v,\tilde{j}}^{(1)}} - \sqrt{K\sum_{j \neq \tilde{j}} \Abs{\beta_{v,j}^{(1)}}^{2}}}^{2} \\
\geq\,& \dfrac{1}{K} \Abs{\sqrt{\dfrac{100K-1}{100K}} - \sqrt{K\dfrac{1}{100K}}}^{2} \\
\geq\,& \dfrac{1}{K} \left(1 - \dfrac{1}{100K} + \dfrac{1}{100} - \dfrac{1}{5} \cdot \dfrac{100K-1}{100K}\right) \\
=   \,& \dfrac{4}{5K}
\enspace.
\end{align*}
Now observe that $S_{\frac{1}{8N}}\big(\ket{\BPSI^{(1)}}\big)$ cannot be empty, for otherwise $\sum_{v=0}^{N-1} \abs{\alpha_{v}^{(1)}}^{2} < N \cdot \frac{1}{8N} < 1$. Hence, there is at least one vertex $\tilde{v}$ in $S_{\frac{1}{8N}}(\ket{\BPSI^{(1)}})$. Thus, the probability of measuring $j$ (with no conditioning) in the color register of $(I_{N} \otimes F_{K})\ket{\BPSI^{(1)}}$ is
\begin{align*}
p_{j}\big(\ket{\BPSI^{(1)}}\big) =
      & \sum_{v=0}^{N-1} \Abs{\alpha_{v}^{(1)}}^{2} \dfrac{1}{K} \Abs{\sum_{j=0}^{K-1} \beta_{v,j}^{(1)}e^{\frac{2 \pi \sqrt{-1} jv}{K}}}^{2} \\
\geq\,& \sum_{v \in S_{\frac{1}{8N}}(\ket{\BPSI^{(1)}})} \Abs{\alpha_{v}^{(1)}}^{2} \dfrac{1}{K} \Abs{\sum_{j=0}^{K-1} \beta_{v,j}^{(1)}e^{\frac{2 \pi \sqrt{-1} jv}{K}}}^{2} \\
\geq\,& \dfrac{4}{5K} \sum_{v \in S_{\frac{1}{8N}}(\ket{\BPSI^{(1)}})} \Abs{\alpha_{v}^{(1)}}^{2} \\
\geq\,& \dfrac{4}{5K} \left( 1 - (N-1) \cdot \dfrac{1}{8N} \right) \\
\geq\,& \dfrac{4}{5K} \cdot \dfrac{7}{8} \\
\geq\,& \dfrac{1}{4K}
\enspace,
\end{align*}
as desired. An analogous argument holds for $\ket{\BPSI^{(2)}}$.
\end{proof}

Next we prove that, under the same assumptions of \lemmaref{lemma:S-has-unique-coloring} and \lemmaref{lemma:get-0-color-often}, if we further require that the uniform test does not reject with high probability, then we can be sure that all the vertices have a somewhat large amplitude.

\begin{lemma}[modified {\cite[Lemma 3.7]{BT09}}]
\label{lemma:all-vertices}
Fix any $\TCSP$ instance $\CTRS$. Define
\begin{equation*}
\delta = \dfrac{1}{2 \cdot 1600^{2}K^{4}N^{2}} \enspace\text{ and }\enspace
\mu    = \dfrac{1}{1600^{2}K^{4}N^{2}} \enspace\text{ and }\enspace
\nu    = \dfrac{1}{64KN^{2}}
\enspace.
\end{equation*}
Suppose that:
\begin{itemize}
  \item[(i)] $\REJ\big(\SWAPTEST\big(\ket{\BPSI^{(1)}},\ket{\BPSI^{(2)}}\big)\big) \leq \delta$, 
  \item[(ii)] $\REJ\big(\COLCONSTEST_{\CTRS}\big(\ket{\BPSI^{(1)}},\ket{\BPSI^{(2)}}\big)\big) \leq \mu$, and
  \item[(iii)] $\REJ\big(\UNITEST\big(\ket{\BPSI^{(1)}}\big)\big) \leq \nu$.
\end{itemize}
Then, $V(G)=S_{\frac{1}{8KN}}\big(\ket{\BPSI^{(1)}}\big)$, that is, for all $v\in\{0,\ldots,N-1\}$, $\abs{\alpha_{v}^{(1)}}^2\ge \frac{1}{8KN}$. (And, similarly, $V(G)=S_{\frac{1}{8KN}}\big(\ket{\BPSI^{(2)}}\big)$ under the alternative assumption $\REJ\big(\UNITEST\big(\ket{\BPSI^{(2)}}\big)\big) \leq \nu$ instead.)
\end{lemma}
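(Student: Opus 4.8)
The plan is to chain the two generic lemmas of \sectionref{sec:graph-states} together with the already-established \lemmaref{lemma:get-0-color-often}, passing from a lower bound on the probability of measuring color~$0$, through near-uniformity of the state conditioned on that measurement, back to a uniform lower bound on every vertex amplitude. The only input I need from \lemmaref{lemma:get-0-color-often} is its $j=0$ instance, which under assumptions (i) and (ii) gives $p_{0}\big(\ket{\BPSI^{(1)}}\big) \geq \frac{1}{4K}$.

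First I would rewrite the rejection probability of the uniformity test. Because $F_{K}$ acts only on the color register and $F_{N}$ only on the vertex register, applying $F_{N}$ does not change the color marginal, so the probability of measuring color~$0$ in $(F_{N}\otimes F_{K})\ket{\BPSI^{(1)}}$ is still $p_{0}\big(\ket{\BPSI^{(1)}}\big)$, and conditioned on that outcome the vertex register is exactly $F_{N}\ket{\gamma(0)}$. Since the test rejects precisely on the branch ``color~$0$, vertex~$\neq 0$'', this gives
\[
\REJ\big(\UNITEST\big(\ket{\BPSI^{(1)}}\big)\big) = p_{0}\big(\ket{\BPSI^{(1)}}\big)\cdot\Big(1 - \Abs{\bra{0}F_{N}\ket{\gamma(0)}}^{2}\Big).
\]
Feeding in assumption (iii), namely $\REJ\big(\UNITEST(\ket{\BPSI^{(1)}})\big)\leq \nu = \frac{1}{64KN^{2}}$, together with $p_{0}\geq\frac{1}{4K}$, yields $1 - \Abs{\bra{0}F_{N}\ket{\gamma(0)}}^{2}\leq 4K\nu = \frac{1}{16N^{2}}$.

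Next I would apply \lemmaref{lemma:vertex-and-Fourier} to $\ket{\gamma}=\ket{\gamma(0)}$ with $v=0$; its upper bound forces $\frac{1}{4}\big(\sum_{w}\Abs{\Abs{\gamma_{w}(0)}^{2}-\frac{1}{N}}\big)^{2}\leq\frac{1}{16N^{2}}$, hence $\sum_{w}\Abs{\Abs{\gamma_{w}(0)}^{2}-\frac{1}{N}}\leq\frac{1}{2N}$. Since a single summand is at most the whole sum, every vertex satisfies $\Abs{\gamma_{v}(0)}^{2}\geq\frac{1}{N}-\frac{1}{2N}=\frac{1}{2N}$. Finally I would invoke the inequality $p_{0}\big(\ket{\BPSI^{(1)}}\big)\cdot\Abs{\gamma_{v}(0)}^{2}\leq\Abs{\alpha_{v}^{(1)}}^{2}$ established inside the proof of \lemmaref{lemma:gen-uniformity} (equivalently, the contrapositive of that lemma with $c_{1}=4K$ and $c_{2}=8K$): combined with $p_{0}\geq\frac{1}{4K}$ and $\Abs{\gamma_{v}(0)}^{2}\geq\frac{1}{2N}$ this gives $\Abs{\alpha_{v}^{(1)}}^{2}\geq\frac{1}{4K}\cdot\frac{1}{2N}=\frac{1}{8KN}$, so $v\in S_{\frac{1}{8KN}}\big(\ket{\BPSI^{(1)}}\big)$ for every $v$, which is the claim; the argument for $\ket{\BPSI^{(2)}}$ is identical.

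The main obstacle is the very first manipulation: correctly seeing that the uniformity test factors as $p_{0}\cdot(1-\Abs{\bra{0}F_{N}\ket{\gamma(0)}}^{2})$, which relies on the commutation of the two Fourier transforms and on restricting attention to the color-$0$ branch. Once that identity is in hand, the remainder is a bookkeeping check that the constants $\mu,\delta,\nu$ were reverse-engineered so that $4K\nu=\frac{1}{16N^{2}}$ and $\frac{c_{1}}{c_{2}N}=\frac{1}{2N}$ align exactly with what \lemmaref{lemma:vertex-and-Fourier} and \lemmaref{lemma:gen-uniformity} demand.
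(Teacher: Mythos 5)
Your proposal is correct and follows essentially the same route as the paper: it chains \lemmaref{lemma:get-0-color-often} (for $p_{0}\geq\frac{1}{4K}$), \lemmaref{lemma:vertex-and-Fourier}, and the inequality $p_{j}\cdot\Abs{\gamma_{v}(j)}^{2}\leq\Abs{\alpha_{v}}^{2}$ from \lemmaref{lemma:gen-uniformity}, with the same constants throughout. The only difference is presentational: the paper argues by contradiction (a vertex with $\abs{\alpha_{v}^{(1)}}^{2}<\frac{1}{8KN}$ would force $\REJ(\UNITEST)>\nu$), whereas you run the identical computation in the direct, contrapositive direction and make explicit the factorization $\REJ(\UNITEST)=p_{0}\cdot\big(1-\Abs{\bra{0}F_{N}\ket{\gamma(0)}}^{2}\big)$ that the paper leaves implicit.
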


\begin{proof}
Recall that:
\begin{itemize}
  \item $p_{j}\big(\ket{\BPSI^{(1)}}\big)$ is the probability of measuring $j$ in the color register of $(I_{N} \otimes F_{K})\ket{\BPSI^{(1)}}$, and
  \item $\ket{\gamma(j)^{(1)}} = \sum_{v=0}^{N-1} \gamma_{v}(j)^{(1)} \ket{v}$ is the reduced quantum state obtained when we measure $j$ in the color register of $(I_{N} \otimes F_{K})\ket{\BPSI^{(1)}}$.
\end{itemize}
By invoking \lemmaref{lemma:vertex-and-Fourier} with $\ket{\gamma}=\ket{\gamma(j)^{(1)}}$, the probability of measuring $v$ in the vertex register of $F_{N}\ket{\gamma(j)^{(1)}}$ is at most
\begin{equation*}
1 - \dfrac{1}{4} \left( \sum_{v=0}^{N-1} \Abs{ \Abs{\gamma_{v}(j)^{(1)}}^{2} - \dfrac{1}{N} } \right)^{2} \enspace.
\end{equation*}
Also, by \lemmaref{lemma:get-0-color-often}, $p_{j}\big(\ket{\BPSI^{(1)}}\big) \geq \frac{1}{4K}$ for every $j \in \{0,\ldots,K-1\}$.

Suppose now by way of contradiction that there exists some vertex $\tilde{v} \in R_{\frac{1}{8KN}}\big(\ket{\BPSI^{(1)}}\big)$, so that $\abs{\alpha_{v}^{(1)}}^{2} < \frac{1}{8KN}$. We can now invoke \lemmaref{lemma:gen-uniformity} with $c_{1}=4K$ and $c_{2}=8K$ to get that $\abs{\gamma_{\tilde{v}}(j)^{(1)}}^{2} < \frac{1}{2N}$. Therefore,
\begin{align*}
     \sum_{v=0}^{N-1} \Abs{ \Abs{\gamma_{v}(j)^{(1)}}^{2} - \dfrac{1}{N} }
\geq \Abs{ \Abs{\gamma_{\tilde{v}}(j)^{(1)}}^{2} - \dfrac{1}{N} }
>    \dfrac{1}{2N}
\enspace,
\end{align*}
and we obtain that the probability of measuring $v$ in the vertex register of $F_{N}\ket{\gamma(j)^{(1)}}$ is less than $1-\frac{1}{16N^{2}}$. Thus, the probability of measuring $j$ in the color register but not measuring $v$ in the vertex register of $(F_{N} \otimes F_{K})\ket{\BPSI^{(1)}}$ is greater than
\begin{equation*}
\dfrac{1}{4K} \cdot \dfrac{1}{16N^{2}} = \dfrac{1}{64KN^{2}} = \nu \enspace.
\end{equation*}
Taking $j=0$ and $v=0$, this contradicts the assumption that $\REJ\big(\UNITEST\big(\ket{\BPSI^{(1)}}\big)\big) \leq \nu$. An analogous argument holds for $\ket{\BPSI^{(2)}}$.
\end{proof}

Finally, we can now lower bound the soundness of the protocol:

\begin{lemma}
\label{lemma:soundness-conclusion}
Define
\begin{equation*}
\delta = \dfrac{1}{2 \cdot 1600^{2}K^{4}N^{2}} \enspace\text{ and }\enspace
\mu    = \dfrac{1}{1600^{2}K^{4}N^{2}} \enspace\text{ and }\enspace
\nu    = \dfrac{1}{64KN^{2}} \enspace\text{ and }\enspace
\xi    = \dfrac{(100K-1)^{2}}{2 \cdot 800^{2}K^{4}N^{2}}
\enspace.
\end{equation*}
and
\begin{equation*}
s = \dfrac{1}{3} \min\left\{ \delta \,,\, \mu \,,\, \nu \,,\, \xi \right\} \enspace.
\end{equation*}
Then the overall probability of rejecting an unsatisfiable graph $G$ is greater than $s$.
\end{lemma}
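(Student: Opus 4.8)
The plan is to prove the contrapositive by contradiction: assume that on an unsatisfiable instance $\CTRS$ the verifier of \algorithmref{alg:BT09} rejects with probability at most $s=\frac13\min\{\delta,\mu,\nu,\xi\}$, and derive a contradiction. Since $r\in\{1,2,3\}$ is uniform, the overall rejection probability is
\[
\tfrac13\Big(\REJ\big(\SWAPTEST(\ket{\BPSI^{(1)}},\ket{\BPSI^{(2)}})\big)+\REJ\big(\CONSTEST_{\CTRS}(\ket{\BPSI^{(1)}},\ket{\BPSI^{(2)}})\big)+\REJ\big(\UNITEST(\ket{\BPSI^{(1)}})\wedge\UNITEST(\ket{\BPSI^{(2)}})\big)\Big).
\]
As the three summands are nonnegative, the assumption forces each of them to be at most $\min\{\delta,\mu,\nu,\xi\}$, hence at most each of $\delta,\mu,\nu,\xi$ separately. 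I would then propagate these bounds to the relevant subtests: because $\CONSTEST$ rejects whenever either its color- or edge-consistency subtest does, $\REJ(\COLCONSTEST_{\CTRS})\le\mu$ and $\REJ(\EDGECONSTEST_{\CTRS})\le\xi$; because the third test rejects whenever either uniformity test does, $\REJ(\UNITEST(\ket{\BPSI^{(i)}}))\le\nu$ for $i=1,2$; and $\REJ(\SWAPTEST)\le\delta$.

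With exactly these hypotheses in hand, the second step is to invoke the already-proved structural lemmas, whose parameter choices were tuned to match. Hypotheses (i)–(ii) feed \lemmaref{lemma:S-has-unique-coloring}, giving each heavy vertex $v\in S_{\frac{1}{8N}}(\ket{\BPSI^{(i)}})$ a unique dominant color with $\abs{\beta^{(i)}_{v,j}}^{2}\ge\frac{100K-1}{100K}$, and \lemmaref{lemma:get-0-color-often}, giving $p_{j}(\ket{\BPSI^{(i)}})\ge\frac{1}{4K}$ for all $j$. Feeding these, together with hypothesis (iii), into \lemmaref{lemma:all-vertices} then certifies that every vertex of $G$ is heavy, i.e.\ $V=S_{\frac{1}{8KN}}(\ket{\BPSI^{(i)}})$ for $i=1,2$. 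Moreover the dominant colors carried by the two proofs must agree wherever both are heavy, for otherwise the color-consistency subtest would reject with probability exceeding $\mu$, exactly as in the proof of \lemmaref{lemma:S-has-unique-coloring}. Thus both proofs jointly encode a single, essentially total coloring $C\colon V\to\Sigma$ of the graph.

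The third step exploits unsatisfiability: since no coloring satisfies $\CTRS$, there is an edge $e=(\tilde v,\tilde w)\in E$ with $R_{e}(C(\tilde v),C(\tilde w))=0$. The single measurement outcome in which the first proof yields $(\tilde v,C(\tilde v))$ and the second yields $(\tilde w,C(\tilde w))$ already contributes
\[
\abs{\alpha^{(1)}_{\tilde v}}^{2}\,\abs{\beta^{(1)}_{\tilde v,C(\tilde v)}}^{2}\cdot\abs{\alpha^{(2)}_{\tilde w}}^{2}\,\abs{\beta^{(2)}_{\tilde w,C(\tilde w)}}^{2}\;\ge\;\left(\frac{1}{8N}\cdot\frac{100K-1}{100K}\right)^{2}\;>\;\xi
\]
to $\REJ(\EDGECONSTEST_{\CTRS})$, contradicting $\REJ(\EDGECONSTEST_{\CTRS})\le\xi$ and thereby forcing the rejection probability above $s$. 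The honest-but-unsatisfiable configuration of \remarkref{remark:BT09-soundness-tight} shows this $\xi$-bound is essentially saturated, confirming tightness.

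I expect the main obstacle to be reconciling the two amplitude thresholds. \lemmaref{lemma:all-vertices} only guarantees membership in the larger set $S_{\frac{1}{8KN}}$, whereas the dominant-color guarantee of \lemmaref{lemma:S-has-unique-coloring} (and the clean product bound displayed above) is stated for the smaller set $S_{\frac{1}{8N}}$; for $K\ge 2$ this gap is genuine, since with the fixed value of $\delta$ the color-consistency test cannot by itself rule out two large colors on a merely medium-amplitude vertex. The delicate part is therefore to argue that the violated edge $e$ can be taken with both endpoints carrying dominant colors, transferring the coloring between the two registers via the swap-test closeness (hypothesis (i)) and \lemmaref{lemma:swap-test-and-measurements}. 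Here I would lean on the fact that $\xi$ carries a factor of roughly $2K^{2}$ of slack over the naive both-endpoints-heavy estimate — precisely the room needed to absorb the $S_{\frac{1}{8KN}}$ versus $S_{\frac{1}{8N}}$ discrepancy (a factor $K$ per endpoint) — so that the edge-detection probability stays above $\xi$ even when the endpoints are only guaranteed in $S_{\frac{1}{8KN}}$. Making this bookkeeping rigorous, rather than the routine arithmetic verifying the displayed inequality, is where the care is required.
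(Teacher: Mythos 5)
Your overall route is exactly the paper's: bound each subtest's rejection probability by $\min\{\delta,\mu,\nu,\xi\}$, invoke \lemmaref{lemma:S-has-unique-coloring}, \lemmaref{lemma:get-0-color-often} and \lemmaref{lemma:all-vertices}, define $C(v):=\arg\max_{j}\abs{\beta^{(1)}_{v,j}}^{2}$, and lower-bound the edge-consistency rejection at a violated edge; your cross-register color-agreement step is correct (and is in fact glossed over in the paper). The obstacle you flag in your last paragraph is also genuine --- the paper's own write-up silently pairs the amplitude bound $\abs{\alpha_{v}}^{2}\geq\frac{1}{8KN}$ from \lemmaref{lemma:all-vertices} with the dominance bound $\frac{100K-1}{100K}$ that \lemmaref{lemma:S-has-unique-coloring} only establishes on $S_{\frac{1}{8N}}$ --- but your proposed repair does not close it. The $2K^{2}$ slack in $\xi$ absorbs only the amplitude deficit (one factor of $K$ per endpoint after squaring); on a vertex in $S_{\frac{1}{8KN}}\setminus S_{\frac{1}{8N}}$ you simultaneously lose the dominance factor, and no swap-test transfer restores it: re-running the proof of \lemmaref{lemma:S-has-unique-coloring} at threshold $\frac{1}{8KN}$ fails outright, since there $\frac{1}{800K^{3}N}-\sqrt{2\delta}=\frac{2-K}{1600K^{3}N}\leq 0$ for $K\geq 2$, so the only generic guarantee on such a vertex is $\abs{\beta^{(1)}_{v,C(v)}}^{2}\geq\frac{1}{K}$ (the maximum exceeds the mean). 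With that, a violated edge yields only about $\big(\frac{1}{8K^{2}N}\big)^{2}=\frac{1}{64K^{4}N^{2}}\approx 2\xi/K^{2}$, which is \emph{below} $\xi$ for every $K\geq 2$; the contradiction you aim at, $\REJ\big(\EDGECONSTEST_{\CTRS}\big)>\xi$, is therefore unreachable by this bookkeeping.

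The fix is already contained in your own first step: you derived $\REJ\big(\EDGECONSTEST_{\CTRS}\big(\ket{\BPSI^{(1)}},\ket{\BPSI^{(2)}}\big)\big)\leq\min\{\delta,\mu,\nu,\xi\}\leq\delta$, so it suffices to beat $\delta$, not $\xi$. Using $\abs{\alpha^{(1)}_{v}}^{2}\geq\frac{1}{8KN}$ for \emph{all} $v$ (\lemmaref{lemma:all-vertices}), the argmax bound $\abs{\beta^{(1)}_{v,C(v)}}^{2}\geq\frac{1}{K}$, and the atomwise transfer $\abs{\alpha^{(2)}_{w}\beta^{(2)}_{w,C(w)}}^{2}\geq\abs{\alpha^{(1)}_{w}\beta^{(1)}_{w,C(w)}}^{2}-\sqrt{2\delta}$ from \lemmaref{lemma:swap-test-and-measurements} (exactly as in the proof of \lemmaref{lemma:S-has-unique-coloring}), a single violated edge contributes at least $\frac{1}{8K^{2}N}\big(\frac{1}{8K^{2}N}-\frac{1}{1600K^{2}N}\big)\geq\frac{1}{65K^{4}N^{2}}>\delta$, the desired contradiction; equivalently, the overall rejection probability is at least $\frac{1}{3}\cdot\frac{1}{65K^{4}N^{2}}>\frac{\delta}{3}\geq s$. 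With this substitution your argument closes, and it incidentally shows that the paper's displayed intermediate bound $\frac{(100K-1)^{2}}{800^{2}K^{4}N^{2}}$ is not justified as written for $K\geq 2$, while the lemma's conclusion survives because $\delta$, not $\xi$, is the binding term in the minimum defining $s$.
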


\begin{proof}
If any of
\begin{itemize}
  \item[(i)] $\REJ\big(\SWAPTEST\big(\ket{\BPSI^{(1)}},\ket{\BPSI^{(2)}}\big)\big) \leq \delta$, 
  \item[(ii)] $\REJ\big(\COLCONSTEST_{\CTRS}\big(\ket{\BPSI^{(1)}},\ket{\BPSI^{(2)}}\big)\big) \leq \mu$, and
  \item[(iii)] $\REJ\big(\UNITEST\big(\ket{\BPSI^{(1)}}\big)\big) \leq \nu$ and $\REJ\big(\UNITEST\big(\ket{\BPSI^{(2)}}\big)\big) \leq \nu$
\end{itemize}
does not hold, then we are done. So suppose that (i)--(iii) hold. Define a coloring $C \colon V(G) \to \Sigma$ of the graph $G$ by the rule
\begin{equation*}
C(v) := \arg\max_{j \in \{0,\ldots,K-1\}} \Abs{\beta_{v,j}^{(1)}}^{2} \enspace,
\end{equation*}
for every vertex $v$. By \lemmaref{lemma:S-has-unique-coloring}, the coloring $C$ is well-defined (i.e., is unique).

Let $U(G) \subseteq E(G)$ be the set of unsatisfied edges in $G$ by the coloring $C$. Since $G$ is unsatisfiable, $\abs{U(G)} \geq 1$. Therefore, $\REJ\big(\EDGECONSTEST_{\CTRS}\big(\ket{\BPSI^{(1)}},\ket{\BPSI^{(2)}}\big)\big)$, which is the probability that the edge-consistency subtest rejects $\ket{\BPSI^{(1)}}$ and $\ket{\BPSI^{(2)}}$, is
\begin{align*}
    \,& \sum_{(v,w) \in U(G)} \Abs{\alpha_{v}^{(1)}\beta_{v,C(v)}^{(1)}}^{2} \cdot \Abs{\alpha_{w}^{(2)}\beta_{w,C(w)}^{(2)}}^{2} \\
\geq\,& \sum_{(v,w) \in U(G)} \left( \dfrac{1}{8KN} \cdot \dfrac{100K-1}{100K} \right) \cdot \left( \dfrac{1}{8KN} \cdot \dfrac{100K-1}{100K} \right) \\
=   \,& \Abs{U(G)} \cdot \dfrac{(100K-1)^{2}}{800^{2}K^{4}N^{2}} \\
\geq\,& 1 \cdot \dfrac{(100K-1)^{2}}{800^{2}K^{4}N^{2}} \\
>   \,& s
\enspace.
\end{align*}
This concludes the proof of the lemma, as well as the proof of \propositionref{proposition:improved:BT09}.
\end{proof}

\section*{Acknowledgements}
\label{sec:acks}

The authors would like to thank Scott Aaronson for his great lectures in quantum complexity theory and his suggestions while working on this note.

\bibliographystyle{alpha}
\bibliography{chiesa-forbes}

\end{document}